\documentclass[11pt]{amsart}
\usepackage{fullpage}
\usepackage[foot]{amsaddr}
\usepackage{microtype}
\usepackage[OT1]{fontenc}
\usepackage{eulervm}
\usepackage[tt=false]{libertine} 
\usepackage{bbold}
\usepackage{amsmath}
\usepackage{amssymb}
\usepackage{amsthm}
\usepackage{thmtools} 
\usepackage[linesnumbered,boxed,ruled,vlined]{algorithm2e}
\usepackage{algpseudocode}
\usepackage{enumitem}
\usepackage{multirow}
\usepackage{bm}
\usepackage{xifthen}
\usepackage{xspace}
\usepackage{tikz}
\usetikzlibrary{arrows.meta,positioning,calc,fit,backgrounds}

\usepackage[margin=1cm]{caption} 
\usepackage{subfig}

\usepackage[thinlines]{easytable}

\usepackage[bookmarks=true,hypertexnames=false,pagebackref]{hyperref}
\hypersetup{colorlinks=true, citecolor=blue, linkcolor=red, urlcolor=blue}

\usepackage{pgfplots}
\pgfplotsset{compat=1.16}
\usepackage{tikz}
\usetikzlibrary{arrows,arrows.meta,backgrounds,calc,fit,decorations.pathreplacing,decorations.markings,shapes.geometric}

\tikzstyle{internal} = [draw, fill, shape=circle]
\tikzstyle{external} = [shape=circle]
\tikzstyle{square}   = [draw, fill, rectangle]
\tikzstyle{triangle} = [draw, fill, regular polygon, regular polygon sides=3, inner sep=3pt]
\tikzstyle{pentagon} = [draw, fill, regular polygon, regular polygon sides=5, inner sep=2pt, minimum size=14pt]
\tikzset{every fit/.append style=text badly centered}

\usetikzlibrary{positioning,chains,fit,shapes,calc}
\usetikzlibrary{trees}
\usetikzlibrary{decorations.pathreplacing}
\usetikzlibrary{decorations.pathmorphing}
\usetikzlibrary{decorations.markings}
\tikzset{>=latex} 

\usepackage{ifthen}

\usepackage{cleveref}

\usepackage[textsize=tiny]{todonotes}

\usepackage[normalem]{ulem}

\usepackage{mleftright}

\usepackage{cool}
\Style{DSymb={\mathrm d},DShorten=true,IntegrateDifferentialDSymb=\mathrm{d}}

\renewcommand{\Pr}{\mathop{\mathrm{Pr}}\nolimits}

\def\*#1{\mathbf{#1}}
\def\+#1{\mathcal{#1}}
\def\-#1{\mathrm{#1}}
\def\=#1{\mathbb{#1}}
\def\^#1{\mathbb{#1}}

\newcommand{\norm}[1]{\ensuremath{\left\lVert #1\right\rVert}}
\newcommand{\abs}[1]{\ensuremath{\left\vert#1\right\vert}}

\newcommand{\eps}{\varepsilon}

\newtheorem{theorem}{Theorem}

\newtheorem{lemma}[theorem]{Lemma}

\newtheorem{proposition}[theorem]{Proposition}

\theoremstyle{definition}

\newtheorem{definition}[theorem]{Definition}

\theoremstyle{remark}
\newtheorem{remark}[theorem]{Remark}

\crefname{theorem}{Theorem}{Theorems}
\crefname{observation}{Observation}{Observations}
\crefname{claim}{Claim}{Claims}
\crefname{condition}{Condition}{Conditions}
\crefname{algorithm}{Algorithm}{Algorithms}
\crefname{property}{Property}{Properties}
\crefname{example}{Example}{Examples}
\crefname{fact}{Fact}{Facts}
\crefname{lemma}{Lemma}{Lemmas}
\crefname{corollary}{Corollary}{Corollaries}
\crefname{definition}{Definition}{Definitions}
\crefname{remark}{Remark}{Remarks}
\crefname{proposition}{Proposition}{Propositions}
\crefname{equation}{equation}{equations}
\crefname{enumi}{Case}{Case}
\creflabelformat{enumi}{(#2#1#3)}

\definecolor{HGcolor}{RGB}{255,50,50}

\makeatletter
\def\prob#1#2#3{\goodbreak\begin{list}{}{\labelwidth\z@ \itemindent-\leftmargin
      \itemsep\z@  \topsep6\p@\@plus6\p@
      \let\makelabel\descriptionlabel}
  \item[\textbf{Name}]#1
  \item[\textbf{Instance}]#2
  \item[\textbf{Output}]#3
  \end{list}}
\makeatother

\makeatletter
\providecommand\@dotsep{5}
\def\listtodoname{Todo list}
\def\listoftodos{\@starttoc{tdo}\listtodoname}
\makeatother

\renewcommand{\Tr}{\operatorname{Tr}}
\newcommand{\Sc}{\operatorname{Sc}}

\newcommand{\one}{\mathbf{1}}

\newcommand{\supp}{\operatorname{supp}}

\newcommand{\Tree}{\mathcal{T}}

\usepackage{nicefrac,comment}

\usepackage{silence}
\newboolean{doubleblind}
\setboolean{doubleblind}{false}

\begin{document}

\title{An $m+n^{3/2}$ algorithm for counting spanning trees by $\ell_1$-regularized resistance}

\author{Rong-Hua Li}
\author{Yichun Yang}
\address[Rong-Hua Li]{School of Computer Science, Beijing Institute of Technology, Beijing, China}
\address[Yichun Yang]{School of Computer Science, Beijing Institute of Technology, Beijing, China}

\maketitle

\begin{abstract}
  We study the basic problem of approximating the number of spanning trees of a graph. For a graph with $n$ vertices, $m$ edges, We propose
  an algorithm that approximates the number of spanning trees in $\widetilde O(m+n^{3/2}\eps^{-1})$ time.  Our algorithm improves upon the previously best known $\widetilde O(m+n^{15/8}\eps^{-7/4})$ time algorithm by Chu, Gao, Peng, Sachdeva, Sawlani, and Wang [FOCS 2018] and the $\widetilde O(m^{3/2}\eps^{-1})$ time algorithm by Liu, Peng, and Yang [FOCS 2026]. Notably, our algorithm is based on the novel concept of $\ell_1$-regularized resistance. We propose simple and efficient algorithm for computing $\ell_1$-regularized resistance and we show that they can be used to approximate the number of spanning trees by combining with the determinant sparsifier framework of Durfee, Peebles, Peng, and Rao [FOCS 2017]. Our algorithm matches the best known size of the determinant sparsifiers.
\end{abstract}

\section{Introduction}

Counting the number of spanning trees is a fundamental problem in theoretical computer science and graph theory. For a weighted undirected graph $G=(V,E,w)$ with $n$ vertices and $m$ edges, let
\[
  \Tree(G)=\sum_{T\text{ spanning tree of }G}\prod_{e\in T}w_e
\]
be its total spanning-tree weight. The classical Kirchhoff's matrix-tree theorem \cite{Kirchhoff1847} identifies
$\Tree(G)$ with any cofactor of the graph Laplacian.  Exact evaluation of $\Tree(G)$ therefore
reduces to computing the Laplacian determinant, which takes
$n^{\omega}$ arithmetic operations, where $\omega\approx 2.371$ is the
matrix-multiplication exponent~\cite{Almanetal2025}. Since the exact computation of the Laplacian determinant is expensive, the approximate counting problem is studied in previous researches.
Durfee, Peebles, Peng, and Rao firstly introduced determinant-preserving
sparsification and a recursive Schur-complement algorithm for this problem
\cite{DPPR}. They proposed a $\widetilde O(n^2\eps^{-2})$\footnote{The notation $\widetilde O$ hides polylogarithmic factors in $n,m,1/\eps$.} time algorithm for $\eps$-approximating the number of spanning trees. The same complexity can also be derived by combining the optimal sampling algorithm of spanning trees by Anari, Liu and Vuong \cite{ALV2022} with the classical simulated annealing \cite{SVV09}.

However, the question is whether we can approximate the number of spanning trees in subquadratic time. Following the determinant sparsifier framework \cite{DPPR}, the most expensive step is to estimate the effective resistance for all edges with high accuracy to construct the determinant sparsifier. Chu, Gao, Peng, Sachdeva, Sawlani, and Wang used short-cycle
decompositions to accelerate the resistance computation and obtained
$m^{1+o(1)}+n^{15/8+o(1)}\eps^{-7/4}$ runtime for approximate counting spanning trees
\cite{CGPSSW}. Recently, Liu, Peng, and Yang proposed a different
$\widetilde O(m^{3/2}\eps^{-1})$ algorithm based on identifying uncorrelated edge sets
\cite{LPY}. On expander graphs, Li and Sachdeva achieved
$\widetilde O(m+n^{3/2}\eps^{-1})$ runtime via random walk--based
local resistance representations~\cite{LiSachdeva}.
Their approach relies on constant spectral expansion
so that random walks mix in $\widetilde O(1)$ steps, which cannot be extended to general graphs. Our result reaches the same asymptotic bound on general graphs. Notably, the $n^{3/2}$ term is already optimal in the determinant sparsifier framework \cite{DPPR}. In addition, our algorithm goes beyond the $\widetilde O(m+n^{7/4})$ barrier predicted by \cite{LPY} under the possibly optimal $\widetilde O(m/\eps)$ time effective resistance estimation.

\subsection{Main result}

The following theorem states our main result.

\begin{theorem}\label{thm:main}
Let $G$ be a connected undirected graph with $n$ vertices, $m$ edges, and positive
edge weights satisfying $w_{\max}/w_{\min}\le n^{O(1)}$.  For
$n^{-O(1)}\le\eps\le1/2$, there is a randomized algorithm that returns
$\widehat{\Tree}(G)$ such that $(1-\eps)\Tree(G)\le\widehat{\Tree}(G)
\le(1+\eps)\Tree(G)$ with probability $2/3$. The expected running time is
\[
 \widetilde O\!\left(
   m+n^{3/2}\eps^{-1}\right).
\]
\end{theorem}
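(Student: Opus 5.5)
We follow the determinant-sparsifier framework of \cite{DPPR}. Start with a spectral sparsifier: in $\widetilde O(m)$ time, replace $G$ by a graph $H$ with $\widetilde O(n)$ edges and $\Tree(H)=(1\pm\eps/2)\Tree(G)$ up to a known normalization. One way to do this is to pass through a constant-factor spectral sparsifier and a sampling step at rate $\widetilde O(n\eps^{-2})$ edges. Where needed, we precondition with a Laplacian solver. From then on $m$ is replaced by $\widetilde O(n)$, and the task is to produce an unbiased-enough estimator of $\Tree(H)$ in $\widetilde O(n^{3/2}\eps^{-1})$ time.

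The core step is the sampling lemma of \cite{DPPR}. Sample $s$ edges with probabilities proportional to leverage-score estimates $\tilde\tau_e=w_e\tilde R_e$, reweight them, and correct by the factor $\exp$ of the discrepancy term. The estimator then concentrates with variance about $n^2/s$ plus the squared error of the resistance estimates summed against the sampling distribution. With $s\approx n^{3/2}\eps^{-1}$ edges, the target is a sparsifier of size $\widetilde O(n^{3/2}\eps^{-1})$, which matches the claim that the $n^{3/2}$ term is optimal in this framework. The remaining requirements are the following.
\begin{enumerate}
\item The resistance estimates must be accurate enough that $\sum_e (w_e\tilde R_e-w_eR_e)^2/(w_eR_e)$, or the analogous correction term, is $O(\eps^2)$ relative to $n$. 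In the analysis this becomes $\sum_e |w_e\tilde R_e - w_eR_e| \lesssim \eps\sqrt{n}$-type bounds.
\item The determinant of the sparsified graph must be evaluable recursively, using Schur complements onto half of the vertices, with errors summing geometrically over $O(\log n)$ levels.
\end{enumerate}

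Here I would introduce the $\ell_1$-regularized resistance. Instead of estimating each $R_e$ to multiplicative accuracy, which costs $m/\eps^2$ or needs the expander structure used in \cite{LiSachdeva}, we estimate the vector $(w_eR_e)_e$ with error controlled in $\ell_1$ over the edges. The error budget in (1) is only an aggregate, so an estimator with $\ell_1$ error $\eps\sqrt n$ total suffices. Such an estimator can be built with Johnson--Lindenstrauss sketches of $L^{+}B^\top W^{1/2}$ using $k$ Laplacian solves: each $w_e\tilde R_e$ has additive error about $w_eR_e/\sqrt k$, so the summed $\ell_1$ error is about $n/\sqrt k$. The regularization step adds to each estimate a correction, for example a damping term $\lambda$ or a clip to $[0,1]$ followed by renormalization to $\sum\tilde\tau_e=n-1$, so that it stays bounded below by the true value up to the allowed slack and the sampling probabilities never underweight an edge. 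This is what keeps the variance from blowing up. Balancing the costs, with $k$ solves of cost $\widetilde O(n)$ each against a sample size $s$, and choosing the per-level accuracy in the recursion adaptively, yields total time $\widetilde O(n^{3/2}\eps^{-1})$.

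The main obstacle is step (1): proving that $\ell_1$-accurate, regularized estimates give the same variance bound as multiplicatively accurate ones. I would first expand the log-determinant estimator to second order, as in \cite{DPPR}, and isolate the term $\sum_e \tau_e^2/\tilde\tau_e - \tau_e$. Then I would show that this term is bounded by $\|\tau-\tilde\tau\|_1$, using the lower bound $\tilde\tau_e\ge\tau_e/2$ that the regularization enforces; that lower bound is exactly where the regularization is needed.

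Finally, apply a Chebyshev/median argument over $O(1)$ repetitions to reach success probability $2/3$. The recursion adds only polylogarithmic overhead. Together these give the claimed expected running time of $\widetilde O(m+n^{3/2}\eps^{-1})$.
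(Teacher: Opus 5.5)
\textbf{There is a genuine gap.} Your plan fails at its first step, and the claimed runtime depends on it.

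\textbf{Why the first step fails.} A $(1\pm\delta)$-spectral sparsifier determines $\Tree$ only up to a factor $(1\pm\delta)^{n-1}$. So a sparsifier with $\widetilde O(n\eps^{-2})$ edges controls $\Tree$ only within $e^{\pm\eps n}$, and no ``known normalization'' repairs this. The reweighted sampling of \cite{DPPR} keeps $\mathbb E\,\Tree$ exact, but its second moment carries an $\exp(O(n^3/t^2))$ term. That forces $t\gtrsim n^{3/2}/\eps$ samples, which is the $n^{3/2}$ barrier you cite yourself.

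\textbf{Consequence for your cost accounting.} You may never assume a Laplacian solve costs $\widetilde O(n)$. This fails at the top level, where there are $m$ edges. It also fails deeper in the recursion, where the current graph is a sampled Schur complement with $\gtrsim N^{3/2}$ edges for the same reason, and the needed leverage scores are resistances in that graph. Your JL scheme then needs either $k$ solves on all $M$ edges, or solves on a $\delta$-spectral sparsifier. The latter adds a deterministic bias with Pearson error up to $n\delta^2$. The target runtime and the $n^3/t^2$ term together force $t\approx n^{3/2}/\eps$. The second-moment bound then needs $\Xi\lesssim\eps\sqrt n$, hence $\delta\lesssim\eps^{1/2}n^{-1/4}$ and $k\gtrsim\sqrt n/\eps$. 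The cost is $n\delta^{-2}\cdot k\approx n^2\eps^{-2}$, which is exactly the \cite{DPPR} bound the paper sets out to beat.

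\textbf{Your own bookkeeping also does not close.} You demand $\ell_1$ error $\lesssim\eps\sqrt n$ from a JL estimate whose $\ell_1$ error is $n/\sqrt k$. That forces $k\gtrsim n\eps^{-2}$ even if solves cost only $\widetilde O(n)$. Only the quadratic (Pearson) requirement $\Xi\lesssim\eps\sqrt n$ is compatible with $k\approx\sqrt n/\eps$. Bounding the Pearson term by the $\ell_1$ error is far too lossy.

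\textbf{The missing idea.} You need a resistance estimator whose cost does not depend on the edge count of the current graph. That is what the paper's $\ell_1$-regularization provides. It is not a damping or clipping of leverage-score estimates; that only secures $\tilde\tau_e\gtrsim\tau_e$ and does nothing for running time.
\begin{itemize}
\item The paper regularizes the potential problem $z_s^*=\arg\min_{z\ge0}\{\frac12z^TLz-e_s^Tz+\lambda\one^Tz\}$. Its KKT conditions give $|\supp z_s^*|\le\lambda^{-1}$.
\item Because of this support bound, all $n$ potentials can be computed accurately and locally in $\widetilde O(m+n\lambda^{-2})$ total time. The support is found by an IDLA-type process accelerated with random spanning-tree sampling, then a flow-diffusion solve is run on the candidate set.
\item In the exact decomposition $R_{s,t}=b^Tx+E_{s,t}+S_{s,t}$, the correction terms have aggregate size $O(\lambda n)$. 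So $E_{s,t}$ needs only a $\sqrt\lambda$-accurate sparsifier with $\widetilde O(1/\lambda)$ Gaussian sketches, and $S_{s,t}$ needs only local importance sampling.
\item The result is aggregate Pearson error $O(\lambda^2 n)$, with $\widetilde O(1)$ expected query time under leverage-score sampling. The cheap queries are also what make Schur-path sampling affordable.
\end{itemize}
Plugged into the aggregate-Pearson second-moment bound $\exp(O(n\Xi/t+n^3/t^2))$, this gives total time $\widetilde O(m+n\lambda^{-2}+\lambda^2n^2\eps^{-2}+n^{3/2}\eps^{-1})$ over the recursion. Choosing $\lambda=\eps^{1/2}n^{-1/4}$ balances the terms. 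Your outline of the Schur-complement recursion matches the paper's, but without such an estimator it only recovers $\widetilde O(n^2\eps^{-2})$.
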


Next we explain the proof sketch of \Cref{thm:main} and several key ingredients of our proof.

\subsection{Estimating the resistance.}

We note that the main bottleneck of approximating the number of spanning trees in previous works is the effective resistance computation \cite{DPPR,CGPSSW,LiSachdeva}. The speedup of our algorithm comes from an improved approach to resistance estimation. Unlike previous algorithms, which estimate resistances directly, we introduce the novel notion of $\ell_1$-regularized resistance and obtain an efficient algorithm by estimating these $\ell_1$-regularized resistances. This idea is inspired by recent progress on $\ell_1$-regularized PageRank
\cite{WeiYang,martinez2023accelerated}. We believe that the study of $\ell_1$-regularized resistances will also be of independent interest.


\begin{definition}
    Given graph $G$ with Laplacian $L_G$ and its pseudo-inverse $L_G^\dagger$, let $b_{s,t}=e_s-e_t$, the effective resistance between two vertices $s$ and $t$ is defined as
    \[
        R_{s,t} = b_{s,t}^T L_G^\dagger b_{s,t}
    \]
\end{definition}


\begin{definition}
For a source $s$ and a $\ell_1$-regularized parameter $\lambda\in(1/n,1/2)$, consider the following optimization problem:
\[
 z_s^*=\arg\min_{z\ge0}
 \left\{\frac12z^TL_Gz-e_s^Tz+\lambda\one^Tz\right\}
\]
with the residual $q_s^*=e_s-L_Gz_s^*$. We define $z_s^*$ as the $\ell_1$-regularized potential, and 
\[
R_{s,t}^\lambda = b_{s,t}^T (z_s^*-z_t^*)
\]
 as the $\ell_1$-regularized resistance between $s$ and $t$.
\end{definition}

The Karush--Kuhn--Tucker (KKT) conditions imply
\[
 0\le q_s^*\le\lambda\one,\qquad \one^Tq_s^*=1,
 \qquad \abs{\supp z_s^*}\le\lambda^{-1}.
\]
The support bound depends only on the parameter $\lambda$. We exploit this locality to
compute the $\ell_1$-regularized resistance to high accuracy.



\begin{theorem}\label{thm:l1-resistance-error-bound}
    Given a graph $G$, accuracy parameter $\xi$, there is a randomized algorithm that takes $\widetilde O(m+n\lambda^{-2})$ preprocessing time to create a data structure of space $\widetilde O(n\lambda^{-1})$, from which one can query the $\ell_1$-regularized resistance $\hat{R}^\lambda_{u,v}$ for any vertex pair $(u,v)$ in $O(1)$ time such that $|\hat{R}^\lambda_{u,v}-R^\lambda_{u,v}|\le \xi$ with probability $1-n^{-O(1)}$.
\end{theorem}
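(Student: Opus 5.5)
Since $R^\lambda_{u,v}=b_{u,v}^T(z_u^*-z_v^*)=z_u^*(u)-z_u^*(v)-z_v^*(u)+z_v^*(v)$, it suffices to precompute, for every source $s$, an approximation $\hat z_s$ to $z_s^*$ that is supported on $O(\lambda^{-1})$ vertices and is accurate to within $\xi/4$ in every coordinate. We store each $\hat z_s$ as a hash table keyed by vertex, with a default value of $0$ off the support. The total space is then $\widetilde O(n\lambda^{-1})$. A query for $(u,v)$ reads four entries in $O(1)$ time and returns $\hat R^\lambda_{u,v}$ with error at most $\xi$, by the triangle inequality.

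To compute $\hat z_s$ we run a local push, or proximal coordinate descent, procedure for the $\ell_1$-regularized objective, following the analysis of $\ell_1$-regularized PageRank \cite{WeiYang,martinez2023accelerated}. We start from $z=0$ with residual $q=e_s$. While some vertex $u$ has $q(u)>\lambda(1+\delta)$, we raise $z(u)$ by $(q(u)-\lambda)/d_u$. This zeroes out the excess at $u$ and spreads it to the neighbours of $u$. Monotonicity arguments in the style of KKT give $z\le z_s^*$ coordinatewise throughout, so the iterates stay inside $\supp z_s^*$, and at most $\lambda^{-1}$ vertices are ever touched. Each push lowers the potential $\one^Tq$ by at least $\lambda\delta$. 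Hence there are $O((\lambda\delta)^{-1})$ pushes, and with volume bookkeeping the cost per source is $\widetilde O(\lambda^{-1}\cdot\mathrm{poly}(1/\delta))$ up to degree terms. To avoid any dependence on degree, we first replace $G$ by a spectral sparsifier with $\widetilde O(n)$ edges, which costs $\widetilde O(m)$. Alternatively, we build an $\widetilde O(1)$-regular-ish expansion gadget and use the randomized sampling of neighbours that gives the "randomized" qualifier and the $1-n^{-O(1)}$ probability. The total over all $n$ sources then fits the budget $\widetilde O(m+n\lambda^{-2})$ once the per-source cost is $\widetilde O(\lambda^{-2})$.

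\textbf{Main obstacle.} The hard step is converting the residual guarantee into a pointwise bound $|\hat z_s(x)-z_s^*(x)|\le\xi/4$ with only polylogarithmic overhead in $1/\xi$. For this I would restrict to the support $S=\supp z_s^*$, where $z_s^*$ solves the Dirichlet-type system $(L_G)_{SS}z=e_s-\lambda\one_S$ exactly. The error $z_s^*-\hat z_s$ then solves $(L_G)_{SS}\Delta=q-\lambda\one$ restricted to $S$, with right-hand side at most $\lambda\delta$ per coordinate. Bounding $\|(L_G)_{SS}^{-1}\|_{\infty\to\infty}$ through escape-time or Green's-function estimates, using $|S|\le\lambda^{-1}$ and the polynomial weight ratio, gives an error of $\lambda\delta\cdot n^{O(1)}$. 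So we choose $\delta=\xi\, n^{-O(1)}$. This forces the analysis to use the accelerated or linearly convergent version of the method, so that the dependence on $\delta$ is $\log(1/\delta)$ rather than polynomial. Concretely, this means running a localized Laplacian solver on the active set $S$ once the active set stabilizes, with $O(\log(n/\xi))$ refinement rounds, each costing $\widetilde O(\lambda^{-2})$ because the edges inside $S$ in the sparsified graph number at most $\widetilde O(\lambda^{-2})$. Making this localized linear convergence rigorous, including showing that the active set is identified correctly after $\widetilde O(\lambda^{-1})$ pushes, is the part I expect to require the most care.
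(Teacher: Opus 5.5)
\textbf{Where your proposal matches the paper.} Your reduction is the same as the paper's. You store each sparse $\hat z_s$ in a hash table and answer $(u,v)$ with four lookups, so per-coordinate accuracy $\xi/4$ suffices. Your monotonicity claim is also correct. The push is projected Gauss--Seidel started from the subsolution $0$, so $z\le z_s^*$ throughout and every pushed vertex lies in $\supp z_s^*$.

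\textbf{The gap: computing $\hat z_s$ in $\widetilde O(\lambda^{-2})$ time.}
\begin{enumerate}
\item The step ``each push lowers $\one^Tq$ by at least $\lambda\delta$'' is false. For every $z$ you have $\one^Tq=\one^T(e_s-Lz)=1$, because a Laplacian push only redistributes mass to neighbours. The PageRank analyses you cite rely on teleportation, which dissipates mass and supplies strong convexity; $\Phi_s$ has neither.
\item The valid accounting is that a push at $u$ raises $d_uz(u)$ by more than $\lambda\delta$ while $z\le z_s^*$. This bounds the number of pushes by $\sum_u d_uz_s^*(u)/(\lambda\delta)$. On an unweighted path, $\sum_u d_uz_s^*(u)=\Theta(\lambda^{-2})$, so the bound is $\Theta(\lambda^{-3}\delta^{-1})$. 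Your own error analysis needs $\delta=\xi n^{-O(1)}$, so this is far over budget.
\item The fallback does not close the gap. Pushing from below never certifies that the support has been found. A linear solve on a set strictly smaller or larger than $\supp z_s^*$ returns the wrong vector. Active-set refinement can add only one vertex per round (it does so on a path), so $O(\log(n/\xi))$ rounds is unsupported.
\item Replacing $G$ by a spectral sparsifier changes the objective: you would compute the obstacle solution for $H$. A $(1\pm\eps_H)$ spectral bound gives no $\xi$-pointwise control of $z_s^*$ or of its support. It also does not bound degrees, since a star is its own sparsifier.
\end{enumerate}

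\textbf{The missing idea.} Do not try to identify the exact support. Instead, find a \emph{superset} of $\supp z_s^*$ of size $O(\lambda^{-1})$ in $\widetilde O(\lambda^{-2})$ time, then solve the \emph{constrained} problem on it.
\begin{enumerate}
\item The paper (\Cref{lem:idla-support-superset}) releases $\lceil 2r/\lambda\rceil$ particles at $s$. Each settles at the first vertex holding fewer than $r=O(\log n)$ particles. The fully occupied set $F$ has $|F|\le 2\lambda^{-1}+1$ and contains $\supp z_s^*$ with high probability.
\item The containment proof tests the KKT system on $A=\supp z_s^*$ against the hitting probabilities $h$ of each $v\in A$. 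This gives $h_s>\lambda\sum_{u\in A}h_u$. Particles are then coupled with ghost walks and Chernoff bounds are applied.
\item Simulating these walks directly would cost $\Theta(\lambda^{-3})$ on a path. Instead, each first exit from the current set $C$ is read off a random spanning tree of the wired graph $W_C$: by Wilson's algorithm, the last edge of the $s$-to-$g$ path is the exit edge.
\item Using the strongly Rayleigh reduction of \cite{ALV2022} with leverage-score bounds from a resistance sketch, each exit costs $\widetilde O(|C|)$. This gives $\widetilde O(\lambda^{-2})$ per source.
\item Finally, $V\setminus F$ is collapsed into one ground vertex, which removes degree dependence without any sparsifier. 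The obstacle problem on $H_F$ is solved by \Cref{thm:cpw-flow-diffusion}, whose cost depends only logarithmically on accuracy. The objective gap is converted to an $\ell_\infty$ bound, and \Cref{lem:positive-residual-rounding} restores $\supp\widetilde z_s\subseteq\supp z_s^*$.
\end{enumerate}
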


We then estimate effective resistance using $\ell_1$-regularized
resistance. Although $R_{s,t}^\lambda$ need not approximate $R_{s,t}$
pointwise, we prove that the aggregate error guarantee holds. The following resistance sampling theorem is achieved by combining the estimation of $\ell_1$-regularized resistance and some refined estimation of the residuals.

\begin{theorem}\label{thm:l1-resistance-sampler}
Let $G=(V,E,w)$ satisfy the input conventions of \Cref{thm:main}, and
let $1/n<\lambda\le1/4$. In $\widetilde O(m+n\lambda^{-2})$ expected
preprocessing time, one can construct a randomized resistance-query
data structure. With probability $1-n^{-O(1)}$, for every pair $s\ne t$, a finite query distribution returning
$\widehat R_{s,t}$ such that
\[
 \frac{R_{s,t}}{16}\le\widehat R_{s,t}\le8R_{s,t},\qquad
 \sum_{e=(s,t)\in E}w_e\mathbb E_{\rm{query}}
 \frac{(\widehat R_{s,t}-R_{s,t})^2}{\widehat R_{s,t}}
 =O(\lambda^2n).
\]
The expectation is over query randomness. A query takes at most
$\widetilde O(\lambda^{-1})$ time, and its mean time is
$\widetilde O(1)$ when the edge is chosen with probability proportional
to (constant factor approximation of) $w_eR_e$. 
\end{theorem}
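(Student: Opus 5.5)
The plan rests on an exact identity relating $R_{s,t}$ to $R^\lambda_{s,t}$. By the definition of the residual, $L_G z_s^*=e_s-q_s^*$. Hence $L_G(z_s^*-z_t^*)=b_{s,t}-(q_s^*-q_t^*)$. Applying $b_{s,t}^TL_G^\dagger$ (constant vectors are killed by $b_{s,t}$) gives
\[
 R_{s,t}=R^\lambda_{s,t}+b_{s,t}^TL_G^\dagger(q_s^*-q_t^*).
\]
Both $q_s^*$ and $q_t^*$ are probability vectors by the KKT conditions. So the correction term equals $\mathbb E_{u\sim q_s^*,\,v\sim q_t^*}\bigl[b_{s,t}^TL_G^\dagger b_{u,v}\bigr]$. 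By the polarization identity, $b_{s,t}^TL_G^\dagger b_{u,v}=\tfrac12(R_{s,v}+R_{t,u}-R_{s,u}-R_{t,v})$.

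\emph{Query.} The query distribution will be: output $R^\lambda_{s,t}$, queried via \Cref{thm:l1-resistance-error-bound} with accuracy $\xi=\lambda^2$, plus an unbiased one-sample estimate of the correction. For that estimate we draw $u\sim q_s^*$ and $v\sim q_t^*$ from the stored residuals. Each residual is supported on $O(\lambda^{-1})$ explicitly stored coordinates, so sampling costs $\widetilde O(1)$ after an alias table. We then evaluate the cross term using $\ell_1$-regularized potentials, i.e.\ $b_{s,t}^T(z_u^*-z_v^*)$, with one level of recursion whose own residual contribution is treated as second-order error.

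\emph{Sandwich bound.} To obtain $R_{s,t}/16\le\widehat R_{s,t}\le8R_{s,t}$ deterministically, we separately build a Johnson--Lindenstrauss resistance sketch in $\widetilde O(m)$ time, giving $\tilde R_{s,t}\in[R_{s,t}/2,2R_{s,t}]$. We then clamp the estimator into $[\tilde R/8,4\tilde R]$. Clamping toward an interval that contains $R$ can only decrease $|\widehat R-R|$, so it preserves the error bound. It also lets us replace the denominator $\widehat R_{s,t}$ by $\Theta(R_{s,t})$.

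\emph{Running time.} Preprocessing is \Cref{thm:l1-resistance-error-bound} plus the JL sketch, for $\widetilde O(m+n\lambda^{-2})$ total. The worst case $\widetilde O(\lambda^{-1})$ per query comes from support sizes. For the mean time when $e$ is drawn proportionally to $w_eR_e$, I would charge the scanning work to residual mass and use $\sum_e w_eR_e=n-1$.

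\emph{Aggregate error (main obstacle).} The error splits into the additive $\xi$ part and the bias/variance of the correction. Cauchy--Schwarz gives $|b_{s,t}^TL_G^\dagger x|\le\sqrt{R_{s,t}}\sqrt{x^TL_G^\dagger x}$ for $x=q_s^*-q_t^*$. So it suffices to show
\[
 \sum_{(s,t)\in E}w_e\,(q_s^*-q_t^*)^TL_G^\dagger(q_s^*-q_t^*)=O(\lambda^2n),
\]
together with the analogous bound for the recursive remainder. Here I would use that $q_s^*\le\lambda\one$, so $q_s^*$ is a ``spread-out'' distribution. I would also use complementary slackness: $q_s^*=\lambda$ exactly on $\supp z_s^*$. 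The intended step is to write $q_s^*=e_s-L_Gz_s^*$ and expand the energy, giving $x^TL_G^\dagger x=R^\lambda$-type terms involving $z^T L_G z$. Then sum over edges using $\sum_e w_eb_eb_e^T=L_G$, so that the edge sum telescopes into $\sum_s \|q_s^*\|_2^2\cdot O(1)\le n\lambda$, followed by one more factor of $\lambda$ from the constraint $q\le\lambda\one$. I expect extracting the full $\lambda^2$ (rather than $\lambda$) to be the delicate part. If it fails, I would fall back to sampling $k=O(1)$ independent pairs $(u,v)$ and using the variance, which is again controlled by $\sum_u q_s^*(u)R_{s,u}$-type quantities.
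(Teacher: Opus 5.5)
\textbf{Verdict: there is a genuine gap.} Your identity $R_{s,t}=R^\lambda_{s,t}+b_{s,t}^TL_G^\dagger(q_s^*-q_t^*)$ is correct. It is the paper's decomposition $R_{s,t}=b^Tx+E_{s,t}+S_{s,t}$ in another form, since $b^TL^\dagger r=r^TL^\dagger r+x^Tr$ for $r=b-Lx$. Your clamping step is also the paper's. The problem is the estimator for the correction term.

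\textbf{The one-sample estimator has variance of order $R_{s,t}^2$.} Drawing $u\sim q_s^*$ and $v\sim q_t^*$ independently throws away the cancellation in $q_s^*-q_t^*$. One sample of $b_{s,t}^TL_G^\dagger b_{u,v}=\varphi(u)-\varphi(v)$, with $\varphi=L_G^\dagger b_{s,t}$, fluctuates on the scale of $R_{s,t}$, not on the scale of the correction. Take the unit-weight path:
\begin{itemize}
\item $q_s^*\approx\lambda\one$ on an interval of about $1/\lambda$ vertices centred at $s$.
\item For the edge $(s,s+1)$ the voltage $\varphi$ is a unit step across that edge, so $\varphi(u)-\varphi(v)\in\{-1,0,1\}$ with constant probabilities.
\item Your surrogate $b_{s,t}^Tz_u^*$ is spread roughly uniformly over $[-\tfrac12,\tfrac12]$ as $u$ ranges over the support.
\end{itemize}
Every edge (each has $w_eR_e=1$) therefore contributes $\Theta(1)$ to $\sum_ew_e\mathbb E(\widehat R_e-R_e)^2/R_e$, giving $\Theta(n)$ in total. $O(1)$ repetitions do not change this. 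Reaching $O(\lambda^2)$ per edge would need $\Theta(\lambda^{-2})$ repetitions, which exceeds even the $\widetilde O(\lambda^{-1})$ worst-case query budget. The same example shows the one-level remainder is not second order: it is $\approx\lambda$, as large as the correction itself.

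\textbf{The Cauchy--Schwarz reduction targets the wrong quantity, and the target is false.} That step bounds only the size of the correction, i.e.\ the error of \emph{dropping} it, not the error of estimating it. The bound you aim for, $\sum_ew_e\|q_s^*-q_t^*\|^2_{L_G^\dagger}=O(\lambda^2n)$, fails on the path. There $q_s^*-q_{s+1}^*\approx\lambda(e_{s-a}-e_{s+a+1})$ with $2a\approx1/\lambda$, which gives $\approx\lambda$ per edge and $\Theta(\lambda n)$ overall. If the bound held, returning $R^\lambda_{s,t}$ alone would already suffice.

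\textbf{Where the paper gets the second factor of $\lambda$.} The paper only proves $\sum_ew_eE_{s,t}\le\lambda n$, using $E_{s,t}\le R_{s,t}-R^\lambda_{s,t}$ and the trace identity. The extra $\lambda$ comes from how the correction is estimated:
\begin{itemize}
\item $E_{s,t}$ is estimated with Gaussian projections through a $c\sqrt\lambda$-spectral sparsifier, using global solves. Each sample is the square of $\phi_j^T(b-Hx)$, which is linear in the residual difference, so it fluctuates at scale $O(\sqrt{E_{s,t}+\lambda R_{s,t}})$.
\item $S_{s,t}$ is estimated by importance sampling in which every sample carries a residual factor of size $O(\lambda)$.
\item The number of samples is $d_{s,t}\propto\widetilde B_{s,t}/(\lambda R^G_{s,t})$, so each edge contributes $O(\lambda B_{s,t})$.
\item The budget $\sum_ew_eB_{s,t}=O(\lambda n)$, obtained from the maximum principle and the logarithmic bound on $\sum_ew_e\|\Psi_s-\Psi_t\|_2^2$, keeps the leverage-weighted mean of $d_{s,t}$ at $\widetilde O(1)$.
\end{itemize}
Your plan has none of these: no variance bound of order $B_{s,t}^2$, no adaptive sample counts, and no budget lemma.

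\textbf{Two smaller issues.}
\begin{itemize}
\item $q_s^*$ is supported on $\supp z_s^*$ together with its entire neighbourhood, which need not have $O(\lambda^{-1})$ vertices. The paper handles this boundary with heavy/light sums.
\item Additive accuracy $\xi=\lambda^2$ is too coarse for edges with $R_e\ll\lambda^2$, whose Pearson weight is $w_e/R_e$. The paper uses a polynomially small accuracy $\eta$.
\end{itemize}
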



\subsection{Constructing the determinant sparsifiers}

We combine the resistance estimator of \Cref{thm:l1-resistance-sampler} with the determinant sparsidier framework by Durfee et. al \cite{DPPR}. The original result of \cite{DPPR} needs to compute the effective resistance between many pairs of vertices under high precision. We strenghen the variance analysis of \cite{DPPR} and show that the aggregate error resistance estimator by \Cref{thm:l1-resistance-sampler} is already enough to construct the determinant sparsifier and approximate the number of spanning trees.

\begin{definition}
    Given a graph $G=(V,E,w)$, a $\eps$-determinant sparsifier of $G$ is a (sparse) graph $H$ such that $(1-\eps)\Tree_G\le \Tree_H\le (1+\eps)\Tree_G$.
\end{definition}

\begin{theorem}[Determinant sparsification]\label{thm:determinant-sparsifier}
Let $G=(V,E,w)$ satisfy the input and accuracy conventions of
\Cref{thm:main}, and let $1/n<\lambda<1/2$. An $\eps$-determinant
sparsifier can be constructed with probability at least $2/3$ in
expected time
\[
 \widetilde O(m+n\lambda^{-2}+\lambda^2n^2\eps^{-2}
                                      +n^{3/2}\eps^{-1}).
\]
\end{theorem}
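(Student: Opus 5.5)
The plan is to follow the DPPR sampling scheme and replace exact leverage scores by the estimator of \Cref{thm:l1-resistance-sampler}. First, run the preprocessing of \Cref{thm:l1-resistance-sampler} in $\widetilde O(m+n\lambda^{-2})$ time. This gives, for every edge, a constant-factor estimate of $w_eR_e$, accurate to within the factors $1/16$ and $8$. Using these estimates we sample $s=\Theta(n^{3/2}\eps^{-1})$ edges i.i.d.\ with probability $p_e\propto w_e\widehat R_e$; this can be done with an alias table in $\widetilde O(m+s)$ time. For each sample we draw a fresh query $\widehat R_e$ and reweight the edge by $w_e/(s\,p_e)$, but with $p_e$ defined through the queried $\widehat R_e$ in the exact form DPPR use.

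Recall the DPPR argument: the sampled graph $H$ satisfies $\mathbb E[\Tree_H]$ equal to $\Tree_G$ times $\exp(-\Theta(n^2/s))$ plus smaller corrections, and $\mathbb E[\Tree_H^2]/\mathbb E[\Tree_H]^2\le\exp(O(n^2/s^2\cdot\ldots))$. They also include a correction factor $\exp(\text{something depending on } \sum_e w_e^2R_e^2/p_e)$. We would redo the moment calculation with the random reweighting. Write $\Tree_H/\Tree_G$ as a sum over sampled $(n-1)$-subsets of products of $w_e/(sp_e)$. Its expectation over the sampling and the query randomness involves $\mathbb E[w_eR_e/(sp_e)]$, where the normalization $\sum_e w_e\widehat R_e$ matches $n-1$ up to the aggregate error.

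The key inequality to establish is the following. Each $\widehat R_e$ enters as $R_e/\widehat R_e=1-(\widehat R_e-R_e)/\widehat R_e$, so a second-order expansion shows that the bias of the first and second moments is controlled by
\[
\sum_e w_e\,\mathbb E\frac{(\widehat R_e-R_e)^2}{\widehat R_e},
\]
which is $O(\lambda^2 n)$ by \Cref{thm:l1-resistance-sampler}. First-order terms must cancel in expectation or be absorbed; to make them cancel we would normalize by an unbiased estimate of $\sum_e w_e\widehat R_e$. The resulting multiplicative error in $\log \Tree$ is a random quantity of order $s\cdot\lambda^2 n/s\cdot$(variance factor). Concretely, the variance of $\log(\Tree_H/\Tree_G)$ gains an additive $O(\lambda^2 n\cdot n/s)$-type term, and requiring this to be $\eps^2$ gives the $\lambda^2 n^2\eps^{-2}$ contribution in the time bound. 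The bias part is estimated and divided out, as in DPPR. We then apply Chebyshev to $\Tree_H$ conditioned on the query outcomes to get success probability $2/3$.

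Finally we account for the cost. The expected query cost is $\widetilde O(1)$ per sample because edges are chosen proportionally to $w_eR_e$, so the $s$ samples cost $\widetilde O(n^{3/2}\eps^{-1})$. The extra samples needed to dominate the aggregate error contribute the $\lambda^2n^2\eps^{-2}$ term.

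The main obstacle is the second-moment computation. The queried $\widehat R_e$ appear in denominators of products over subsets, and correlations between edges sharing a tree have to be bounded using only the aggregate, not pointwise, guarantee. To handle this we would use the per-edge constant-factor bound $R/16\le\widehat R\le 8R$ to Taylor-expand safely, and rely on the independence of fresh queries across samples.
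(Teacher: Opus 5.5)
There is a genuine gap where the query randomness meets the sampling probabilities. You draw edges from an alias table built from one set of values (a sketch, or one realization of the $\widehat R_e$), and then reweight each sampled edge using a \emph{fresh} query $\widehat R_e(\omega)$. The DPPR estimator is unbiased only if each sampled copy is weighted by exactly the inverse of the probability with which it was drawn. With your mismatch, a spanning tree $T$ contributes, up to global normalization, $\prod_{e\in T}w_e\,a_e\,\mathbb E_\omega[1/\widehat R_e(\omega)]$.

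Take even the idealized case $a_e=\mathbb E\widehat R_e$. Jensen gives $a_e\mathbb E[1/\widehat R_e]\ge1$, roughly $1+\operatorname{Var}(\widehat R_e)/R_e^2$, and these factors multiply along the tree. So $\mathbb E[\Tree(H)]$ is the tree count of a reweighted graph. It differs from $\Tree(G)$ by a tree-dependent factor that the sampler only controls up to $\exp(O(\lambda^2 n))$, which is $\exp(O(\eps\sqrt n))$ at the balancing choice of $\lambda$. This bias does not shrink with more samples, and it is not a scalar that can be ``estimated and divided out.'' So the Taylor-expansion plan cannot give a $1\pm\eps$ guarantee.

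One alternative is to query every edge once and reuse that realization for both sampling and weighting. This restores unbiasedness, but \Cref{thm:l1-resistance-sampler} only bounds the cost by $\widetilde O(m\lambda^{-1})$. The $\widetilde O(1)$ mean query time holds only for edges drawn proportionally to $w_eR_e$.

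The missing idea is the paper's virtual-parallel-edge device. Split $e$ into copies $(e,\omega)$ of weight $w_e\nu_e(\omega)$, one per query outcome. Draw the pair with probability exactly $a_{e,\omega}/A$, where $a_{e,\omega}=w_e\nu_e(\omega)\widehat R_e(\omega)$: propose $e\propto w_ec_e$ from the resistance sketch, generate a fresh $\omega$, and accept with probability $\widehat R_e(\omega)/(32c_e)$. This does three things:
\begin{enumerate}
\item[(i)] the multigraph has the same Laplacian;
\item[(ii)] the weight $(A/\widehat R_e(\omega))(1/(t)_S)^{1/S}$ makes the estimator exactly unbiased by \Cref{lem:aggregate-determinant-sampling};
\item[(iii)] the query-averaged guarantee becomes a deterministic Pearson bound $\Xi(r\Vert\ell)=O(\lambda^2n)$ via \Cref{lem:renormalization-pearson}.
\end{enumerate}
This, not a bias correction, is what produces the $S\Xi/t$ term and hence the $\lambda^2n^2\eps^{-2}$ cost.

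The second gap is the normalizer $A=\sum_e w_e\mathbb E_{\mathrm{query}}\widehat R_e$. It appears in every weight, so it enters $\Tree(H)$ as $(\overline A/A)^{S}$ with $S=n-1$. An unbiased estimate of $A$ is not enough. For $Z=(\overline A/A)^S$ you need both $\mathbb E Z$ and $\mathbb EZ^2/(\mathbb EZ)^2$ within $e^{\pm O(\eps^2)}$. That means additive error $O(\eps)$ on a quantity of size $\Theta(n)$. Naive Monte Carlo over edges has per-sample variance up to $\Theta(n^2)$ and would need about $n^2\eps^{-2}$ samples.

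\Cref{lem:original-normalizer} avoids this. It computes the bulk of $A$ exactly through the trace identity $\sum_ew_eb_e^T(\widetilde z_s-\widetilde z_t)=n-\sum_s\widetilde q_s(s)$. It then estimates by Monte Carlo only the corrections $C_{\mathrm{raw}}$ and $C_{\mathrm{clip}}$, whose per-sample variance is $\widetilde O(\lambda^2n^2)$. That is how the normalizer fits in $\widetilde O(\lambda^2n^2\eps^{-2}+n\eps^{-1})$ time.
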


By extending the determinant sparsifier to the schur complement, we can approximate the number of spanning trees in the same asymptotic time as constructing the determinant sparsifier. Balancing the runtime by setting $\lambda=\eps^{1/2} n^{-1/4}$ we obtain the $\widetilde O(m+n^{3/2}\eps^{-1})$ time algorithm for approximate counting spanning trees, which proves \Cref{thm:main}.

\subsection{Some additional remarks.}
In an early version of this paper, the authors propose a high level idea: can we use $\ell_1$ regularization to localize the potential and therefore speed up the resistance computation and spanning tree counting? Working interactively with
GPT 5.6 Sol, this idea led to an $\widetilde O(m+n^{7/4})$ algorithm. The key part of the algorithm is to locally compute the regularized potential and guarantee the aggregate error of resistance estimation for spanning tree counting.

About one week later, the authors realized that the computation of
$\ell_1$-regularized resistances could be improved further for some steps. After interaction with GPT~6~Astra, this yielded an $\widetilde O(m+n^{3/2})$
algorithm.  The improved algorithm rests on two ingredients:
(i) using the optimal random spanning tree sampling to reduce the cost of computing
$\ell_1$-regularized resistances from $O(\lambda^{-3})$ dependence to $O(\lambda^{-2})$ dependence; and
(ii) using refined error estimates, which improve the aggregate Pearson error from
$O(\lambda n)$ to $O(\lambda^2 n)$.
Together, these improvements give the $\widetilde O(m+n^{3/2})$ time algorithm, which is
optimal within the determinant sparsification framework.  An open problem is
whether $n^{3/2}$ is also optimal for approximate spanning tree counting
in general.  The authors are responsible for independently checking and rewriting all the 
proofs.

\section{Preliminaries}\label{sec:prelim}

We work with a connected undirected weighted graph $G=(V,E,w)$, where
$n=|V|$, $m=|E|$, and all edge weights are positive.  Parallel edges may be
aggregated and self-loops are discarded.  For an edge
$e=(u,v)$, let $b_e=e_u-e_v$.  We write the Laplacian matrix $L=L_G=\sum_{e\in E}w_eb_eb_e^T$, the effective resistance $R_{u,v}=b_{u,v}^TL^\dagger b_{u,v}$.
Let $d_v=\sum_{(u,v)\in E}w_{uv}$, $d_{\max}=\max_vd_v$, and let
$w_{\min}$ be the minimum nonzero edge weight.  The leverage score of $e$ is
$\ell_e=w_eR_{u,v}$.  We record two classical identities.

\begin{lemma}[Foster's theorem \cite{Foster1949}]
\label{lem:foster}
For every edge $e\in E$,
$0<\ell_e\le1$, and
\[
 \sum_{e\in E}\ell_e=n-1.
\]
\end{lemma}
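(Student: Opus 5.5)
We prove Foster's theorem, \Cref{lem:foster}, by writing leverage scores as diagonal entries of a projection matrix and taking its trace.

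\emph{Step 1: a projection.} Fix an orientation of each edge. Let $B\in\mathbb R^{m\times n}$ have rows $b_e^T$, and let $W=\mathrm{diag}(w_e)$, so that $L=B^TWB$. Define
\[
\Pi=W^{1/2}BL^\dagger B^TW^{1/2}.
\]
Its diagonal entries are $\Pi_{ee}=w_e\,b_e^TL^\dagger b_e=w_eR_{u,v}=\ell_e$.

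\emph{Step 2: $\Pi$ is an orthogonal projection.} The matrix $\Pi$ is symmetric. Using $L^\dagger L L^\dagger=L^\dagger$,
\[
\Pi^2=W^{1/2}BL^\dagger(B^TWB)L^\dagger B^TW^{1/2}=W^{1/2}BL^\dagger LL^\dagger B^TW^{1/2}=\Pi .
\]
So $\Pi$ is the orthogonal projection onto its range. That range is $\operatorname{im}(W^{1/2}B)$: we have $\Pi W^{1/2}B=W^{1/2}BL^\dagger L=W^{1/2}B$, because $LL^\dagger$ is the projection onto $\operatorname{im}L=\mathbf 1^\perp$ and $B\mathbf 1=0$.

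\emph{Step 3: the trace identity.} Since $G$ is connected, $\ker B=\operatorname{span}(\mathbf 1)$. Hence $\operatorname{rank}(W^{1/2}B)=n-1$. Because $\Pi$ is a projection, its trace equals its rank, giving
\[
\sum_e\ell_e=\operatorname{tr}\Pi=n-1.
\]

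\emph{Step 4: the pointwise bounds.} Since $\Pi$ is an orthogonal projection, $0\le\Pi\le I$ in the PSD order. Therefore $\ell_e=\Pi_{ee}=\|\Pi\mathbf e_e\|^2\le1$, where $\mathbf e_e$ is the standard basis vector of edge $e$.

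For strict positivity, note that $b_e$ is orthogonal to $\mathbf 1$, so $b_e\in\operatorname{im}L$. The pseudo-inverse $L^\dagger$ is positive definite on $\operatorname{im}L$, and $b_e\ne0$, so $b_e^TL^\dagger b_e>0$. With $w_e>0$ this gives $\ell_e>0$.

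The only point requiring care is the range and rank computation in Steps 2 and 3, which is exactly where connectivity of $G$ enters.
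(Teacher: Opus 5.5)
Your proof is correct and complete. $\Pi$ is symmetric and idempotent with range $\operatorname{im}(W^{1/2}B)$ of dimension $n-1$, so its trace is $n-1$. Also $\ell_e=\Pi_{ee}=\|\Pi\mathbf e_e\|^2\in(0,1]$, where $b_e\neq 0$ follows from the paper's convention that self-loops are discarded.

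The paper gives no proof of \Cref{lem:foster}. It records it as a classical identity with a citation, so there is no argument to match against.

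The other standard route goes through the matrix-tree theorem. By \Cref{lem:kirchhoff} and the matrix determinant lemma, $\ell_e=w_eR_e$ is exactly the probability that $e$ belongs to a weighted random spanning tree. Then $0<\ell_e\le1$ is immediate, and $\sum_e\ell_e$ is the expected number of edges of a spanning tree, i.e.\ $n-1$. This probabilistic reading is what the paper actually uses later: the bound $\sum_{e\in E(C)}u_e\le c\,\mathbb E_T|T\cap E(C)|$ in \Cref{lem:wired-tree-sampler}, and $\mu(F\subseteq T)\le\prod_{e\in F}\ell_e$ in \Cref{lem:aggregate-determinant-sampling}. Your projection argument is instead self-contained linear algebra. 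The two views meet because your $\Pi$ is the transfer-current matrix, whose principal minors are the joint inclusion probabilities $\mu(F\subseteq T)$.

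A small remark: connectivity is needed only for the rank count $n-1$. The facts $BL^\dagger L=B$ and $b_e\in\operatorname{im}L$ hold for any graph with positive weights, since $\ker L=\ker B$ and $\operatorname{im}L=\operatorname{im}B^T$.
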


\begin{lemma}[Kirchhoff's matrix--tree theorem \cite{Kirchhoff1847}]
\label{lem:kirchhoff}
For every root $r\in V$,
\[
 \Tree(G)=\det L[V\setminus\{r\},V\setminus\{r\}].
\]
\end{lemma}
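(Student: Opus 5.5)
The plan is the classical proof via the Cauchy--Binet formula. Fix an arbitrary orientation of each edge and let $B\in\mathbb R^{E\times V}$ be the signed incidence matrix whose row for $e=(u,v)$ is $b_e^T=(e_u-e_v)^T$. Let $W=\operatorname{diag}(w_e)$. Then $L=B^TWB=\sum_e w_eb_eb_e^T$, as in \Cref{sec:prelim}. Let $B_r$ be $B$ with the column of $r$ deleted, so that $B_r$ is an $m\times(n-1)$ matrix. Deleting row and column $r$ of $L$ then gives
\[
L[V\setminus\{r\},V\setminus\{r\}]=B_r^TWB_r=(W^{1/2}B_r)^T(W^{1/2}B_r).
\]

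First, I apply Cauchy--Binet to this product of an $(n-1)\times m$ and an $m\times(n-1)$ matrix. This expresses the determinant as a sum over edge subsets $S\subseteq E$ with $|S|=n-1$:
\[
\det L[V\setminus\{r\},V\setminus\{r\}]=\sum_{|S|=n-1}\det\bigl(B_r[S]\bigr)^2\prod_{e\in S}w_e .
\]
Here $B_r[S]$ denotes the square submatrix of $B_r$ with rows indexed by $S$. If $m<n-1$ the sum is empty, and since $G$ is connected this case does not arise anyway. The theorem then reduces to one combinatorial claim: $\det B_r[S]\in\{0,\pm1\}$, and it is nonzero exactly when $S$ is the edge set of a spanning tree. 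Given the claim, each spanning tree $T$ contributes $\prod_{e\in T}w_e$ and every other subset contributes $0$, which yields $\Tree(G)$.

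\textbf{Case 1: $S$ contains a cycle.} Traverse the cycle and take the rows $b_e^T$ of its edges with signs $\pm1$ according to whether the traversal agrees with the chosen orientation. The signed sum vanishes in every coordinate, in particular after deleting coordinate $r$. So the rows of $B_r[S]$ are linearly dependent and the determinant is $0$.

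\textbf{Case 2: $S$ is acyclic.} An acyclic set of $n-1$ edges on $n$ vertices is a spanning tree. I induct on $n$; the base case $n=1$ is the empty determinant, equal to $1$. A tree with $n\ge2$ vertices has at least two leaves, so some leaf $v\neq r$ exists. Column $v$ of $B_r[S]$ has exactly one nonzero entry, a $\pm1$ in the row of the unique tree edge at $v$. Laplace expansion along that column gives $\det B_r[S]=\pm\det B'$. Here $B'$ is the reduced incidence matrix, with $r$ deleted, of the tree $S$ minus that edge, viewed on the vertex set $V\setminus\{v\}$. This is again a spanning tree containing $r$, so by induction $\det B'=\pm1$.

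I expect the main care to be needed in this unimodularity claim, specifically in Case 2: the leaf must be chosen different from the deleted root $r$, and one must check that the expansion produces exactly the reduced incidence matrix of the smaller tree. The rest is a direct application of Cauchy--Binet. The choice of orientation is irrelevant because each determinant enters squared.
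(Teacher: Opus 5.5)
Your proof is correct and complete. The Cauchy--Binet expansion of $B_r^TWB_r$ reduces the claim to $\det B_r[S]\in\{0,\pm1\}$, with nonzero value exactly on spanning trees, and you establish this carefully. You choose the leaf $v\neq r$, and the Laplace minor is exactly the reduced incidence matrix of the smaller tree because no remaining edge of $S$ touches $v$.

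The paper gives no proof of this lemma. It quotes it as the classical matrix--tree theorem with a citation to Kirchhoff, so there is no competing argument to compare against; yours is the standard textbook proof.

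One small point in its favor: your argument treats each labelled edge as its own row of $B$. It therefore applies verbatim to multigraphs with parallel edges, and to self-loops, which give zero rows. That is the form in which the paper later uses tree counts, for the virtual copies $(e,\omega)$ and for the labelled path multigraph representing $\Sc(L,T)$.
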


Let $V=F\sqcup T$ be a partition of the vertices such that the principal
submatrix $L_{FF}$ is invertible.  The \emph{Schur complement} of $L$ onto
$T$ is
\(
 \Sc(L,T):=L_{TT}-L_{TF}L_{FF}^{-1}L_{FT}.
\)
Block factorization of the Laplacian gives
\[
 \begin{pmatrix} L_{FF} & L_{FT} \\ L_{TF} & L_{TT} \end{pmatrix}
 =
 \begin{pmatrix} I & 0 \\ L_{TF}L_{FF}^{-1} & I \end{pmatrix}
 \begin{pmatrix} L_{FF} & 0 \\ 0 & \Sc(L,T) \end{pmatrix}
 \begin{pmatrix} I & L_{FF}^{-1}L_{FT} \\ 0 & I \end{pmatrix}.
\]
Taking any cofactor with root $r\in T$ and applying
\Cref{lem:kirchhoff} therefore yields the product identity
\[
 \Tree(G)=\det(L_{FF})\,\Tree(\Sc(L,T)).
\]

Moreover, the Schur complement is also a Laplacian matrix. It has the following property.
\begin{proposition}\label{prop:schur-original-resistance}
  For any $b\in \mathbb R^T$ with $b\perp \one$, we have $b^T\Sc(L,T)^\dagger b=b^T(L^\dagger)_{TT} b$. As a result, the effective resistance on Schur complement is same as original graph: $R^G_{u,v}=R^{\Sc(L,T)}_{u,v}$ for $u,v\in T$.
\end{proposition}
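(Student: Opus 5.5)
The plan is to work directly with the block factorization already recorded above, using the minimization characterization of the Laplacian pseudoinverse on $\one^\perp$. Fix $b\in\mathbb R^T$ with $b\perp\one$ and extend it by zero to $\tilde b=(0,b)\in\mathbb R^V$; then $\tilde b\perp\one_V$, so $\tilde b$ lies in the range of $L$, because $G$ is connected. The first step is to solve $Lx=\tilde b$ blockwise. The first block row gives $L_{FF}x_F+L_{FT}x_T=0$, hence $x_F=-L_{FF}^{-1}L_{FT}x_T$. Substituting this into the second block row gives $\Sc(L,T)\,x_T=b$.

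The second step identifies the quadratic forms. Since $\tilde b$ vanishes on $F$,
\[
\tilde b^TL^\dagger\tilde b=\tilde b^Tx=b^Tx_T .
\]
This holds for any solution $x$ of $Lx=\tilde b$, because solutions differ by multiples of $\one$ and $\tilde b\perp\one$. The left-hand side equals $b^T(L^\dagger)_{TT}b$.

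We also need $b^T\Sc(L,T)^\dagger b=b^Tx_T$. Here we use that $\Sc(L,T)$ is a Laplacian. We must additionally show that its kernel is exactly $\mathrm{span}(\one_T)$, so that $\Sc(L,T)^\dagger b$ solves $\Sc(L,T)y=b$ and any two solutions differ by a multiple of $\one_T$.

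The kernel claim, which I expect to be the only delicate point, goes as follows. The factorization has invertible unitriangular outer factors, so
\[
\mathrm{rank}\,L=\mathrm{rank}\,L_{FF}+\mathrm{rank}\,\Sc(L,T)=|F|+\mathrm{rank}\,\Sc(L,T).
\]
Since $\mathrm{rank}\,L=n-1$, this gives $\mathrm{rank}\,\Sc(L,T)=|T|-1$. Moreover, $\Sc(L,T)\one_T=0$: apply $L$ to $(-L_{FF}^{-1}L_{FT}\one_T,\one_T)$. Because $L\one=0$ and $L_{FF}$ is invertible, this vector must equal $\one$, and reading off the $T$ block gives the claim.

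Consequently $b\in\mathrm{range}\,\Sc(L,T)$, and $\Sc(L,T)^\dagger b=x_T+c\one_T$ for some constant $c$. Since $b\perp\one_T$, we get $b^T\Sc(L,T)^\dagger b=b^Tx_T$, which proves the identity.

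For the resistance statement, take $b=e_u-e_v$ with $u,v\in T$. Then $\tilde b=b_{u,v}$, and the identity gives $R^G_{u,v}=R^{\Sc(L,T)}_{u,v}$.
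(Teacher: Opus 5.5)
Your proof is correct and complete. The blockwise solve gives $\Sc(L,T)x_T=b$, the identity $\tilde b^TL^\dagger\tilde b=b^Tx_T$ follows from $\ker L=\mathrm{span}(\one)$ by connectivity, and $\ker\Sc(L,T)=\mathrm{span}(\one_T)$ follows from rank additivity in the block factorization together with $\Sc(L,T)\one_T=0$. The paper states this proposition without proof, as a standard consequence of the block factorization displayed just before it, so your proposal is that intended justification written out in full.
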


We also need the following two classical algorithmic results for our analysis.

\begin{lemma}[Nearly-linear SDDM solver \cite{SpielmanTeng,KMP}]\label{lem:sdd}
For an SDDM matrix $M$ with $q$ nonzeros, a vector $b$, and $0<\eta<1$, one can
compute $\widetilde x$ satisfying
\[
 \norm{\widetilde x-M^{-1}b}_M
 \le\eta\norm{M^{-1}b}_M
\]
in $\widetilde O(q\log(1/\eta))$ time, with probability $1-n^{-O(1)}$. 
\end{lemma}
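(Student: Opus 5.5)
The plan is not to reprove this from scratch but to derive it from the standard solver literature \cite{SpielmanTeng,KMP}, making sure the guarantee matches the form stated: an $M$-norm error bound, running time $\widetilde O(q\log(1/\eta))$, and success probability $1-n^{-O(1)}$.

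\textbf{Step 1: reduce SDDM to Laplacian.} Let $M$ be SDDM. Write $M=L+D$, where $L$ is a graph Laplacian and $D\ge 0$ is the diagonal excess. Form the Laplacian $\widetilde L$ of the graph obtained by adding one extra vertex $0$, with an edge of weight $D_{vv}$ from $v$ to $0$. Then $M$ is exactly the principal submatrix of $\widetilde L$ obtained by deleting vertex $0$. Solving $\widetilde L\tilde x=(b,-\one^Tb)$ and restricting the solution to $V$ (after shifting so that $\tilde x_0=0$) solves $Mx=b$. This also preserves energy norms: $\norm{y}_M=\norm{(y,0)}_{\widetilde L}$. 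The new matrix has $O(q)$ nonzeros, so it is enough to prove the lemma for Laplacians, with error measured in the $L$-norm on $b\perp\one$.

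\textbf{Step 2: the Laplacian solver.} This step invokes the Koutis--Miller--Peng construction.
\begin{itemize}
\item Compute a low-stretch spanning tree $T$ with total stretch $\widetilde O(q)$.
\item Scale $T$ up by a factor $\kappa$.
\item Add $\widetilde O(q/\kappa)$ off-tree edges, sampled with probability proportional to their stretch. By matrix Chernoff, this gives a preconditioner $P$ with $L\preceq P\preceq O(\kappa)L$ with high probability.
\item Eliminate the degree-one and degree-two vertices of $P$ by partial Cholesky. This leaves a Laplacian on $\widetilde O(q/\kappa)$ vertices and edges.
\item Recurse on that Laplacian, using preconditioned Chebyshev (or conjugate gradient). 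Each level needs $O(\sqrt\kappa)$ iterations to reduce the error by a constant factor.
\end{itemize}
Choosing $\kappa=\mathrm{polylog}(n)$ makes the recursion a W-cycle with geometrically decreasing total work, which gives $\widetilde O(q)$ per constant-factor error reduction. Repeating $O(\log(1/\eta))$ times gives the stated $\norm{\widetilde x-M^{-1}b}_M\le\eta\norm{M^{-1}b}_M$.

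\textbf{Step 3: probability.} The only randomness is in the sampling of the preconditioner chain, and each level fails with probability $n^{-c}$. A union bound over the $O(\log n)$ levels, with $c$ large (costing only logarithmic factors), gives success probability $1-n^{-O(1)}$.

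\textbf{Main obstacle.} The delicate step is the error propagation through the recursive preconditioned iteration. The inner solves are themselves only approximate, so the outer Chebyshev iteration must tolerate an inexact, slightly nonlinear preconditioner. I would handle this as KMP do: solve the inner systems to a fixed constant accuracy, so that each one behaves as a linear operator spectrally within a constant factor of $P^{-1}$. Alternatively, I would use preconditioned Richardson with $O(\sqrt\kappa)$ overhead via Chebyshev acceleration. Either way the standard perturbation bound for Chebyshev iteration applies.

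Since all of this is exactly the content of \cite{SpielmanTeng,KMP}, in the paper it suffices to cite those works together with the reduction of Step 1.
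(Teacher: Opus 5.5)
Your proposal is correct and matches the paper, which gives no proof of its own and imports this statement as a black box from \cite{SpielmanTeng,KMP}, exactly as you propose. Your additional SDDM-to-Laplacian reduction (adding a grounded vertex that carries the diagonal excess, which preserves the energy norm) and your sketch of the Koutis--Miller--Peng preconditioner chain are standard and consistent with those references, so nothing further is needed.
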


\begin{lemma}[Resistance sketch \cite{SpielmanSrivastava}]
\label{lem:res-sketch}
After $\widetilde O(m)$ preprocessing, one can
answer every pair query $(u,v)$ in $\widetilde O(1)$ time with a value
$\widetilde R(u,v)$ satisfying
\[
 \tfrac12R_G(u,v)\le\widetilde R(u,v)\le2R_G(u,v)
\]
simultaneously for all $u,v$, with probability $1-n^{-O(1)}$.
\end{lemma}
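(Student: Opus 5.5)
The plan is the standard Johnson--Lindenstrauss embedding of effective resistances, combined with approximate Laplacian solves from \Cref{lem:sdd}. Let $B\in\mathbb R^{m\times n}$ be the signed edge--vertex incidence matrix and $W$ the diagonal weight matrix, so that $L=B^TW B$. Then
\[
 R_G(u,v)=b_{u,v}^TL^\dagger L L^\dagger b_{u,v}=\norm{W^{1/2}BL^\dagger b_{u,v}}_2^2 .
\]
Thus every resistance is the squared distance between two columns of the $m\times n$ matrix $W^{1/2}BL^\dagger$.

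\textbf{Step 1 (random projection).} Draw a random $k\times m$ matrix $Q$ with i.i.d.\ $\pm1/\sqrt k$ entries, where $k=O(\log n)$ with a suitably large constant. By the Johnson--Lindenstrauss lemma and a union bound over the $\binom n2$ vectors $W^{1/2}BL^\dagger b_{u,v}$, with probability $1-n^{-O(1)}$ we have, for all pairs simultaneously,
\[
 (1-\tfrac14)R_G(u,v)\le\norm{Z b_{u,v}}^2\le(1+\tfrac14)R_G(u,v),
 \qquad Z:=QW^{1/2}BL^\dagger .
\]

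\textbf{Step 2 (approximate computation of $Z$).} We cannot form $L^\dagger$, so for each of the $k$ rows $y_i=(QW^{1/2}B)_i$ we solve $Lz_i=y_i^T$ approximately. The vector $y_i$ is orthogonal to $\one$, so after grounding a vertex this is an SDDM system and \Cref{lem:sdd} applies. We use accuracy $\eta=n^{-c}$ and obtain $\widetilde z_i$ with $\norm{\widetilde z_i-z_i}_L\le\eta\norm{z_i}_L$. Each solve costs $\widetilde O(m\log(1/\eta))=\widetilde O(m)$, so the total is $\widetilde O(m)$ for $k=O(\log n)$ rows. This also covers forming $QW^{1/2}B$ in $O(mk)$ time. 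We store the $k\times n$ matrix $\widetilde Z$, and a query returns $\widetilde R(u,v)=\norm{\widetilde Z b_{u,v}}^2$ in $O(k)=\widetilde O(1)$ time.

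\textbf{Step 3 (error control; the main obstacle).} We must turn the solver's energy-norm guarantee into a \emph{multiplicative} bound for every pair, including pairs with tiny resistance. The key inequality is Cauchy--Schwarz in the $L$-geometry: for any $x\perp\one$,
\[
 (x^Tb_{u,v})^2\le\norm{x}_L^2\,b_{u,v}^TL^\dagger b_{u,v}=\norm{x}_L^2R_G(u,v).
\]
Applying this row-wise gives
\[
 \norm{(\widetilde Z-Z)b_{u,v}}^2\le \eta^2\Big(\sum_i\norm{z_i}_L^2\Big)R_G(u,v).
\]
Moreover, $\sum_i\norm{z_i}_L^2=\sum_i y_iL^\dagger y_i^T\le\norm{QW^{1/2}B}_F^2\,\lambda_{\min}^+(L)^{-1}$. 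Under the polynomial weight ratio this quantity is $n^{O(1)}$; that the weight assumption is needed here is the point to watch. Choosing $c$ large makes the additive error at most $\tfrac1{100}R_G(u,v)$. The triangle inequality on $\norm{\cdot}$ then combines this with Step 1 and gives $\tfrac12R_G\le\widetilde R\le2R_G$ for all pairs simultaneously. Union-bounding the JL failure probability and the $k$ solver failure probabilities yields overall success probability $1-n^{-O(1)}$.
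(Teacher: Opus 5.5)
Your proposal is correct and is essentially the Spielman--Srivastava construction that the paper invokes by citation without giving a proof: a JL projection of $W^{1/2}BL^\dagger$ combined with $O(\log n)$ approximate Laplacian solves. As a minor remark, the weight-ratio assumption in Step 3 is not actually needed. Since $\Pi=W^{1/2}BL^\dagger B^TW^{1/2}$ is an orthogonal projection, $\sum_i\norm{z_i}_L^2=\Tr(Q\Pi Q^T)\le\norm{Q}_F^2=m$, so $\eta=1/(10\sqrt m)$ already suffices.
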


\section{Estimating the $\ell_1$-regularized resistance}\label{sec:estimating-resistance}

In this section we prove \Cref{thm:l1-resistance-error-bound}.
For $s\in V$, write
\[
 \Phi_s(z)=\frac12z^TLz-e_s^Tz+\lambda\one^Tz,\qquad
 z_s^*=\operatorname*{argmin}_{z\ge0}\Phi_s(z),\qquad
 q_s^*=e_s-Lz_s^*.
\]

The KKT conditions give the following locality properties.

\begin{lemma}[Obstacle structure]\label{lem:obstacle-structure-main}
For $1/n<\lambda<1/2$, the optimizer $z_s^*$ is unique
and satisfies
\[
 0\le q_s^*\le\lambda\one,\qquad
 \one^Tq_s^*=1,\qquad
 (q_s^*)_v=\lambda\ \text{if }(z_s^*)_v>0,\qquad
 |\supp z_s^*|\le\lambda^{-1}.
\]
In particular, $s\in\supp z_s^*$ and $(q_s^*)_s=\lambda$.
\end{lemma}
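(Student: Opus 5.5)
We prove the statement by writing out the convex-programming optimality conditions of $\Phi_s$ over the nonnegative orthant and reading off each property from them.

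\textbf{Existence and uniqueness.} The objective is $\Phi_s(z)=\frac12 z^TLz+(\lambda\one-e_s)^Tz$, a convex quadratic. For existence, note that on the cone $z\ge0$ we have $\Phi_s(z)\ge \lambda\one^Tz-z_s$. We split $z=c\one+y$ with $y\perp\one$. The quadratic term controls $y$, since $L$ is positive definite on $\one^\perp$ because $G$ is connected. Along $\one$ the linear part has slope $n\lambda-1>0$, which uses $\lambda>1/n$. Hence $\Phi_s$ is coercive on the orthant and a minimizer exists.

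For uniqueness, suppose $z,z'$ are two minimizers. Strict convexity of $\frac12 z^TLz$ off the $\one$-direction forces $z-z'=c\one$. The objective is affine along $\one$ with nonzero slope $n\lambda-1$, so equality of values forces $c=0$.

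\textbf{KKT conditions.} Slater's condition holds and the constraints are linear, so $z^*=z_s^*$ is optimal iff
\[
\nabla\Phi_s(z^*)=Lz^*-e_s+\lambda\one=\lambda\one-q^*\ \ge 0,\qquad z^*_v\,(\lambda-q^*_v)=0\ \text{ for all } v .
\]
This gives $q^*\le\lambda\one$ directly, together with the complementarity statement $q^*_v=\lambda$ whenever $z^*_v>0$. Since $\one^TL=0$, we also get $\one^Tq^*=\one^Te_s=1$.

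\textbf{Nonnegativity $q^*\ge0$.} This is the main step. For $v\ne s$ with $z^*_v=0$ we compute
\[
q^*_v=-(Lz^*)_v=\sum_{u\sim v}w_{uv}z^*_u-d_vz^*_v=\sum_{u\sim v} w_{uv}z^*_u\ge0 .
\]
For $v$ with $z^*_v>0$ we have $q^*_v=\lambda>0$. The remaining case is $v=s$ with $z^*_s=0$, which we handle by showing it cannot occur.

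\textbf{$s\in\supp z^*$.} Suppose $z^*_s=0$. Then $q^*_s=1-(Lz^*)_s=1+\sum_u w_{su}z^*_u\ge1>\lambda$, which contradicts $q^*\le\lambda\one$. Hence $z^*_s>0$, so $(q^*_s)_s=\lambda$, and the previous case analysis now gives $q^*\ge0$ everywhere.

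\textbf{Support bound.} Since $q^*\ge0$, every $v\in\supp z^*$ contributes exactly $\lambda$ to the sum $\one^Tq^*=1$. Therefore
\[
\lambda\,|\supp z^*|\le \one^Tq^*=1,
\]
that is, $|\supp z^*|\le\lambda^{-1}$.

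The only delicate points are the coercivity argument, which is where $\lambda>1/n$ is needed, and the ordering of the sign argument: we must first rule out $z^*_s=0$ before concluding $q^*\ge0$ at $s$.
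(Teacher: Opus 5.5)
Your proof is correct and follows the same KKT argument as the paper: you read $q^*\le\lambda\one$ and $q^*_v=\lambda$ on the support off the gradient condition, rule out $z^*_s=0$ via $q^*_s\ge 1>\lambda$, get $q^*_v=\sum_u w_{uv}z^*_u\ge 0$ off the support, and then count mass in $\one^Tq^*=1$. The one addition is your explicit existence and uniqueness argument (splitting $z=c\one+y$ and using the slope $n\lambda-1>0$ along $\one$), which fills in what the paper dismisses as standard and correctly identifies that $\lambda>1/n$ is genuinely needed there because $L$ is singular.
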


\begin{proof}
The KKT conditions are
\[
 g=Lz-e_s+\lambda\one\ge0,\qquad z\ge0,\qquad z_vg_v=0.
\]
Since $q=e_s-Lz=\lambda\one-g$, they give $q\le\lambda\one$ and
$q_v=\lambda$ on $v\in \supp (z)$.  If $v\notin\supp (z)$ and $v\ne s$, then
$q_v=\sum_uw_{uv}z_u\ge0$.  The source $s\in \supp (z)$, since otherwise
$q_s=1+\sum_uw_{su}z_u\ge1>\lambda$.  In addition, we have $\one^Tq=\one^T(e_s-Lz)=\one^Te_s=1$.  Consequently,
$1=\sum_vq_v\ge\lambda|\supp (z)|$.
Finally, the existence and uniqueness of $z_s^*$ are standard for this convex quadratic program \cite[Sec. 4.2.5, 10.1]{BoydVandenberghe2004}.
\end{proof}

The following theorem gives high-accuracy $\ell_1$-regularized potentials. \Cref{thm:l1-resistance-error-bound} is a direct consequence of \Cref{thm:single-source-potential}.

\begin{theorem}\label{thm:single-source-potential}
After $\widetilde O(m)$ preprocessing of $G$, for any source $s$, accuracy
$\eta>0$, one can compute the $n$ vectors
$\{\widetilde z_s\}_{s\in V}$ in $\widetilde O(n\lambda^{-2})$ time such that, with probability $1-n^{-O(1)}$ for all $s\in V$,
\[
 \widetilde z_s\ge0,\qquad
 \supp\widetilde z_s\subseteq\supp z_s^*,\qquad
 |\supp\widetilde z_s|\le\lambda^{-1},\qquad
 \norm{\widetilde z_s-z_s^*}_\infty\le\eta.
\]
Moreover, $\widetilde q_s=e_s-L\widetilde z_s$ satisfies
$0\le\widetilde q_s\le2\lambda\one$ and $\one^T\widetilde q_s=1$.
\end{theorem}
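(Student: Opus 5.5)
The plan is to compute each $\widetilde z_s$ by a monotone active-set (obstacle-problem) iteration confined to a set of at most $\lambda^{-1}$ vertices, so that every linear-algebra step costs only $\widetilde O(\lambda^{-2})$. The preprocessing is $\widetilde O(m)$ and has two parts. First, store the degrees $d_v$ and a hash table of edge weights, so that any $w_{uv}$ can be looked up in $O(1)$. Second, for each vertex $u$, build an alias table sampling a neighbour $v$ with probability $w_{uv}/d_u$.

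For a fixed source $s$ and current set $S\ni s$, define $z^S$ as the solution of the Dirichlet problem
\[
L_{SS}z^S_S=e_s|_S-\lambda\one_S,\qquad z^S_{V\setminus S}=0.
\]
The matrix $L_{SS}$ is SDDM because $G$ is connected and $S\ne V$ (recall $|S|\le\lambda^{-1}<n$). It has at most $|S|^2\le\lambda^{-2}$ nonzeros, and we read them off with $|S|^2$ hash lookups plus the stored diagonal entries $d_v$. By \Cref{lem:sdd}, $z^S$ can therefore be computed to accuracy $\eta/\mathrm{poly}(n)$ in $\widetilde O(\lambda^{-2})$ time; we convert the energy-norm guarantee to $\ell_\infty$ using $w_{\max}/w_{\min}\le n^{O(1)}$.

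The first structural step is a comparison principle, proved from the KKT conditions of \Cref{lem:obstacle-structure-main} and the M-matrix property of $L_{SS}$. If $S\subseteq\supp z_s^*$, then $0\le z^S\le z_s^*$ entrywise. Moreover, every $v\notin S$ with residual
\[
q_v=\sum_{u\in S}w_{uv}z^S_u>\lambda
\]
lies in $\supp z_s^*$. Hence the rule ``add all violators and re-solve'' keeps $S\subseteq\supp z_s^*$, makes $z^S$ increase monotonically, and terminates with $S=\supp z_s^*$. On $S$ the residual equals $\lambda$, and termination means no outside residual exceeds $\lambda$. This yields the support conditions and $|\supp\widetilde z_s|\le\lambda^{-1}$ for free. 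The identity $\one^T\widetilde q_s=1$ holds for any $\widetilde z_s$, because $\one^TL=0$. To keep $\widetilde z_s\ge0$ and $\supp\widetilde z_s\subseteq\supp z_s^*$ despite solver error, we round the solver output down by its error bound and clip at $0$; monotonicity then gives $\widetilde z_s\le z_s^*$.

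Violators are detected without scanning the neighbourhoods of $S$, which could have total size up to $m$. The total outside residual mass is $\sum_{v\notin S}q_v\le 1$. We sample $u\in S$ with probability proportional to $d_uz^S_u$, then a neighbour $v$ from $u$'s alias table; this lands on $v$ with probability proportional to $q_v$. Taking $\widetilde O(\lambda^{-1})$ samples finds, with high probability, every $v$ with $q_v>\lambda$. Each candidate is verified exactly in $O(|S|)$ time by hash lookups. A vertex with $q_v\in(\lambda,2\lambda]$ may go undetected, and that is acceptable: missing it produces only the relaxed guarantee $\widetilde q_s\le2\lambda\one$ stated in the theorem. The threshold $2\lambda$ also gives the slack needed to absorb solver error in the final residual check.

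The main obstacle is the number of re-solves. A naive bound allows up to $\lambda^{-1}$ rounds, giving $\lambda^{-3}$ per source. I would first try to bound the number of rounds by $O(\log n)$ using the following potential argument. Each round adds every vertex whose residual exceeds $2\lambda$. Monotonicity then shows that the outside mass $1-\lambda|S|$ shrinks geometrically, or at least that $|S|$ doubles whenever a round is not final. If that fails, the fallback is to warm-start each solve from the previous $z^S$. The change then satisfies a Dirichlet problem whose right-hand side is supported on the newly added vertices, so the rounds amortize to $\widetilde O(\lambda^{-2})$ total by a telescoping energy bound. Summing over all $n$ sources gives $\widetilde O(n\lambda^{-2})$. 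A union bound over sources and rounds gives success probability $1-n^{-O(1)}$.
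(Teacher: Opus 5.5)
Your monotone active-set scheme is sound as an exact method. Along the trajectory started from $\{s\}$ one does get $0\le z^S\le z_s^*$, violators lie in $\supp z_s^*$, and the iteration stops exactly at $S=\supp z_s^*$. But the running time, which is the real content of the theorem, is not established, and your first plan for it is false.

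The residual of $z^S$ at $v\notin S$ is $\sum_{u\in S}w_{uv}z^S_u$, which is zero unless $v$ is adjacent to $S$. So each round adds only neighbours of $S$, and the number of rounds is at least the graph radius of $\supp z_s^*$ around $s$. Neither doubling of $|S|$ nor geometric decay of $1-\lambda|S|$ holds in general.

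Concretely, let $s$ lie in a unit-weight clique $Q$ on $c=\lambda^{-1}/4$ vertices and carry a long pendant path $p_1,p_2,\dots$. The first round brings in all of $Q$ and $p_1$, since each neighbour of $s$ has residual $(1-\lambda)/c=4\lambda(1-\lambda)>2\lambda$. For every later $S=Q\cup\{p_1,\dots,p_j\}$, the Dirichlet solution satisfies $z^S_v=z^S_s-\lambda$ for $v\in Q\setminus\{s\}$ and $z^S_s-z^S_{p_1}=1-c\lambda=3/4$, with path increments decreasing by $\lambda$ per step. Hence the only candidate $p_{j+1}$ has residual $3/4-j\lambda$. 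The path is absorbed one vertex per round for about $3\lambda^{-1}/4$ rounds, each a re-solve with $\Theta(\lambda^{-2})$ nonzeros. That is $\Theta(\lambda^{-3})$ per source, exactly the naive bound.

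Warm starting does not rescue this. For the solver of \Cref{lem:sdd}, or any iterative solver, a good initial guess only shortens the logarithmic iteration count, while every iteration still touches all of $E(S)$. A telescoping energy bound controls the initial errors, not $\sum_{\text{rounds}}|E(S)|$.

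The idea you are missing is to avoid incremental support discovery altogether. The paper obtains a superset $F\supseteq\supp z_s^*$ with $|F|\le2\lambda^{-1}+1$ in a single pass (\Cref{lem:idla-support-superset}). This is an IDLA process with $O(\log n)$ slots per vertex and $O(\lambda^{-1}\log n)$ particles; containment comes from the identity $h_s-\lambda\sum_{u}h_u=c_v(z_s^*)_v>0$ plus a coupling. Each particle's first exit is generated in $\widetilde O(|C|)$ time from a random spanning tree of the wired graph $W_C$ (\Cref{lem:wired-tree-sampler,lem:wired-local-first-exit}). The paper then performs one obstacle solve on $F$ (\Cref{thm:cpw-flow-diffusion}), followed by rounding.

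Two further steps fail as written.

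First, the $2\lambda$ slack does not license stopping early. Suppose some $v\in\supp z_s^*\setminus S$ still has residual $q_v\in(\lambda,2\lambda]$ when you stop. Then $y=z_s^*-z^S\ge0$ satisfies $(Ly)_v=q_v-\lambda$, hence $y_v\ge(q_v-\lambda)/d_v$, which breaks $\norm{\widetilde z_s-z_s^*}_\infty\le\eta$ for small $\eta$. In the example above, your relaxed rule always stops one path vertex short. It omits the last support vertex $p_a$, whose potential $(z_s^*)_{p_a}=3/4-a\lambda\in(0,\lambda]$ can be close to $\lambda$. In the paper the factor $2$ only absorbs solver and rounding error (\Cref{lem:positive-residual-rounding}), after a superset of the true support is already in hand.

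Second, your violator sampler hits $v\notin S$ with probability $q_v/\sum_{u\in S}d_uz^S_u$, not $q_v$. This normaliser already grows to $\Theta(\lambda^{-2})$ on a path, and is larger in the clique example. So $\widetilde O(\lambda^{-1})$ samples do not find all violators with high probability.

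A minor point: $0\le z^S$ does not follow from $S\subseteq\supp z_s^*$ alone. For instance, $S=\{s,v\}$ with $v$ not adjacent to $s$ gives $z^S_v=-\lambda/d_v$. Nonnegativity does hold along your monotone trajectory, which is all you need.
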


We first use \Cref{thm:single-source-potential} to prove \Cref{thm:l1-resistance-error-bound}, then we prove \Cref{thm:single-source-potential}.

\begin{proof}[Proof of \Cref{thm:l1-resistance-error-bound}]
Run \Cref{thm:single-source-potential} for every $s\in V$ with
$\eta=\xi/4$ and failure probability $n^{-O(1)}$, and store every sparse column
$\{\widetilde z_s\}_{s\in V}$ in a static hash table.  The total preprocessing time is $\widetilde O(m+n\lambda^{-2})$ and
output space is $O(n\lambda^{-1})$. For a query $(u,v)$ return
\[
 \widehat R_{u,v}^\lambda
 =b_{u,v}^T(\widetilde z_u-\widetilde z_v).
\]
This only takes $O(1)$ time. The error satisfies
\[
 |\widehat R_{u,v}^\lambda-R_{u,v}^\lambda|
 \le |(\widetilde z_u-z_u^*)_u|+|(\widetilde z_u-z_u^*)_v|
     +|(\widetilde z_v-z_v^*)_u|+|(\widetilde z_v-z_v^*)_v|
 \le\xi.
\]
So we prove \Cref{thm:l1-resistance-error-bound}.
\end{proof}

We now prove \Cref{thm:single-source-potential}.  Our algorithm is based on finding a candidate set $F$ such that $ \supp z_s^*\subseteq F$, and then solves the original obstacle objective on
that candidate set $F$ using the flow-diffusion solver of \cite{ChenPengWang2021}.
Write $k=\lambda^{-1}$. After $\widetilde{O}(m)$ preprocessing, we prove that the process of finding $F$ and solving the obstacle objective on $F$ can be implemented in $\widetilde{O}(k^2)$ time for every $s\in V$. 

We begin with a lemma that shows how to find a candidate set $F$ such that $ \supp z_s^*\subseteq F$.

\begin{lemma}
\label{lem:idla-support-superset}
Let $A_s^*=\supp z_s^*$, and fix $0<\delta<1/2$.  Set
$r=\lceil32\log(2n/\delta)\rceil$ and $J=\lceil2rk\rceil$.
If $J\ge rn$, return $V$. Otherwise release $J$ particles sequentially
at $s$, each following the weighted random walk until it reaches a vertex
holding fewer than $r$ particles and then settling there.
The set $F$ of fully occupied vertices satisfies
\[
 \Pr[A_s^*\subseteq F]\ge1-\delta,\qquad |F|\le2k+1.
\]
The process needs only $J=\widetilde O(k)$ first-exit samples from a
connected growing set of at most $2k+1$ fully occupied vertices.
\end{lemma}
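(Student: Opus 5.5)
The plan is to compare the multi-particle internal diffusion-limited aggregation (IDLA) process with the obstacle problem through a least-action / odometer comparison. First the trivial case: if $J\ge rn$, return $V$, and there is nothing to prove except $|F|\le 2k+1$. This needs a short check. Since $r\ge 1$ and $J\ge rn$ forces $2rk+1\ge rn$, we get $n\le 2k+1$.

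Otherwise, let $u(v)$ be the expected total number of departures from $v$ (divided by weight normalization $d_v$). Conservation of particles gives
\[
 L u = J e_s - \mu,
\]
where $\mu$ is the expected final occupation, with $\mu\le r\one$ and $\one^T\mu=J$. Rescaling by $J$ gives $\tilde z = u/J$ with residual $\tilde q=\mu/J\le (r/J)\one\le \lambda\one/2$. So the rescaled IDLA odometer is a supersolution-type object for a more strongly regularized obstacle problem.

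\textbf{Step 1: deterministic comparison.} I would prove a monotonicity (least action) lemma. Let $y\ge0$ satisfy $Ly\ge e_s-\lambda'\one$ off its support structure; equivalently, let $q=e_s-Ly$ with $q\le\lambda'\one$ and $q\ge 0$ off $\supp y$. If $\lambda'\le\lambda$, then $y\ge z_s^*$. The proof uses the minimum principle applied to $z_s^*-y$ on $\supp z_s^*$. On the set where $z_s^*>y\ge0$ we have $L(z_s^*-y)=q-q_s^*\le\lambda'-\lambda\le 0$ there. So $z_s^*-y$ is subharmonic on that set, and its maximum is attained on the boundary, where it is $\le0$. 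This gives the containment $A_s^*\subseteq\supp y$.

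\textbf{Step 2: handling randomness.} The realized process is an abelian-network / sandpile-like toppling whose final configuration has every vertex of $F$ full (exactly $r$ particles) and others partially filled. In the realized odometer $u^{\omega}$, vertices visited but not fully occupied may have positive odometer. To pass to $F$, I would use the abelian property to run IDLA in stages. For each vertex $v\in A_s^*$, I need to show that with probability $\ge 1-\delta/n$ it is fully occupied.

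The intended route is a concentration argument. The number of particles that settle in a region $A$ before it fills dominates a sum of visits. Comparing the deterministic divisible-sandpile version (which by Step 1 covers $A_s^*$ with slack factor 2, since $J/r\ge 2k$) to the random one, Azuma/martingale bounds on $\sum_v u(v)\,\cdot$ (harmonic test functions) give deviations of order $\sqrt{r\log(n/\delta)}$ per vertex. The choice $r=32\log(2n/\delta)$ is made so that this deviation is at most a constant fraction of $r$. A union bound over at most $n$ vertices then yields $\Pr[A_s^*\subseteq F]\ge1-\delta$.

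\textbf{Step 3: size bound and cost.} Each fully occupied vertex holds $r$ particles, so $|F|\le J/r\le 2k+1$. Connectivity follows because particles start at $s$ and the occupied set grows along paths of visited vertices. Each particle performs one walk until it exits the current set of full vertices, which is one first-exit sample from a connected set of size $\le 2k+1$. That makes $J=\widetilde O(k)$ samples in total.

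The main obstacle is Step 2: turning the deterministic least-action comparison into a high-probability statement for the random IDLA with multiplicity $r$. I would first try the Lawler–Bramson–Griffeath style martingale argument, using the Green's function of $A_s^*$ and the factor-2 slack in $J$, to show that $A_s^*$ fills early.
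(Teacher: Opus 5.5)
There is a genuine gap, and it sits in Step 2, which is the whole content of the lemma. The trivial case, the size bound and Step 3 are fine. Step 1 is also correct: any $y\ge0$ with $e_s-Ly\le\lambda\one$ dominates $z_s^*$ by the maximum principle. But Step 1 only controls the \emph{expected} odometer, giving $\mathbb E[\#\text{departures from }v]\ge Jd_v z_s^*(v)$. For $v\in A_s^*$, the value $z_s^*(v)$ can be arbitrarily small, for instance when $\lambda$ is near the threshold at which $v$ enters the support. So this expectation bound yields no probability bound of the form $1-\delta/n$, and no concentration ``at scale $r$'' can start from it. The claimed per-vertex deviation of order $\sqrt{r\log(n/\delta)}$ is asserted without naming the test function or the martingale, so nothing is proved. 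A side error: in this process a particle leaves a vertex only if that vertex already holds $r$ particles, so the realized odometer is supported in $F$. Non-full vertices never have positive odometer.

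The paper's proof does what your last sentence gestures at, and it has two ingredients your proposal is missing.

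\emph{The obstacle inequality.} Fix $v\in A_s^*$ and let $h_u$ be the probability that a walk from $u$ hits $v$ before leaving $A_s^*$. Then $L_{A_s^*A_s^*}h=c_ve_v$ with $c_v>0$. Pairing this with the KKT equation $L_{A_s^*A_s^*}z^*_{s,A_s^*}=e_s-\lambda\one_{A_s^*}$ gives $h_s-\lambda\sum_{u\in A_s^*}h_u=c_v z_s^*(v)>0$. Only the \emph{sign} of $z_s^*(v)$ is used, which is why the resulting bound is uniform over the support.

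\emph{The ghost-walk coupling.} Continue a particle that settles in the $j$-th slot of $u$ along the $j$-th of $r$ independent walks started at $u$. The extended trajectories are then i.i.d.\ walks from $s$. Hence the count $M_v$ of those hitting $v$ before leaving $A_s^*$ is a sum of $J$ independent Bernoulli$(h_s)$ variables, with mean $Jh_s>2r\sum_u h_u$. The count $W_v$ of ghost walks hitting $v$ before leaving $A_s^*$ is a sum of independent Bernoullis with mean $r\sum_u h_u\ge r$. On $\{v\notin F\}$ one gets $M_v\le W_v$: a real path reaching $v$ must settle there, since $v$ is never full, so its ghost starts at $v$; otherwise the particle settled inside $A_s^*$ before reaching $v$ and its ghost hits $v$. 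Two Chernoff bounds then give $\Pr[v\notin F]\le 2e^{-r/32}\le\delta/n$, and a union bound finishes.

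Your proposal supplies neither the hitting-probability inequality, which is the only place the obstacle problem enters, nor the coupling. As written it does not establish $\Pr[A_s^*\subseteq F]\ge1-\delta$.
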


\begin{proof}
If $J\ge rn$, then $n\le J/r\le2k+1/r$, so the stated size bound
holds.  Otherwise every particle settles almost surely, since the
connected finite graph still has unused capacity.  The size bound follows
by counting particles.

Fix $v\in A_s^*$ and for
$u\in A_s^*$, put the hitting probability
\[
h_u=\Pr_u[\text{Random walk starting from } u \text{ hits }v\text{ before leaving }A_s^*].
\]  
Then $h_v=1$ and $h_u=\sum_{x\in A_s^*}\frac{w_{ux}}{d_u}h_x$ for $u\in A_s^*\setminus\{v\}$. Therefore $L_{A_s^*A_s^*}h=c_ve_v$ for some $c_v>0$. By KKT conditions we have $L_{A_s^*A_s^*}z_{s,A_s^*}^*=e_s-\lambda\one_{A_s^*}$. By multiply $h^T$ on both sides, we have
\[
h_s-\lambda\sum_{u\in A_s^*}h_u=h^T L_{A_s^*A_s^*}z_{s,A_s^*}^*=c_v(z_{s,A_s^*})_v=c_v(z_s^*)_v>0.
\]

We then consider two types of random walks. The type-$1$ is $J$ random walks starting from $s$. The type-$2$ is $r$ random walks starting from each $u\in A_s^*$.
Let $M_v$ count random walks starting from $s$ hitting $v$ before leaving
$A_s^*$, and let $W_v$ count all $r$ random walks starting from every $u\in A_s^*$ hitting $v$ before leaving $A_s^*$. Then we have
\begin{align*}
\mathbb EW_v&=r\sum_{u\in A_s^*}h_u\ge rh_v=r,\\
 \mathbb EM_v&=Jh_s\ge\frac{2r}{\lambda}h_s>2r\sum_{u\in A_s^*}h_u= 2\mathbb EW_v.
\end{align*}

Next, we exhibit a coupling of the two types of walks under which
$v\notin F$ implies $M_v\le W_v$.  Suppose a type-$1$ walk starting at $s$ reaches a vertex $u$ and occupies the $j$-th empty slot of $u$ for some $j\le r$.  Call the trajectory from $s$ to $u$ the \emph{real path}.  From that moment on, continue the walk
along the same trajectory as the $j$-th type-$2$ walk started at $u$; call
this continuation the \emph{ghost path}.  Under this coupling, if
$v\notin F$, every walk counted by $M_v$ is also counted by $W_v$:
\begin{enumerate}[label=(\roman*)]
\item If the real path of a type-$1$ walk reaches $v$, then, since
$v$ is not full, the walk stops at $v$ immediately.  The corresponding
ghost path therefore begins at $v$, and hence is counted in $W_v$.
\item If the real path stops at some $u$ without leaving $A_s^*$, and
the subsequent ghost path hits $v$ before leaving $A_s^*$, then this
ghost walk is again counted in $W_v$.
\end{enumerate}
In either case a walk contributing to $M_v$ injects a distinct ghost
slot counted by $W_v$, so $v\notin F$ implies $M_v\le W_v$. Since $M_v$ and $W_v$ are sums of independent Bernoulli variables, Chernoff bounds applied separately
to $M_v$ and $W_v$ give
\[
 \Pr[v\notin F]\le
 \Pr[M_v\le\tfrac34\mathbb EM_v]
 +\Pr[W_v\ge\tfrac32\mathbb EW_v]
 \le2e^{-r/32}.
\]
The independence between $M_v$ and $W_v$ is not required. A union bound over $A_s^*$ proves the result.
\end{proof}

We next discuss how to implement \Cref{lem:idla-support-superset} so as to
produce the candidate set $F$.  A direct simulation of the underlying random
walks would cost $\widetilde O(k^3)$ time: consider the case on an unweighted path,
once $s$ vertices have been filled, the expected time to reach a new vertex
is $\Theta(s^2)$.  Summing over the growth of $F$ therefore yields a $\Theta (k^3)$
bound. The $\Theta (k^3)$ runtime is not enough for the proof of \Cref{thm:single-source-potential}.

Therefore, we give an improved algorithm (\Cref{alg:wired-tree-full-set}) that
constructs $F$ in $\widetilde O(k^2)$ time. The key step of our algorithm is the optimal random spanning tree sampling by \cite{ALOGVV2021,ALV2022}. We notice that the random spanning tree satisfies the strong Rayleigh distribution. We define $T_\mu(t,r)$ for a distribution $\mu\in \mathbb R^{\binom{[m]}{r}}$ as the time it takes to produce a sample from $\mu$ conditional on all elements of the sample being a subset $T$ of size $|T|=t$. For strong Rayleigh distributions, \cite[Theorem 2]{ALV2022} reduce the sample from $\mu\in\mathbb R^{\binom{[m]}{r}}$ to the conditional sample of smaller size bounded by sums of its marginal probabilities.

\begin{algorithm}
\caption{generating the candidate set $F$}
\label{alg:wired-tree-full-set}
\KwIn{Weighted graph $G=(V,E,w)$, source $s\in V$, parameter $\lambda>0$, approximate leverage scores $\{u_e\}_{e\in E(G)}$}
\KwOut{Candidate set $F$}
$r\leftarrow\lceil32\log(2n/\delta)\rceil$,
$J\leftarrow\lceil2r/\lambda\rceil$\;
\If{$J\ge rn$}{\Return{$V$}\;}

Initialize $C\leftarrow\varnothing$ and a sparse occupancy map $a$ with default value $0$\;
\For{$j=1,\ldots,J$}{
  \eIf{$a(s)<r$}{
    $y\leftarrow s$\;
  }{
    Let $W_C$ be the implicit graph on $C\cup\{g\}$ with terminal node $g$. $W_C$ has internal edges $E(C)$ and external edges $(x,g)$ with weight $b_x=\sum_{v\notin C}w_{xv}$ for every $x\in C$\;
    Sample a weighted random spanning tree $T$ of $W_C$ by \Cref{lem:wired-tree-sampler}, to $TV$-distance $n^{-O(1)}$\;
    Let $(x,g)$ be the last edge of the unique $s$-to-$g$ path in $T$\;
    Sample $y\notin C$ with probability $w_{xy}/b_x$\;
  }
  $a(y)\leftarrow a(y)+1$\;
  \If{$a(y)=r$}{
    $C\leftarrow C\cup\{y\}$; update the implicit graph $W_C$\;
  }
}
\Return{$F\leftarrow C$}\;
\end{algorithm}

\begin{theorem}[{\cite[Theorem~2]{ALV2022}}]\label{thm:strong-rayleigh-sampling}
Let $\mu\in \mathbb R^{\binom{[m]}{r}}$ be a strongly Rayleigh distribution, and let $q_i\ge\Pr_{T\sim\mu}[i\in T]$ for $i\in [m]$ be the upper bound of the marginal probabilities which sums to $K=\sum_iq_i$. Sampling from $\mu$ with $TV$-distance $m^{-O(1)}$ reduces to $\widetilde O(1)$ calls to $T_\mu(O(K),r)$.
\end{theorem}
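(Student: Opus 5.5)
The plan is to reproduce the domain-sparsification argument of \cite{ALV2022}, using an alternating (``up--down'') Markov chain. Its state is a set $S\in\binom{[m]}{r}$. One step from $S$ goes as follows. Set $t=\lceil CK\rceil$ for a large constant $C$, which we may assume satisfies $t\ge r$ (note $r\le K$, since $r=\sum_i\Pr_\mu[i\in S]\le K$). First draw $t-r$ i.i.d.\ indices $i_1,\dots,i_{t-r}$ with $\Pr[i_j=i]=q_i/K$, and let $T=S\cup\{i_1,\dots,i_{t-r}\}$, so $|T|\le t$. Then resample $S'\sim\mu$ conditioned on $S'\subseteq T$, reweighted by the proposal likelihood. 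Concretely, $\Pr[S'\mid T]\propto \mu(S')\,\Pr[\text{sample }T\mid S']$, which is $\mu(S')\prod_{i\in S'}q_i^{-1}$ up to multiplicity bookkeeping.

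The second half of a step is, by definition, a single call to $T_\mu(O(K),r)$: the reweighting by $\prod_{i\in S'}q_i^{-1}$ is an external field, and external fields preserve the strongly Rayleigh property. So the whole claim reduces to the following two facts:
\begin{enumerate}[label=(\alph*)]
\item the chain is reversible with stationary distribution $\mu$;
\item it reaches TV-distance $m^{-O(1)}$ after $\widetilde O(1)$ steps from some starting state we can produce cheaply.
\end{enumerate}

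Fact (a) is a direct detailed-balance check on the joint law of $(S,T)$. By construction, the joint density is $\mu(S)\Pr[T\mid S]$, and the down step is exactly the conditional law of $S$ given $T$ under this joint measure. The chain is therefore the composition of two Gibbs-type projections, so it is reversible and preserves $\mu$.

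Fact (b) is the main obstacle, and I would handle it with entropic independence. First apply the isotropic transformation: split element $i$ into $\lceil q_i\cdot m/K\rceil$ (or $\approx t\,q_i/K$) parallel copies with equal weight. After this, every marginal is $O(1/t)$ times the ground-set size scale, and the i.i.d.\ proposal becomes uniform subsampling. In the transformed world, the up--down step is the classical $k\leftrightarrow t$ down-up walk. Strongly Rayleigh distributions are $1$-entropically independent (equivalently, $1$-fractionally log-concave), and this property survives the copying transformation. The entropic-independence contraction theorem then gives, for every $\nu$,
\[
 \mathrm{KL}(\nu P\,\|\,\mu)\le\Bigl(1-\Omega\bigl(1-\tfrac{r}{t}\bigr)\Bigr)\,\mathrm{KL}(\nu\,\|\,\mu)\quad\text{(roughly)},
\]
and this is a constant-factor contraction once $t\ge CK\ge Cr$. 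I would verify this contraction carefully, tracking how the marginal upper bounds $q_i$ (rather than the exact marginals) enter the isotropic reduction; I expect the loss to be only a constant factor because $\sum_iq_i=K$.

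Finally, for the start we take any $S_0\in\supp\mu$. For spanning trees such an $S_0$ is found in near-linear time, and in general one can take a deterministic greedy support element. Then $\mathrm{KL}(\delta_{S_0}\|\mu)=\log(1/\mu(S_0))$. Under the paper's weight conventions this is $\log m^{O(r)}$, i.e.\ polynomially bounded, so $O(\log m)$ halvings suffice to drive the KL divergence below $m^{-O(1)}$. Pinsker's inequality then gives TV-distance $m^{-O(1)}$ after $\widetilde O(1)$ steps, each consisting of one call to $T_\mu(O(K),r)$ plus $\widetilde O(K)$ proposal sampling.
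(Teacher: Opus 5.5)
\textbf{There is a genuine gap in step (b).} The paper does not prove this statement; it imports it verbatim from \cite{ALV2022}. Your reconstruction therefore has to stand on its own, and step (b) is where the entire content of that theorem lies.

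Your chain adds the $t-r$ new elements obliviously (i.i.d.\ from $q/K$, i.e.\ uniformly after copying) and applies the $\mu$-weighting on the way \emph{down}. The local-to-global entropy-contraction theorem for $1$-entropically independent measures concerns the opposite walk: a measure on the top level, a \emph{uniform} down step, and a weighted up step. Pass to complements, drawing the new elements without replacement. Your step is then the down-up walk of $\mu^c$ between levels $n'-r$ and $n'-t$, where $n'$ is the size of the copied ground set. $\mu^c$ is again strongly Rayleigh, but the theorem only gives
\[
 \mathrm{KL}(\nu P\,\|\,\mu)\le\frac{n'-t}{n'-r}\,\mathrm{KL}(\nu\,\|\,\mu),
\]
a contraction by $1-\Theta(t/n')$. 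That is useless when $t=O(K)\ll n'$. Copying does not help, because this worst-case bound never sees isotropy.

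In fact, no argument of the form ``data processing on the down half, plus contraction of the up half $A:S\mapsto S\cup\{\text{new elements}\}$'' can work. That is how the theorem is always used. Take $\mu$ uniform on $\binom{[n']}{r}$, which is strongly Rayleigh, already isotropic, and has $K=r$. Take $\nu=\delta_{S_0}$. A direct computation gives
\[
 \frac{\mathrm{KL}(\nu A\,\|\,\mu A)}{\mathrm{KL}(\nu\,\|\,\mu)}
 =1-\frac{\log\binom{t}{r}}{\log\binom{n'}{r}}\longrightarrow 1
 \qquad (n'/r\to\infty,\ t=Cr).
\]
Your chain does make constant-factor progress in this example, but that progress comes entirely from the $\mu$-weighted down step. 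Controlling the down step with only $t=O(K)$ is exactly the new ``average-case entropic independence'' analysis of \cite{ALV2022}. Earlier analyses for general strongly Rayleigh measures only reached $t=\widetilde O(r^2)$. Your displayed inequality, which you say you ``would verify carefully,'' asserts the conclusion of that analysis rather than deriving it.

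Two smaller points, minor next to (b):
\begin{itemize}
\item For exact reversibility, the down step must condition on the multiset of draws. It then samples $\mu|_T$ with the external field $\prod_{i\in S'}m_i(T)K/q_i$. This is harmless for spanning trees, but it is more than a call to $T_\mu$ as the paper defines it.
\item Your start uses $\log(1/\mu(S_0))=O(r\log m)$, which holds only for polynomially bounded weights. The statement as written assumes no such bound.
\end{itemize}
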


For weighted spanning trees, the nearly linear time sampler \cite[Theorem~2]{ALOGVV2021} gives $T_\mu(t,r)=\widetilde O(t)$, where $r=|V|-1$ is the edge count of a spanning tree. Therefore we have the following Lemma.

\begin{lemma}
\label{lem:wired-tree-sampler}
After $\widetilde O(m)$ shared preprocessing, the spanning-tree sampling
step of \Cref{alg:wired-tree-full-set} can be implemented in
$\widetilde O(|C|)$ expected time. 
\end{lemma}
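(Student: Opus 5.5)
The plan is to apply \Cref{thm:strong-rayleigh-sampling} to the weighted spanning-tree distribution $\mu$ of the wired graph $W_C$. Two facts are needed: suitable marginal upper bounds $q_e$ whose sum $K$ is $O(|C|)$, and a conditional sampler with $T_\mu(t,r)=\widetilde O(t)$. We never write $W_C$ down explicitly, so the cost must be charged to $|C|$ and not to the number of edges incident to $C$.

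\textbf{Structure of $W_C$.} It has $|C|+1$ vertices, so every spanning tree has exactly $|C|$ edges. It is a legitimate weighted graph, since it is $G$ with $V\setminus C$ contracted to the single node $g$. The distribution $\mu$ of a weighted random spanning tree is strongly Rayleigh. By Foster's theorem (\Cref{lem:foster}) applied to $W_C$, the true marginals $\ell^{W_C}_e$ sum to $|C|$.

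\textbf{Marginal bounds.} The preprocessing computes approximate leverage scores $u_e$ of $G$ with $\widetilde O(m)$ work via \Cref{lem:res-sketch}. I would use them only for internal edges $e\in E(C)$. Contracting $V\setminus C$ can only decrease effective resistances (Rayleigh monotonicity), so $w_eR^{W_C}_e\le w_eR^G_e\le 2u_e$, and we may take $q_e=\min(1,2u_e)$.

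For the external edges $(x,g)$ there is one per $x\in C$, so setting $q_{(x,g)}=1$ costs only $|C|$ in total. The remaining obstacle is to bound $\sum_{e\in E(C)}q_e$ by $O(|C|)$. Foster's theorem in $G$ does not give this directly: it bounds the sum over all edges of $G$, not over $E(C)$. However, $C$ contains at most $2k+1$ vertices, and the leverage scores of edges inside any vertex set $S$ of $G$ satisfy $\sum_{e\in E(S)}\ell_e\le|S|-1$. This follows from Foster's theorem on the graph with $V\setminus S$ contracted, combined with monotonicity as above. Hence $K=O(|C|)$ up to the constant-factor approximation.

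\textbf{Implementation.} For the conditional sampler, \Cref{thm:strong-rayleigh-sampling} only ever conditions on a candidate subset of $O(K)$ edges. On that subset we run the sampler of \cite[Theorem~2]{ALOGVV2021}, which gives $T_\mu(t,r)=\widetilde O(t)$ as stated above.

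The data that must be maintained as $C$ grows are:
\begin{itemize}
\item the weights $b_x$,
\item the internal edges with their $q_e$ values, and
\item a sampling structure (e.g.\ a balanced tree of prefix sums) for drawing edges proportionally to $q_e$.
\end{itemize}
When $y$ joins $C$, we scan the adjacency of $y$ and update $b$ for each neighbour. The total scanning cost is $\sum_{y\in C}\deg(y)$, which is charged to the shared preprocessing and amortization rather than to each sampling step. Each sampling step then costs $\widetilde O(1)$ calls of $\widetilde O(K)=\widetilde O(|C|)$ expected time.

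The step I expect to be hardest is justifying that only the $O(K)$ sampled candidate edges are ever touched, without enumerating $E(C)$. This requires sampling edges proportionally to $q_e$ from the maintained structure in $\widetilde O(1)$ time per draw, which the prefix-sum structure above provides.
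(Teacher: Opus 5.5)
Your route is the paper's: take the $G$-leverage-score approximations $u_e$ as marginal upper bounds on the internal edges, which is valid because contraction only lowers leverage scores. Put bound $1$ on each ground edge. Then invoke \Cref{thm:strong-rayleigh-sampling} together with the conditional sampler of \cite{ALOGVV2021}.

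However, your justification of the key inequality $\sum_{e\in E(S)}\ell^G_e\le|S|-1$ uses monotonicity in the wrong direction. Contracting $V\setminus S$ lowers leverage scores, so $\ell^{W}_e\le\ell^G_e$ for $e\in E(S)$. Foster's theorem in the contracted graph then bounds only the smaller quantity: $\sum_{e\in E(S)}\ell^{W}_e\le|S|$. That gives no upper bound on $\sum_{e\in E(S)}\ell^G_e$. Yet your $q_e$ are built from the larger $G$-scores, so this step does not bound $K$ as written.

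The fix is the paper's one-line argument:
\[
\sum_{e\in E(C)}\ell^G_e=\mathbb E_{T\sim\mu_G}\bigl[|T\cap E(C)|\bigr]\le|C|-1,
\]
because $T\cap E(C)$ is a forest on the vertex set $C$. Equivalently, delete every edge outside $E(C)$. Deletion raises effective resistances, so $\ell^G_e\le\ell^{G[C]}_e$, and then apply Foster's theorem to each component of $G[C]$. With this replacement your bound $K=O(|C|)$, and hence the lemma, goes through.

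A minor accounting point: the adjacency scans performed when $y$ joins $C$ are done separately for every source $s$. They therefore cannot be charged to the $\widetilde O(m)$ shared preprocessing. They belong to the cost of updating $W_C$, which the paper accounts for outside this lemma, in \Cref{lem:wired-local-first-exit}.
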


\begin{proof}
Compute the resistance sketch $\ell_G(e)\le u_e\le c\ell_G(e)$, with $u_e\le1$, once
for all original edges using \Cref{lem:res-sketch}. This costs $\widetilde O(m)$ shared preprocessing time.
For every $e\in E(W_C)$, the property of leverage score gives
$\ell_{W_C}(e)\le\ell_G(e)\le u_e$ on internal edges. Assign upper
bound $1$ to each positive ground edge. Their total upper-bound mass is
\[
 \sum_{e\in E(C)}u_e+|\{x\in C:b_x>0\}|
 \le c\,\mathbb E_{T\sim\Tree_G}[|T\cap E(C)|]+|C|
 \le c(|C|-1)+|C|=O(|C|).
\]
Then \Cref{thm:strong-rayleigh-sampling} combined with the nearly linear time sampler \cite[Theorem~2]{ALOGVV2021} gives the $\widetilde O(|C|)$ time for sampling a weighted spanning tree.
\end{proof}

\begin{lemma}
\label{lem:wired-local-first-exit}
After $\widetilde O(m)$ shared preprocessing, the random set $F$ in
\Cref{lem:idla-support-superset} can be generated in
$\widetilde O(k^2)$ expected time, with total-variation error
at most $n^{-O(1)}$.
\end{lemma}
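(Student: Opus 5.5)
We show that \Cref{alg:wired-tree-full-set} is a faithful simulation, up to total-variation error $n^{-O(1)}$, of the sequential particle process of \Cref{lem:idla-support-superset}, and then bound its cost.

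\textbf{Correctness.} Consider one particle released at $s$ while the set $C$ of fully occupied vertices is current.

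If $s\notin C$, the particle settles at $s$ immediately, which matches the branch $y\leftarrow s$.

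Otherwise, the walk moves inside $C$ until it first steps out of $C$. The settling vertex is exactly the first vertex outside $C$ that the walk visits. This is because every vertex of $C$ is full, and every vertex outside $C$ holds fewer than $r$ particles.

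To sample this exit, contract $V\setminus C$ into the single node $g$, giving the wired graph $W_C$. The walk in $G$ until it leaves $C$ is then the walk in $W_C$ until it hits $g$. The last traversed edge $(x,g)$ has the law of the exit edge.

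By Wilson's algorithm, with $g$ as root and $s$ processed first, the unique $s$-to-$g$ path in a weighted random spanning tree of $W_C$ is distributed as the loop-erasure of the walk from $s$ stopped at $g$. Loop erasure keeps the final step of the walk, since the last edge enters $g$ for the first time and is never erased. Hence the last edge of the tree path is distributed exactly as the exit edge $(x,g)$.

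Given $x$, the walk's step out of $C$ goes to $y\notin C$ with probability $w_{xy}/b_x$. This is the resampling step in the algorithm.

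So each iteration samples the exact conditional law of the next settling vertex given the history. The sampler of \Cref{lem:wired-tree-sampler} is only $n^{-O(1)}$-close in total variation, and we use it $J=\widetilde O(k)$ times. By a union/coupling argument over iterations, the output $F$ is within total variation $J\cdot n^{-O(1)}=n^{-O(1)}$ of the true process.

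\textbf{Running time.} The leverage-score upper bounds $u_e$ come from \Cref{lem:res-sketch} in $\widetilde O(m)$ shared time, once for all sources.

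By \Cref{lem:idla-support-superset}, $|C|\le 2k+1$ throughout. Each iteration therefore costs:
\begin{itemize}
\item $\widetilde O(|C|)=\widetilde O(k)$ expected time for the tree sample, by \Cref{lem:wired-tree-sampler};
\item time to locate the $s$-to-$g$ path in the tree, which is $O(|C|)$ by a traversal;
\item time to sample $y$ given $x$.
\end{itemize}
Summing over $J=\lceil 2rk\rceil=\widetilde O(k)$ iterations gives $\widetilde O(k^2)$ expected time, as claimed.

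\textbf{Main obstacle: maintaining $W_C$ and sampling $y$ locally.} The graph $W_C$ must be handled in time depending on $|C|$, not on $m$ or on vertex degrees. High-degree vertices in $C$ could make $E(C)$ or the $b_x$ expensive to enumerate.

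The plan is to keep $W_C$ implicit, with the following data structures:
\begin{itemize}
\item Precompute, for each vertex, a dynamic weighted sampling structure over its incident edges, such as a balanced tree of prefix sums, in $\widetilde O(m)$ shared time.
\item Maintain $b_x=d_x-\sum_{v\in C}w_{xv}$ by subtracting the weight of each edge to $C$ as vertices join.
\item Sample $y$ by rejection: draw a neighbor of $x$ proportional to $w_{xy}$ and reject if $y\in C$. Alternatively, temporarily delete $C$-neighbors from $x$'s structure, which costs $O(|C|\log n)$ per insertion into $C$ and $\widetilde O(k^2)$ in total.
\end{itemize}

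The sampler of \cite{ALOGVV2021}, through the reduction in \Cref{thm:strong-rayleigh-sampling}, needs only queries on the $O(|C|)$ candidate edges selected by the marginal upper bounds. Since $|E(C)|\le\binom{|C|}{2}=O(k^2)$, the total maintenance stays within $\widetilde O(k^2)$.

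If rejection sampling proves too slow when $b_x$ is tiny relative to $d_x$, I would fall back to the deletion-based structure, which has deterministic $\widetilde O(k^2)$ total update cost.
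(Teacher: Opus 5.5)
Your proposal is correct and follows the paper's proof. You simulate the particle process with \Cref{alg:wired-tree-full-set}, use Wilson's algorithm to identify the tree's $s$-to-$g$ path with the loop-erased walk (whose last edge is the walk's exit edge), resample $y$ with probability $w_{xy}/b_x$, and charge $\widetilde O(|C|)=\widetilde O(k)$ per iteration over $J=\widetilde O(k)$ iterations plus $\widetilde O(k^2)$ for maintaining $W_C$. Your added bookkeeping (summing the total-variation error over the $J$ sampler calls, and the data structures for $b_x$ and for sampling $y$) fills in details the paper only asserts; commit to the deletion-based structure, since the rejection option's expected cost per particle equals the expected number of visits to boundary vertices of $C$ before exit, which is not $\widetilde O(k)$ in general weighted graphs.
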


\begin{proof}
Run \Cref{alg:wired-tree-full-set}. The computation of resistance sketches costs $\widetilde O(m)$ by \Cref{lem:res-sketch}. If $J\ge rn$, return $V$. Otherwise we return a connected set $C$ with $|C|\le J/r\le2k+1$. 
For an exact weighted tree sample on $W_C$, the Wilson's algorithm \cite{Wilson1996} identifies
its $s$-to-$g$ path with a loop-erased random walk stopped at $g$.
Loop erasure preserves the last edge $(x,g)$, and sampling
$y\notin C$ with probability $w_{xy}/b_x$ recovers the original walk's
first exit. Thus the occupancy updates have exactly the distribution in
\Cref{lem:idla-support-superset}. By \Cref{lem:wired-tree-sampler},
sampling the tree and sampling $y$ take
$\widetilde O(|C|)$ time per query. The at most $J$ queries cost
$\widetilde O(Jk)$, and updating $W_C$ costs $\widetilde O(k^2)$. So the total expected time is $\widetilde O(k^2)$.
\end{proof}

\begin{theorem}[{\cite[Theorem~1.1]{ChenPengWang2021}}]
\label{thm:cpw-flow-diffusion}
Let $H$ be a connected weighted graph with $N$ vertices and $M_H$ edges,
and let $d\in\mathbb R^N$ satisfy $\one^Td\ge0$. Write
\[
 \Psi(x)=\tfrac12x^TL_Hx+d^Tx,\qquad
 \Psi^*=\min_{x\ge0}\Psi(x).
\]
For $0<\varepsilon<1$,
one can compute a feasible $x\ge0$ satisfying
$\Psi(x)\le\Psi^*/(1+\varepsilon)$ with high probability in
$\widetilde O(M_H)$ time.
\end{theorem}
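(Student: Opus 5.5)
Since this statement is quoted from \cite{ChenPengWang2021}, the plan is to rederive it by casting the obstacle problem as a separable convex flow problem and calling a fast flow solver. It is not proved from earlier results in this paper.

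\textbf{Step 1: basic structure.} First I would record the facts that make the guarantee meaningful. Taking $x=0$ gives $\Psi^*\le 0$. Along $x=c\one$ we have $\Psi(c\one)=c\,\one^Td$, so the condition $\one^Td\ge0$ is exactly what rules out unboundedness in the kernel direction of $L_H$. Because $H$ is connected, $L_H$ is positive definite on $\one^\perp$, and coercivity on the feasible cone then gives existence of a minimizer. The optimality (KKT) conditions are
\[
L_Hx+d\ge0,\qquad x\ge0,\qquad x_v(L_Hx+d)_v=0 .
\]
When $\Psi^*=0$, the point $x=0$ already satisfies $\Psi(x)\le\Psi^*/(1+\varepsilon)$. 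So the real task is to reach relative accuracy $\varepsilon$ when $\Psi^*<0$.

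\textbf{Step 2: dual flow formulation.} Write $L_H=B^TWB$ and introduce a flow $f=WBx$. Lagrangian duality turns $\min_{x\ge0}\Psi$ into a problem of the form
\[
\max_f\; -\tfrac12 f^TW^{-1}f\quad\text{subject to}\quad B^Tf+d\ge0 .
\]
This is an $\ell_2$-energy flow problem in which each vertex $v$ has source $d_v$ and may absorb an arbitrary nonnegative amount of excess. I would model the absorption by adding a super-sink with uncapacitated, zero-cost edges, which turns the inequality constraints into equality (flow-conservation) constraints. The result is a single-commodity flow problem with separable convex (quadratic) edge costs on $O(M_H)$ edges. Running the almost-linear convex-flow interior point method, or in the quadratic case a Laplacian-solver–based IPM, gives a primal--dual pair with additive duality gap $\delta$ in $\widetilde O(M_H\log(1/\delta))$ time. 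In that solver, every linear system is an SDDM system handled by \Cref{lem:sdd}. I would then recover $x$ from the dual potentials and clip it to $x\ge0$; monotonicity of the M-matrix structure should ensure that clipping does not increase $\Psi$.

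\textbf{Step 3: additive to relative error.} This is the step I expect to be the main obstacle. We need $\delta\le\varepsilon|\Psi^*|/2$, so that additive accuracy $\delta$ implies $\Psi(x)\le\Psi^*/(1+\varepsilon)$, and the running time must stay $\widetilde O(M_H)$. My first attempt is to lower bound $|\Psi^*|$ by a one-vertex test point. If some $d_v<0$, evaluate $\Psi$ at $t e_v$ and optimize over $t$ to get $\Psi^*\le-d_v^2/(2(L_H)_{vv})$. Under polynomially bounded weights and entries this bounds $|\Psi^*|$ below by $n^{-O(1)}$, so $\log(1/\delta)=O(\log n+\log(1/\varepsilon))$. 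If every $d_v\ge0$, then $\Psi\ge0$ on the feasible cone, so $\Psi^*=0$ and $x=0$ is returned. Failing that, I would fall back on an outer doubling or binary search over guesses of $|\Psi^*|$, re-solving each time with $\delta$ set relative to the current guess. This adds only logarithmically many solver calls.
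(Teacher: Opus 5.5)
\textbf{The gap is in Step 2, not Step 3.} You claim that a Laplacian-solver-based interior point method reaches duality gap $\delta$ in $\widetilde O(M_H\log(1/\delta))$ time. That is not a known result, and it is essentially the theorem you are trying to prove. Consider a path-following method on the $N$ constraints $x\ge0$. Its Newton systems $L_H+\mu X^{-2}$ are indeed SDDM and can be handled by \Cref{lem:sdd}, but the standard analysis gives $O(\sqrt N\log(1/\delta))$ Newton steps, so the total is $\widetilde O(M_H\sqrt N\log(1/\delta))$. Flow-side IPMs in the Daitch--Spielman style pay an $\widetilde O(\sqrt{M_H})$ iteration count. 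The almost-linear convex-flow IPM runs in $M_H^{1+o(1)}$ time, and the $M_H^{o(1)}$ factor is not polylogarithmic. Removing exactly this overhead in the presence of the inequality constraints is the whole content of \cite{ChenPengWang2021}. They adapt the recursive-preconditioning framework of nearly-linear Laplacian solvers, which requires vertex elimination under the constraints $x\ge0$, a new family of ultra-sparsifiers, and an accelerated proximal-gradient method with inexactly computed proximal steps. Your Step 2 therefore assumes the statement at its only nontrivial point. The difference also matters downstream in this paper. The theorem is applied to $H_F$ with $N=O(k)$ and $M_H=O(k^2)$, so a per-source cost of $\widetilde O(k^{5/2})$ or $k^{2+o(1)}$ would not give the $\widetilde O(n\lambda^{-2})$ bound of \Cref{thm:single-source-potential}.

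\textbf{Two smaller points.} First, the clipping step is not justified as stated. Coordinatewise clipping is $1$-Lipschitz, so it does not increase $x^TL_Hx$. However, it changes $d^Tx$ by $\sum_{v:x_v<0}d_v|x_v|$, which is positive whenever $d_v>0$. In a primal--dual method you should instead read $x$ off the dual-feasible iterates, which are already nonnegative. Second, Step 3 is routine rather than the main obstacle. Your one-vertex test point already gives $|\Psi^*|\ge\max_{v:d_v<0}d_v^2/(2(L_H)_{vv})$. With polynomially bounded inputs, as in the application where $d_u=\lambda-\mathbf 1_{u=s}$ and $d_g=1$, you can take $\delta$ to be $\Theta(\varepsilon)$ times that bound, and no doubling search is needed.

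\textbf{How the paper treats it.} The paper gives no proof of this statement; it imports \cite{ChenPengWang2021} as a black box, and that is the appropriate treatment. The only thing left to check is that the $\log(1/\varepsilon)$ and weight-ratio dependences hidden in $\widetilde O$ stay polylogarithmic for the choice $\varepsilon=\frac{\xi^2}{4}(w_{\min}/n)^2$.
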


\begin{lemma}[Rounding without losing the probability residual]
\label{lem:positive-residual-rounding}
Let $x\ge0$ satisfy $\|x-z_s^*\|_\infty\le\xi$, with
$\xi\le\lambda/(12d_{\max})$. Define
\[
 \widehat z_v=\begin{cases}x_v,&x_v>2\xi,\\0,&x_v\le2\xi.
 \end{cases}
\]
Then $\supp\widehat z\subseteq A_s^*$,
$\|\widehat z-z_s^*\|_\infty\le3\xi$, and
\[
 0\le e_s-L\widehat z\le\tfrac32\lambda\one,
 \qquad \one^T(e_s-L\widehat z)=1.
\]
\end{lemma}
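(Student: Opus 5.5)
The plan is to compare everything with the exact optimizer $z_s^*$ and use the structure from \Cref{lem:obstacle-structure-main}: $0\le q_s^*\le\lambda\one$, $\one^Tq_s^*=1$, and $(q_s^*)_v=\lambda$ on $A_s^*=\supp z_s^*$. Set $\Delta=z_s^*-\widehat z$ and $\widehat q=e_s-L\widehat z$. Since $\widehat q=q_s^*+L\Delta$, every claim about $\widehat q$ reduces to an entrywise bound on $L\Delta$ together with the known properties of $q_s^*$.

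\textbf{Support and $\ell_\infty$ error.} Take first $v\notin A_s^*$. Then $(z_s^*)_v=0$, so $x_v\le\xi\le2\xi$ and hence $\widehat z_v=0$. This gives $\supp\widehat z\subseteq A_s^*$, and also $\Delta_v=0$ off $A_s^*$. For the error there are two cases. If $x_v>2\xi$, then $|\widehat z_v-(z_s^*)_v|=|x_v-(z_s^*)_v|\le\xi$. If $x_v\le2\xi$, then $\widehat z_v=0$ and $0\le(z_s^*)_v\le x_v+\xi\le3\xi$. Either way $\|\Delta\|_\infty\le3\xi$.

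\textbf{Perturbation of the residual.} For any vertex $v$,
\[
 |(L\Delta)_v|=\Bigl|d_v\Delta_v-\sum_{u}w_{uv}\Delta_u\Bigr|\le 2d_v\|\Delta\|_\infty\le 6d_{\max}\xi\le\tfrac{\lambda}{2},
\]
using the hypothesis $\xi\le\lambda/(12d_{\max})$. This yields the upper bound uniformly over all $v$, including $v=s$:
\[
 \widehat q_v=(q_s^*)_v+(L\Delta)_v\le\lambda+\tfrac{\lambda}{2}=\tfrac32\lambda.
\]
The sum identity follows from $\one^TL=0$, which gives $\one^T\widehat q=\one^Te_s=1$.

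\textbf{Nonnegativity.} I split into two cases.
\begin{enumerate}[label=(\roman*)]
\item If $v\in\supp\widehat z$, then $v\in A_s^*$, so $(q_s^*)_v=\lambda$. Hence $\widehat q_v\ge\lambda-\lambda/2>0$.
\item If $v\notin\supp\widehat z$, then $\widehat q_v=(e_s)_v+\sum_u w_{uv}\widehat z_u\ge0$, because $\widehat z\ge0$.
\end{enumerate}
For $v=s$ case (ii) gives $\widehat q_s\ge1$. This would contradict $\widehat q_s\le\tfrac32\lambda<1$ (as $\lambda<1/2$), so in fact $s\in\supp\widehat z$. Either way nonnegativity holds.

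The only delicate point is making the perturbation bound $|(L\Delta)_v|\le\lambda/2$ cover the vertices where $\widehat z$ is zeroed out, including $s$. The uniform bound above handles this directly, so I do not expect any real obstacle.
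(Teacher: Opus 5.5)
Your proposal is correct and follows essentially the same argument as the paper: the $3\xi$ entrywise error bound, the uniform perturbation estimate $\|L(\widehat z-z_s^*)\|_\infty\le 6d_{\max}\xi\le\lambda/2$ for the upper bound, the split into retained coordinates (residual at least $\lambda/2$ because $(q_s^*)_v=\lambda$ there) and discarded coordinates (handled by the Laplacian sign pattern), and $\one^TL=0$ for the total mass. Your extra observation that $s\in\supp\widehat z$ is valid, but the statement does not need it.
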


\begin{proof}
A retained coordinate has $(z_s^*)_v>\xi>0$, so its exact residual
equals $\lambda$. A discarded coordinate has $(z_s^*)_v\le3\xi$.
Thus the infinity-norm error is at most $3\xi$, and
\[
 \|L(\widehat z-z_s^*)\|_\infty
 \le6d_{\max}\xi\le\lambda/2.
\]
Retained coordinates therefore have residual in
$[\lambda/2,3\lambda/2]$. At a discarded coordinate,
$\widehat z_v=0$ and the Laplacian sign pattern directly gives
$(e_s-L\widehat z)_v=\mathbf1_{v=s}+\sum_uw_{vu}\widehat z_u\ge0$;
the same error bound gives its upper bound. The sum follows from
$\one^TL=0$.
\end{proof}

\begin{algorithm}
\caption{Computing the $\ell_1$-regularized resistance sketch}
\label{alg:l1-resistance-sketch}
\KwIn{Connected weighted graph $G$, $1/n<\lambda<1/2$, potential accuracy $\eta$}
\KwOut{Sparse potentials $\{\widetilde z_s\}_{s\in V}$}
Precompute the leverage scores $\ell_G(e)\le u_e\le c\ell_G(e)$ for constant error $c$ by \Cref{lem:res-sketch}\;
\For{$s\in V$}{
  \tcp{Step 1: discover a candidate support.}
  $F\leftarrow$ \Cref{alg:wired-tree-full-set}$(G,s,\lambda, \{u_e\}_{e\in E(G)})$\;
  \tcp{Step 2: solve the original objective on $F$.}
  Construct a graph $H_F$ on $F\cup\{g\}$, with internal edges $E(F)$ and edge $(u,g)$ with weight $w_{ug}=\beta_u=(L_{FF}\one_F)_u$ \;
  Apply \Cref{thm:cpw-flow-diffusion} on $H_F$ and vector $d$ with $d_u=\lambda-\mathbf1_{u=s}$ for $u\in F$ and $d_g=1$. We solve a vector $x_s$ on $F\cup\{g\}$\;
  \tcp{Step 3: round while preserving a nonnegative residual.}
  Set $(\widetilde z_s)_v\leftarrow(x_s)_v$ if $(x_s)_v>2\xi$,
  and $(\widetilde z_s)_v\leftarrow 0$ otherwise\;
}
\Return{$\{\widetilde z_s\}_{s\in V}$ as the $\ell_1$-regularized resistance sketch}.
\end{algorithm}

Finally, we put things together and show that \Cref{alg:l1-resistance-sketch} is a $\widetilde{O}(m+n\lambda^{-2})$ time algorithm for computing the $\ell_1$-regularized resistance sketch.

\begin{proof}[Proof of \Cref{thm:single-source-potential}]
Consider the iteration for source $s$ in \Cref{alg:l1-resistance-sketch}.
By \Cref{lem:idla-support-superset,lem:wired-local-first-exit}, it finds
$F\supseteq A_s^*$ with $|F|=O(k)$ in $\widetilde O(k^2)$ time
after shared $\widetilde O(m)$ preprocessing. Next we apply \Cref{thm:cpw-flow-diffusion} to solve the obstacle objective on $H_F$ in $\widetilde{O}(nnz(L_{FF}))=\widetilde O(k^2)$ time. Then we round the solution  by \Cref{lem:positive-residual-rounding} to obtain the vector $\widetilde z_s$. As a result, computing all vectors $\{\widetilde z_s\}_{s\in V}$ takes $\widetilde{O}(m+n\lambda^{-2})$ time.

We next analyze the error. By \Cref{thm:cpw-flow-diffusion}, we have that the approximate solution $x_s$ satisfies
\[
 0\le\Phi_s(x_s)-\Phi_s(z_s^*)
 \le\frac{\eps}{1+\eps}|\Phi_s(z_s^*)|\le 2\eps\frac{n}{w_{\min}}
\]
since $\Phi_s(z_s^*)\le n/w_{\min}$. The variational inequality gives
\[
 \|x_s-z_s^*\|_L^2\le 4\eps\frac{n}{w_{\min}}.
\]

Next, by the norm inequality, we have that
\[
 \|x_s-z_s^*\|_\infty\le \sqrt{\frac{n}{w_{\min}}}\Vert x_s-z_s^*\|_L\le 2\sqrt{\eps} \left(\frac{n}{w_{\min}}\right)\le \xi
\]
by setting the accuracy $\eps=\frac{\xi^2}{4} \left(\frac{w_{\min}}{n}\right)^2$. Then we apply \Cref{lem:positive-residual-rounding} to obtain the vector $\widetilde z_s$. The resulting vector has
all guarantees of \Cref{thm:single-source-potential}, which finishes the proof.
\end{proof}

\section{From $\ell_1$-resistance to resistance sampler.}

In this section we prove \Cref{thm:l1-resistance-sampler}. We consider the original graph $G$, with Laplacian $L$. Fix the potentials
$\{\widetilde z_s\}_{s\in V}$ supplied by \Cref{thm:single-source-potential}
with accuracy $\eta$ sufficiently small, and write $\widetilde q_s=e_s-L\widetilde z_s$ and
$A_s=\supp\widetilde z_s$. Each $A_s$ is nonempty and has size at most
$\lambda^{-1}$. For $s\ne t$, put $b=b_{s,t}$ and $x=\widetilde z_s-\widetilde z_t$,
and define
\[
 E_{s,t}:=(\widetilde q_s-\widetilde q_t)^TL^\dagger(\widetilde q_s-\widetilde q_t),
 \qquad
 S_{s,t}:=x^T(\widetilde q_s-\widetilde q_t).
\]
Since $\widetilde q_s-\widetilde q_t=b-Lx$ and $b\perp\one$, expansion gives
\[
 E_{s,t}=R_{s,t}-2b^Tx+x^TLx,
 \qquad S_{s,t}=b^Tx-x^TLx.
\]
Consequently, we have that
\[
 R_{s,t}=b^Tx+E_{s,t}+S_{s,t}
        =R_{s,t}^\lambda+E_{s,t}+S_{s,t}+O(\eta).
\]
Here $|b^Tx-R_{s,t}^\lambda|\le4\eta$ by \Cref{thm:l1-resistance-error-bound}.
Our next goal is to estimate $E_{s,t}$ and $S_{s,t}$ separately and
add these estimates to estimate the resistance $R_{s,t}$.  Before we proceed, we first provide two useful lemmas.

\begin{lemma}\label{lem:l1-resistance-domination}
For $s\ne t$, $0\le R_{s,t}^\lambda\le R_{s,t}$.
\end{lemma}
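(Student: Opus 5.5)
The plan is to redo the decomposition displayed just above, but with the exact optimizers $z_s^*,z_t^*$ and exact residuals $q_s^*,q_t^*$ in place of their approximations. Set $b=b_{s,t}$ and $x=z_s^*-z_t^*$, so that $R^\lambda_{s,t}=b^Tx$. Then $q_s^*-q_t^*=b-Lx$. Define
\[
 E^*:=(q_s^*-q_t^*)^TL^\dagger(q_s^*-q_t^*),\qquad S^*:=x^T(q_s^*-q_t^*).
\]
Since $b\perp\one$ and $L^\dagger L$ is the projection onto $\one^\perp$, expanding these exactly as before gives $E^*=R_{s,t}-2b^Tx+x^TLx$ and $S^*=b^Tx-x^TLx$. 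Consequently
\[
 R_{s,t}=b^Tx+E^*+S^*,\qquad b^Tx=x^TLx+S^*.
\]
Both inequalities of the lemma then follow once we know $E^*\ge0$ and $S^*\ge0$. The bound $E^*\ge0$ is immediate because $L^\dagger\succeq0$.

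The main step is showing $S^*\ge0$, and I would do it using the obstacle structure of \Cref{lem:obstacle-structure-main}. Expanding,
\[
 S^*=z_s^{*T}q_s^*-z_s^{*T}q_t^*-z_t^{*T}q_s^*+z_t^{*T}q_t^*.
\]
Since $q_s^*=\lambda$ on $\supp z_s^*$, we have $z_s^{*T}q_s^*=\lambda\one^Tz_s^*$. Since $z_s^*\ge0$ and $q_t^*\le\lambda\one$, we also have $z_s^{*T}q_t^*\le\lambda\one^Tz_s^*$. Together these give $z_s^{*T}q_s^*-z_s^{*T}q_t^*\ge0$. The symmetric argument gives $z_t^{*T}q_t^*-z_t^{*T}q_s^*\ge0$, so $S^*\ge0$.

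The conclusion is then direct. The upper bound follows from $R^\lambda_{s,t}=b^Tx=R_{s,t}-E^*-S^*\le R_{s,t}$. The lower bound follows from $R^\lambda_{s,t}=x^TLx+S^*\ge0$, using $L\succeq0$.

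There is no real obstacle here beyond checking the algebra of the two expansions. One should be careful that these are the exact, untilded versions of the identities, which hold without the $O(\eta)$ error term.
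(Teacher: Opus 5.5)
Your argument is correct and is essentially the paper's: $S^*\ge0$ is precisely the paper's inequality $y^TLy\le b^Ty$. You obtain it from complementary slackness and $q_t^*\le\lambda\one$ rather than by adding the two variational inequalities, but these are the same optimality conditions unpacked. For the upper bound you use $E^*=\|b-Lx\|_{L^\dagger}^2\ge0$ in place of the paper's Cauchy--Schwarz step $(b^Ty)^2\le R_{s,t}\,y^TLy\le R_{s,t}\,b^Ty$; both rest on positive semidefiniteness and either suffices, and yours has the small bonus of exhibiting the gap $R_{s,t}-R^\lambda_{s,t}=E^*+S^*$ explicitly.
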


\begin{proof}
Put $y=z_s^*-z_t^*$. By the optimal condition of $z_s^*$ and $z_t^*$, $(Lz_s^*-e_s+\lambda \one)^T(z_t^*-z_s^*)\ge 0$ and similarly $(Lz_t^*-e_t+\lambda \one)^T(z_s^*-z_t^*)\ge 0$. Adding these two inequalities gives $y^TLy\le b^Ty$. Therefore,
$0\le y^TLy\le b^Ty=R_{s,t}^\lambda$. By Cauchy--Schwarz,
\[
 (R_{s,t}^\lambda)^2=(b^Ty)^2
 \le (b^TL^\dagger b)y^TLy\le R_{s,t}R_{s,t}^\lambda,
\]
which proves the claim.
\end{proof}

\begin{lemma}\label{lem:l1-resistance-aggregate-error}
It holds that
\(
 \sum_{e\in E}w_e(R_e-R_e^\lambda)=\lambda n-1.
\)
\end{lemma}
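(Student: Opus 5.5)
The plan is a trace computation that combines Foster's theorem with the KKT identity $(q_s^*)_s=\lambda$. By \Cref{lem:foster}, $\sum_{e\in E}w_eR_e=n-1$. It therefore suffices to show
\[
 \sum_{e\in E}w_eR_e^\lambda=n-\lambda n,
\]
since then $\sum_e w_e(R_e-R_e^\lambda)=(n-1)-(n-\lambda n)=\lambda n-1$.

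\emph{Step 1: rewrite each regularized resistance as a quadratic form.} Let $Z$ be the $n\times n$ matrix whose $s$-th column is $z_s^*$, so that $Z_{ij}=(z_j^*)_i$. For an edge $e=(s,t)$ with $b_e=e_s-e_t$, expanding the definition gives
\[
 R_{s,t}^\lambda=b_e^T(z_s^*-z_t^*)=Z_{ss}-Z_{ts}-Z_{st}+Z_{tt}=b_e^TZb_e .
\]
This identity holds even if $Z$ is not symmetric, since $b_e^TZb_e$ picks up both off-diagonal entries $Z_{st}$ and $Z_{ts}$. Note also that $R_{s,t}^\lambda=R_{t,s}^\lambda$, so the orientation chosen for each edge is irrelevant.

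\emph{Step 2: sum over edges and take the trace.} Summing Step 1 over all edges,
\[
 \sum_{e}w_eR_e^\lambda=\sum_e w_e\Tr\bigl(Zb_eb_e^T\bigr)=\Tr(ZL)=\Tr(LZ)=\sum_{s\in V}(Lz_s^*)_s .
\]

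\emph{Step 3: evaluate each diagonal term using the KKT conditions.} By definition of the residual, $Lz_s^*=e_s-q_s^*$. \Cref{lem:obstacle-structure-main} gives $(q_s^*)_s=\lambda$, because $s\in\supp z_s^*$. Hence $(Lz_s^*)_s=1-\lambda$ for every $s$, and summing over the $n$ vertices gives $n-\lambda n$, as required.

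There is no real obstacle in this argument. The only point needing care is Step 1, namely checking that the matrix $Z$ of potentials need not be symmetric for the trace identity $\sum_e w_e b_e^TZb_e=\Tr(LZ)$ to hold; the entrywise expansion in Step 1 settles this directly.
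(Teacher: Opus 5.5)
Your proof is correct and follows the paper's argument essentially verbatim. Like the paper, you write $\sum_e w_eR_e^\lambda=\Tr(LZ)$ with $Z=[z_s^*]_{s\in V}$, use the KKT identity $(q_s^*)_s=\lambda$ to evaluate this as $n(1-\lambda)$, and subtract from Foster's identity; your remark that $b_e^TZb_e=R_e^\lambda$ holds without symmetry of $Z$ is a correct detail the paper leaves implicit.
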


\begin{proof}
Let $Z=[z_s^*]_{s\in V}$. Since $R_e^\lambda=b_e^TZb_e$,
$L=\sum_{e\in E}w_eb_eb_e^T$, and $q_s^*(s)=\lambda$ by the KKT conditions,
\[
 \sum_{e\in E}w_eR_e^\lambda
 =\Tr(LZ)=\sum_s(1-q_s^*(s))=n(1-\lambda).
\]
Subtracting this from Foster's identity $\sum_{e\in E}w_eR_e=n-1$ by \Cref{lem:foster} proves the claim.
\end{proof}

As a result of \Cref{lem:l1-resistance-aggregate-error}, the $\ell_1$-regularized resistance can estimate the resistance up to aggregate error $\lambda n$. However, the $O(\lambda n)$ error is not enough for the proof of \Cref{thm:l1-resistance-sampler}: our requirement is the $O(\lambda^2 n)$ aggregate Pearson error. In the following, we provide a more refined estimator to the resistance by estimating the error terms $E_{s,t}$ and $S_{s,t}$ separately. After that, the aggregate error will be bounded by $O(\lambda^2 n)$. The time complexity is asymptotically same as computing the $\ell_1$-regularized resistance.

\textbf{Estimating $E_{s,t}$.}
First we provide our algorithm for estimating $E_{s,t}$, see \Cref{alg:original-energy-sketch}. We analyze the accuracy and variance of the estimation.

\begin{algorithm}
\caption{Estimating $E_{s,t}$}
\label{alg:original-energy-sketch}
\KwIn{$G$, the potentials $\{\widetilde z_s\}_{s\in V}$, $0<\varepsilon_H<1/2$, and $d\ge1$}
\KwOut{the estimation $\widehat E_{s,t}$}
\tcp{Building the index $\mathcal T$}
Construct a spectral sparsifier $H$ with
$(1-\eps_H)L\preceq H\preceq(1+\eps_H)L$, such that $H=C^TWC$ with $|E(H)|=O(n\eps_H^{-2})$\;
\For{$j=1,\ldots,d$}{
  Draw $g_j\sim\mathcal N(0,I_{|E(H)|})$ independently\;
  Using \Cref{lem:sdd} to solve
  $H\phi_j=C^TW^{1/2}g_j$ with error $n^{-O(1)}$, shift $\phi_j$ so that $\phi_j\perp\one$\;
  Compute $a_j=H\phi_j$\;
  \For{$s\in V$}{
    Store $\mathcal T_{j,s}=\phi_j(s)-a_j^T\widetilde z_s$\;
  }
}
\Return{index $\mathcal T\in \mathbb R^{d\times n}$.}\\
\tcp{Given the index $\mathcal T$ and a query pair $s\ne t$, sample size $d_{s,t}\le d$}
\Return{$\widehat E_{s,t}=\frac1{d_{s,t}}\sum_{j=1}^{d_{s,t}}(\mathcal T_{j,s}-\mathcal T_{j,t})^2$.}
\end{algorithm}


\begin{lemma}
\label{lem:original-energy-unbiased}
For every fixed pair $s\ne t$ and sample count $1\le d_{s,t}\le d$,
\Cref{alg:original-energy-sketch} takes $\widetilde O(m+n\eps_H^{-2}d)$ time to build the index and takes $O(d_{s,t})$ query time to return $\widehat E_{s,t}$ that satisfies
\[
 \mathbb E_{\mathcal T}\widehat E_{s,t}
 =E_{s,t}+O\!\left(\varepsilon_HE_{s,t}
       +\varepsilon_H\sqrt{E_{s,t}\,x^TLx}
       +\varepsilon_H^2x^TLx\right),
\]
\[
 \operatorname{Var}_{\mathcal T}(\widehat E_{s,t})
 \le\frac{C}{d_{s,t}}
       \bigl(\|b-Hx\|_{H^\dagger}^2+\lambda R_{s,t}\bigr)^2.
\]
\end{lemma}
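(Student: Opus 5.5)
The plan is to write each sample as a Gaussian linear functional and read off its mean and variance from the Gaussian structure. Ignore the solver error $n^{-O(1)}$ for now; it is absorbed into lower-order terms since all quantities are polynomially bounded. Then $\phi_j=H^\dagger C^TW^{1/2}g_j$ and $a_j=H\phi_j=\Pi C^TW^{1/2}g_j$, where $\Pi$ is the projection orthogonal to $\one$. Because $\one^T\widetilde z_s$ enters $a_j^T\widetilde z_s$ only through $\Pi$, we have $a_j^T\widetilde z_s=g_j^TW^{1/2}C\widetilde z_s$ up to the shift. Thus
\[
 \mathcal T_{j,s}-\mathcal T_{j,t}=g_j^TW^{1/2}C\bigl(H^\dagger b-x\bigr)=:g_j^Tv,
\]
with $v=W^{1/2}C(H^\dagger b-x)$ a fixed vector given the potentials. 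Each summand $(g_j^Tv)^2$ is then $\|v\|^2$ times a $\chi^2_1$ variable, with the $d_{s,t}$ samples independent. This gives $\mathbb E\widehat E_{s,t}=\|v\|^2$ and $\operatorname{Var}\widehat E_{s,t}=2\|v\|^4/d_{s,t}$. Also $\|v\|^2=(H^\dagger b-x)^TH(H^\dagger b-x)=\|b-Hx\|_{H^\dagger}^2$, using $b\perp\one$ and $HH^\dagger b=b$.

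For the running time, building $H$ costs $\widetilde O(m)$. Each of the $d$ rounds does one solve on $H$, costing $\widetilde O(n\eps_H^{-2})$, and then computes $a_j^T\widetilde z_s$ for all $s$. That last step costs $\sum_s|\supp\widetilde z_s|\le n\lambda^{-1}$, which is $O(n\eps_H^{-2})$ in the regime of use, or else it is charged as the paper does. A query reads $2d_{s,t}$ stored entries.

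\textbf{Bias.} Write $\|v\|^2=b^TH^\dagger b-2b^Tx+x^THx$ and compare it with $E_{s,t}=b^TL^\dagger b-2b^Tx+x^TLx$. The cleaner route is to set $y=b-Lx$, so that $E_{s,t}=y^TL^\dagger y$ and $b-Hx=y-(H-L)x$. Then
\[
 \|b-Hx\|_{H^\dagger}^2=y^TH^\dagger y-2y^TH^\dagger(H-L)x+x^T(H-L)H^\dagger(H-L)x .
\]
The spectral guarantee gives $|y^TH^\dagger y-y^TL^\dagger y|\le O(\eps_H)E_{s,t}$. Next, $\|H^{\dagger/2}(H-L)L^{\dagger/2}\|=O(\eps_H)$ since $-\eps_HL\preceq H-L\preceq\eps_HL$. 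Cauchy--Schwarz then bounds the cross term by $O(\eps_H)\sqrt{E_{s,t}\,x^TLx}$ and the last term by $O(\eps_H^2)x^TLx$, which is exactly the stated bias.

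\textbf{Variance.} The variance is $2\|b-Hx\|_{H^\dagger}^4/d_{s,t}$, which already implies the stated bound, since adding $\lambda R_{s,t}\ge0$ inside the square only weakens it. The $\lambda R_{s,t}$ term is presumably there for later convenience. If a cleaner form is needed, it can be shown via $x^TLx\lesssim R_{s,t}$ from \Cref{lem:l1-resistance-domination}.

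The main obstacle is the bookkeeping in the mean computation: making sure the $\one$-shift of $\phi_j$ and the projection in $a_j$ cancel correctly, so that $\mathcal T_{j,s}-\mathcal T_{j,t}$ is exactly $g_j^Tv$, and checking that the inexact solver contributes only $n^{-O(1)}$ perturbations. The rest is standard $\chi^2$ moments.
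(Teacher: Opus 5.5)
Your proposal is correct and matches the paper's proof. The paper writes the sample difference as $\phi_j^T(b-Hx)$ (your $g_j^Tv$) and gets the bias from the same decomposition with $y=b-Lx$, $D=H-L$ and Cauchy--Schwarz; the variance is the same $\chi^2_1$ bound $2\|b-Hx\|_{H^\dagger}^4/d_{s,t}$.

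Two small corrections. The $\lambda R_{s,t}$ term is not there for later convenience: it is the slack that absorbs the additive $n^{-O(1)}$ solver error in the variance. It can do this because it is polynomially bounded below (e.g. $R_{s,t}\ge 1/d_{\max}$), whereas $\|b-Hx\|_{H^\dagger}^4$ alone cannot when it is tiny. Separately, your observation that computing all $a_j^T\widetilde z_s$ costs $n\lambda^{-1}$ per round, which is $O(n\eps_H^{-2})$ only when $\eps_H=\Theta(\sqrt\lambda)$, is a fair point that the paper glosses over.
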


\begin{proof}
The runtime for building the spectral sparsifier is $\widetilde O(m+n\eps_H^{-2})$ by \cite{SpielmanSrivastava}. The runtime for building the index $\mathcal T$ is $\widetilde O(n\eps_H^{-2}d)$. Putting together, the total runtime is $\widetilde O(m+n\eps_H^{-2}d)$. We next prove the expectation. Without solver error, $\phi_j=H^\dagger C^TW^{1/2}g_j$ has mean zero
and covariance $H^\dagger HH^\dagger=H^\dagger$. Moreover,
\(
 \mathcal T_{j,s}-\mathcal T_{j,t}
 =\phi_j^T(b-Hx).
\)
Thus, without solver error,
\[
 \mathbb E_{\mathcal T}(\mathcal T_{j,s}-\mathcal T_{j,t})^2
 =(b-Hx)^TH^\dagger(b-Hx)
 =b^TH^\dagger b-2b^Tx+x^THx.
\]
For $r=b-Lx$ and $D=H-L$, we have $E_{s,t}=r^TL^\dagger r$ and
\[
 \mathbb E_{\mathcal T} \widehat E_{s,t}-E_{s,t}
 =r^T(H^\dagger-L^\dagger)r-2r^TH^\dagger Dx+x^TDH^\dagger Dx.
\]
The spectral bounds give
$|r^T(H^\dagger-L^\dagger)r|\le C\varepsilon_HE_{s,t}$,
$r^TH^\dagger r\le CE_{s,t}$, and
$x^TDH^\dagger Dx\le C\varepsilon_H^2x^TLx$.
By Cauchy--Schwarz for the cross term, 
\[|r^TH^\dagger Dx|=|(H^{\dagger/2}r)^T(H^{\dagger/2}Dx)|\le \|H^{\dagger/2}r\|_2\|H^{\dagger/2}Dx\|_2\le C\eps_H \sqrt{E_{s,t}x^TLx}.\]
 Therefore,
\[
 |\mathbb E_{\mathcal T} \widehat E_{s,t}-E_{s,t}|\le C\left(\varepsilon_HE_{s,t}
       +\varepsilon_H\sqrt{E_{s,t}\,x^TLx}
       +\varepsilon_H^2x^TLx\right).
\]

In the actual implenentation, we use \Cref{lem:sdd} to solve the linear system $H\phi_j=C^TW^{1/2}g_j$ with error $n^{-O(1)}$, we make the additional expectation error $O(\varepsilon_H^2R_{s,t})$ by setting the accuracy sufficiently small. This is absorbed by the preceding bound since $R_{s,t}=\|r+Lx\|_{L^\dagger}^2\le2E_{s,t}+2x^TLx$. Computing $\widehat E_{s,t}$ in the query process takes $O(d_{s,t})$ time.

Next we analyze the variance. With exact solves $\phi_j=H^\dagger C^TW^{1/2}g_j$ and $\mathcal T_{j,s}-\mathcal T_{j,t}=\phi_j^T(b-Hx)$, we have that $\mathcal T_{j,s}-\mathcal T_{j,t}\sim \mathcal N(0,\|b-Hx\|^2_{H^\dagger})$. So their squared values have variance
$2\|b-Hx\|^4_{H^\dagger}$. Hence
\[
 \operatorname{Var}_{\mathcal T}(\widehat E_{s,t})
 =\frac{2\|b-Hx\|^4_{H^\dagger}}{d_{s,t}}.
\]

The actual variance can be therefore bounded by $\frac{C}{d_{s,t}}\bigl(\|b-Hx\|_{H^\dagger}^2+\lambda R_{s,t}\bigr)^2$ when setting the accuracy of SDDM solver (\Cref{lem:sdd}) sufficiently small.
\end{proof}

\textbf{Estimating $S_{s,t}$.} Next we provide our algorithm for estimating $S_{s,t}$, see \Cref{alg:original-slack-query}. We analyze the accuracy and variance of the estimation.

\begin{algorithm}
\caption{Estimating $S_{s,t}$}
\label{alg:original-slack-query}
\KwIn{Stored potentials $\{\widetilde z_u\}$ with their supports $\{A_u\}$}
\KwOut{A sample $\widehat S_{s,t}$}
\tcp{Shared preprocessing part}
\For{$u\in V$}{
Set $\tau=\lambda/(64d_{\max})$. Consider the vector $\Psi_u$ with $\Psi_u(v)=\sqrt{\widetilde z_u(v)+\tau}-\sqrt\tau$. Round $\Psi_u(v)$ to a common dyadic grid of mesh at most $\sqrt\tau/8$\;
Store $\widetilde q_u(v)$ for $v\in A_u$ and compute the heavy sums $\sum_{a\in A_u:\,w_{va}\widetilde z_u(a)>\lambda^2}w_{va}\widetilde z_u(a)$\;
}
\tcp{Given a query pair $s\ne t$, sample size $d_{s,t}$}
\For{$j=1,\ldots,d_{s,t}$}{
Let $p_{s,t}\in \mathbb R^n$ such that $p_{s,t}(v)=\frac{(\Psi_s(v)-\Psi_t(v))^2}{\Vert \Psi_s-\Psi_t\Vert_2^2}$\;
Set $\pi_{s,t}(v)=\tfrac12p_{s,t}(v)
+\tfrac{\mathbf1_{\{v\in A_s\}}}{4|A_s|}
+\tfrac{\mathbf1_{\{v\in A_t\}}}{4|A_t|}$.
If $\Vert \Psi_s-\Psi_t\Vert_2=0$, instead use
$\pi_{s,t}(v)=\tfrac{\mathbf1_{\{v\in A_s\}}}{2|A_s|}
+\tfrac{\mathbf1_{\{v\in A_t\}}}{2|A_t|}$\;
Draw $v\sim\pi_{s,t}$ \;
\If{$v\in A_s\cap A_t$}{
  $X_j=(\widetilde z_s(v)-\widetilde z_t(v))
  (\widetilde q_s(v)-\widetilde q_t(v))/\pi_{s,t}(v)$\;
}
\Else{  
  Set $(i,u)=(s,t)$ if $v\in A_s\setminus A_t$, and $(i,u)=(t,s)$ otherwise\;
  Draw $a\sim A_u$ uniformly and set
  $Y_u(v)=\sum_{a\in A_u:\,w_{va}\widetilde z_u(a)>\lambda^2}
  w_{va}\widetilde z_u(a)+|A_u|w_{va}\widetilde z_u(a)
  \mathbf1_{\{w_{va}\widetilde z_u(a)\le\lambda^2\}}$\;
  $X_j=\widetilde z_i(v)
  (\widetilde q_i(v)-Y_u(v))/\pi_{s,t}(v)$
}
}
\Return{$\widehat S_{s,t}=\frac{1}{d_{s,t}}\sum_{j=1}^{d_{s,t}} X_j$}\;
\end{algorithm}

\begin{lemma}
\label{lem:original-slack-unbiased}
With unit-cost access to coordinate sampling to $p_{s,t}$,
after $\widetilde O(m+n\lambda^{-2})$ shared preprocessing,
\Cref{alg:original-slack-query} takes $O(d_{s,t})$ time to estimate $\widehat S_{s,t}$, satisfies
\[
 \mathbb E\widehat S_{s,t}=S_{s,t},\qquad
 \operatorname{Var}(\widehat S_{s,t})
 \le\frac{C\lambda^2}{d_{s,t}}
       \bigl(\|\Psi_s-\Psi_t\|_2^2+R_{s,t}\bigr)^2.
\]
Where $\Psi_u$ is defined in \Cref{alg:original-slack-query}, Line 2.
\end{lemma}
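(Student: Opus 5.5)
Since the $X_j$ are i.i.d., unbiasedness and the variance bound for $\widehat S_{s,t}$ reduce to one sample $X$. We show $\mathbb E X=S_{s,t}$ and $\mathbb E X^2\le C\lambda^2(\|\Psi_s-\Psi_t\|_2^2+R_{s,t})^2$; dividing by $d_{s,t}$ then gives the stated variance. Runtime is routine. The preprocessing stores $\widetilde q_u$ on $A_u$, the rounded $\Psi_u$, and the heavy sums. There are $n$ sources with $|A_u|\le\lambda^{-1}$, and each $v$ adjacent to $A_u$ has at most $\lambda^{-1}$ terms, which is within $\widetilde O(m+n\lambda^{-2})$ given \Cref{thm:single-source-potential}. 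Each query sample uses $O(1)$ hash lookups plus one unit-cost draw from $p_{s,t}$ (assumed), together with a draw from the uniform parts of $\pi_{s,t}$.

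\emph{Unbiasedness.} Write $x=\widetilde z_s-\widetilde z_t$ and split $S_{s,t}=\sum_v x(v)(\widetilde q_s(v)-\widetilde q_t(v))$ by the location of $v$.

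On $A_s\cap A_t$ the summand is estimated directly by importance weighting.

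On $A_s\setminus A_t$ we have $x(v)=\widetilde z_s(v)$. Also $\widetilde z_t(v)=0$, so
\[
\widetilde q_t(v)=\mathbf 1_{v=t}+\sum_{a\in A_t}w_{va}\widetilde z_t(a).
\]
Here $v\ne t$, because $t\in A_t$ (by \Cref{lem:positive-residual-rounding}, $t$ has value near $z_t^*(t)>0$, and we choose $\eta$ small enough that it survives). Therefore $\widetilde q_t(v)=\sum_{a\in A_t}w_{va}\widetilde z_t(a)$. The estimator $Y_t(v)$ is unbiased for this sum: heavy terms are summed exactly, and light terms are drawn by uniform sampling of $a$ and scaled by $|A_t|$.

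Outside $A_s\cup A_t$ the summand is zero, so no mass is needed there. The factor $1/\pi_{s,t}(v)$ is well defined because $\pi_{s,t}(v)\ge 1/(4|A_s|)>0$ on $A_s$. Combining the three cases gives $\mathbb E X=S_{s,t}$.

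\emph{Second moment.} We use
\[
\mathbb E X^2\le \sum_v \frac{x(v)^2\,\mathbb E[(\text{residual factor})^2]}{\pi_{s,t}(v)}.
\]

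First, the residual factor is $O(\lambda)$ in mean square. \Cref{thm:single-source-potential} gives $|\widetilde q_s(v)-\widetilde q_t(v)|\le 2\lambda$. For $Y_u(v)$, the light terms are at most $\lambda^2$, so $|A_u|\cdot\lambda^2\le\lambda$ bounds the sampled light term. Hence $\mathbb E Y_u(v)^2\le 2(\widetilde q_u(v))^2+2|A_u|\sum_{\text{light}}(w_{va}\widetilde z_u(a))^2\le O(\lambda^2)+2|A_u|\lambda^2\,\widetilde q_u(v)=O(\lambda^2)$.

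Second, we bound $x(v)^2/\pi_{s,t}(v)$ using the $p_{s,t}$ part of $\pi_{s,t}$. The key elementary inequality is
\[
|\widetilde z_s(v)-\widetilde z_t(v)|=|\Psi_s(v)-\Psi_t(v)|\,\bigl(\Psi_s(v)+\Psi_t(v)+2\sqrt\tau\bigr),
\]
which comes from $a-b=(\sqrt{a+\tau}-\sqrt{b+\tau})(\sqrt{a+\tau}+\sqrt{b+\tau})$. Therefore
\[
\frac{x(v)^2}{\pi_{s,t}(v)}\le 2\|\Psi_s-\Psi_t\|_2^2\,\bigl(\Psi_s(v)+\Psi_t(v)+2\sqrt\tau\bigr)^2.
\]
Summing over $v\in A_s\cup A_t$ gives
\[
\sum_v\frac{x(v)^2}{\pi_{s,t}(v)}\le O\Bigl(\|\Psi_s-\Psi_t\|_2^2\,\bigl(\|\widetilde z_s\|_1+\|\widetilde z_t\|_1+\tau\lambda^{-1}\bigr)\Bigr).
\]
Since $\tau\lambda^{-1}=1/(64 d_{\max})$, the last term is harmless.

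\emph{Main obstacle.} We expect the main difficulty to be converting $\|\widetilde z_s\|_1$ into a quantity comparable to $\|\Psi_s-\Psi_t\|_2^2+R_{s,t}$. That is what produces the squared form $(\|\Psi_s-\Psi_t\|_2^2+R_{s,t})^2$ after an AM--GM step. The mere $\ell_1$ bound is too weak, since $\|z_s^*\|_1$ can be of order $R$-diameter times $\lambda^{-1}$.

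Our first approach is to avoid the $\ell_1$ mass and pair each term with a localized quantity instead. The uniform parts of $\pi_{s,t}$ give $x(v)^2/\pi_{s,t}(v)\le 4|A_s|\,x(v)^2$. Combined with the $O(\lambda^2)$ residual factor and $|A_s|\le\lambda^{-1}$, this yields $\lambda\sum_v x(v)^2$. We then bound $x(v)^2\le (\widetilde z_s(v)+\widetilde z_t(v))\,|x(v)|$. We would bound $\widetilde z_s(v)\le z_s^*(s)+O(\eta)$, using that the maximum of an $\ell_1$-regularized potential is at the source (a superharmonicity argument from \Cref{lem:obstacle-structure-main}). We would then bound $z_s^*(s)$ by $R$-type quantities via $y^TLy\le R^\lambda_{s,t}\le R_{s,t}$ from \Cref{lem:l1-resistance-domination}, together with $\sum_v|x(v)|\,\lambda\lesssim |x^T(\widetilde q_s-\widetilde q_t)|$ plus boundary terms.

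The two routes are then combined: the $p_{s,t}$ part handles coordinates where $\Psi_s(v)-\Psi_t(v)$ dominates, and the uniform part handles the rest, with the $\sqrt\tau$ shift and grid rounding (mesh $\sqrt\tau/8$) changing $\Psi$ by at most a constant factor. If the pointwise control of $z_s^*(s)$ by $R_{s,t}$ fails, the fallback is to carry $\|\Psi_s-\Psi_t\|_2^2$ itself as the charged quantity, which the lemma explicitly permits. Then only $\lambda\sum_v x(v)^2/\pi_{s,t}(v)\cdot\lambda\lesssim\lambda^2\|\Psi_s-\Psi_t\|_2^2\,(\cdot)$ is needed, closing with $(a+b)^2$.
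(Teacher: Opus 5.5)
\textbf{The variance bound has a genuine gap.} Your unbiasedness argument is correct and is the paper's; you even make explicit the point $v\ne t$, which the paper uses silently. The gap is more basic than the $\|\widetilde z_s\|_1$ bookkeeping you flag. On $A_s\cap A_t$ you only use $|\widetilde q_s(v)-\widetilde q_t(v)|\le 2\lambda$, and with that information alone the bound you want cannot be reached. Take a long unweighted path, $\lambda$ small, and $s,t$ adjacent. Then $z_s^*(v)\approx\frac{\lambda}{2}(h-|v-s|)_+^2$ with $h\approx 1/(2\lambda)$. On the common support, of size $\approx 1/\lambda$, this gives $|x(v)|\approx\lambda(h-|v-s|)$ and $(\Psi_s(v)-\Psi_t(v))^2\approx\lambda/2$. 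Hence $\|\Psi_s-\Psi_t\|_2^2=\Theta(1)$, $\pi_{s,t}(v)=\Theta(\lambda)$ and $R_{s,t}=1$, so the target is $\Theta(\lambda^2)$. But the quantity your method bounds $\mathbb E X^2$ by is $\sum_v x(v)^2(2\lambda)^2/\pi_{s,t}(v)=\Theta(\lambda\sum_v x(v)^2)=\Theta(1)$. No manipulation of $x(v)^2/\pi_{s,t}(v)$ can rescue this, which is why every route stalls:
\begin{itemize}
\item the $p_{s,t}$ route leaves $\|\widetilde z_s\|_1=\Theta(\lambda^{-2})$;
\item the uniform route leaves $\lambda\sum_v x(v)^2=\Theta(1)$;
\item the hoped-for bound $z_s^*(s)\lesssim R_{s,t}$ fails, since $z_s^*(s)\approx 1/(8\lambda)$ here;
\item the fallback of charging $\|\Psi_s-\Psi_t\|_2^2$ still carries the uncontrolled factor.
\end{itemize}

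\textbf{The missing idea.} You need complementary slackness on the intersection, together with $\Psi_t(v)=0$ on the symmetric difference. With these the paper proves a \emph{pointwise} bound $|X_j|\le C\lambda(\|\Psi_s-\Psi_t\|_2^2+R_{s,t})$.

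For $v\in A_s\cap A_t$: from $\supp\widetilde z_u\subseteq\supp z_u^*$ and \Cref{lem:obstacle-structure-main} you get $q_s^*(v)=q_t^*(v)=\lambda$. Also $|\widetilde q_u(v)-q_u^*(v)|=|(L(\widetilde z_u-z_u^*))_v|\le 2d_{\max}\eta$. So the residual factor is $O(d_{\max}\eta)$, not $O(\lambda)$. Combined with $\pi_{s,t}(v)\ge\lambda/4$ and any polynomial bound on $|x(v)|$, this term is below $\lambda R_{s,t}$ once $\eta$ is polynomially small, using $R_{s,t}\ge 1/d_{\max}$.

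For $v\in A_s\setminus A_t$: your own factorization, now with $\Psi_t(v)=0$, gives $\widetilde z_s(v)\le C\bigl((\Psi_s(v)-\Psi_t(v))^2+\tau\bigr)$; rounding only changes constants. Combine this with
\[
\pi_{s,t}(v)\ge\tfrac18\,\frac{(\Psi_s(v)-\Psi_t(v))^2+\tau}{\|\Psi_s-\Psi_t\|_2^2+\tau(|A_s|+|A_t|)}
\quad\text{and}\quad |\widetilde q_s(v)-Y_t(v)|\le 3\lambda .
\]
This yields $|X_j|\le C\lambda(\|\Psi_s-\Psi_t\|_2^2+2\tau/\lambda)$, and $2\tau/\lambda=1/(32d_{\max})\le R_{s,t}$.

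\textbf{Runtime count.} A smaller point: your heavy-sum count does not bound the work, since $A_u$ may have many neighbours. The right count uses $\sum_{a\in A_u}\sum_{v\notin A_u}w_{va}\widetilde z_u(a)=\sum_{v\notin A_u}\widetilde q_u(v)\le 1$, so at most $\lambda^{-2}$ pairs are heavy.
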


\begin{proof}
In the preprocessing process, we store $\widetilde q_u(v)$ for $v\in A_u$ and we compute all the heavy sums $\sum_{a\in A_u:\,w_{va}\widetilde z_u(a)>\lambda^2}w_{va}\widetilde z_u(a)$. We show that they can be stored in static dictionaries in $\widetilde O(n\lambda^{-2})$ time.
Indeed, for each fixed $u$, evaluating $\widetilde q_u$ on $A_u$ takes
$O(\lambda^{-2})$ time, so doing this for every source costs
$O(n\lambda^{-2})$.  For the heavy sums, fix $u$ and note that the pair $(v,a)$ with $v\notin A_u$, $a\in A_u$ satisfies
\[
 \sum_{a\in A_u}\sum_{v\notin A_u}w_{va}\widetilde z_u(a)
 =\sum_{v\notin A_u}\widetilde q_u(v)\le1,
\]
so at most $O(\lambda^{-2})$ such terms can exceed $\lambda^2$.  Computing
these heavy sums for every source $u\in V$ therefore also costs
$O(n\lambda^{-2})$.

Next we compute the expectation. For $v\notin A_u$, we have that
\[
 \mathbb E Y_u(v)
 =\sum_{b\in A_u:\,w_{vb}\widetilde z_u(b)>\lambda^2}
       w_{vb}\widetilde z_u(b)
 +\sum_{b\in A_u:\,w_{vb}\widetilde z_u(b)\le\lambda^2}
       w_{vb}\widetilde z_u(b)
 =\widetilde q_u(v).
\]
On an intersection coordinate $v\in A_s\cap A_t$ we compute $ \frac{(\widetilde z_s(v)-\widetilde z_t(v))((\widetilde q_s)_v-(\widetilde q_t)_v)}{\pi_{s,t}(v)}$ exactly. On either
boundary $v\in A_s\setminus A_t$ or $v\in A_t\setminus A_s$ its conditional expectation is
$(\widetilde z_s(v)-\widetilde z_t(v))
 (\widetilde q_s(v)-\widetilde q_t(v))$.
Consequently, the expectation satisfies
\[
 \mathbb E\widehat S_{s,t}
 =\sum_{v\in A_s\cup A_t}\pi_{s,t}(v)
   \frac{(\widetilde z_s(v)-\widetilde z_t(v))((\widetilde q_s)_v-(\widetilde q_t)_v)}{\pi_{s,t}(v)}
 =S_{s,t},
\]
by the definition of $S_{s,t}$. The query time is clearly $O(d_{s,t})$ given unit cost access to $p_{s,t}$.

For the variance, our goal is to prove the upper bound $|X_j|\le C\lambda\bigl(\|\Psi_s-\Psi_t\|_2^2+R_{s,t}\bigr)$.
To this end, we first lower bound $\pi_{s,t}(v)$ for $v\in A_s\cup A_t$. By the definition of $\pi_{s,t}(v)$, we have that
\[
\pi_{s,t}(v)\ge \max\left\{\frac{(\Psi_s(v)-\Psi_t(v))^2}{2\|\Psi_s-\Psi_t\|_2^2},\frac{1}{4(|A_s|+|A_t|)}\right\}.
\]
Thus for some small constant $c\le 1/4$ and $\tau\le \lambda/(64d_{\max})$,
\[
 \pi_{s,t}(v)\ge
 c\frac{(\Psi_s(v)-\Psi_t(v))^2+\tau\mathbf1_{\{v\in A_s\cup A_t\}}}
        {\|\Psi_s-\Psi_t\|_2^2+\tau(|A_s|+|A_t|)}.
\]
Next we upper bound $\widetilde z_i(v)$ and $Y_u(v)$. We show that for $v\in A_s\setminus A_t$, 
\[
\widetilde z_i(v)\le C((\Psi_s(v)-\Psi_t(v))^2+\tau)\qquad \text{and} \qquad Y_u(v)\le C\lambda \qquad \text{for constant } C.
\]
To see this, we assume $\Psi_i(v)=\sqrt{\widetilde z_i(v)+\tau}-\sqrt\tau$ without rounding. Solving $\widetilde z_i(v)$ gives $\widetilde z_i(v)\le C((\Psi_s(v)-\Psi_t(v))^2+\tau)$ since $\Psi_t(v)=0$. The same error order preserves after rounding. By the definition of $Y_u(v)$, we clearly have $Y_u(v)\le C\lambda$. Consequently,
\[
 |X_j|\le C\lambda\bigl(\|\Psi_s-\Psi_t\|_2^2
                              +\tau(|A_s|+|A_t|)\bigr)\le C\lambda(\|\Psi_s-\Psi_t\|_2^2+R_{s,t})
\]
for $v\in A_s\setminus A_t$. The case $v\in A_t\setminus A_s$ is symmetric. On $v\in A_s\cap A_t$, we have that $q_s^*(v)=q_t^*(v)=\lambda$ by the KKT conditions. For approximate solutions by \Cref{thm:l1-resistance-error-bound}, we have that for $u\in \{s,t\}$,
\[
|\widetilde q_u(v)-q_u^*(v)|= |L(\widetilde z_u-z_u^*)_v|\le 2d_{\max}\eta.
\]
As a result, $|\widetilde q_s(v)-\widetilde q_t(v)|\le 4d_{\max}\eta$. Since $\pi_{s,t}(v)\ge c\lambda/2$, and $|\widetilde z_s(v)-\widetilde z_t(v)|\le R_{s,t}+2\eta$ by \Cref{lem:l1-resistance-domination} and \Cref{thm:l1-resistance-error-bound}, we have that $|X_j|\le C\lambda R_{s,t}$, by setting the accuracy $\eta=O(\lambda /d_{\max})$. So we porve the variance part.
\end{proof}

\begin{remark}\label{remark:query-p}
In the actual implementation, explicitly constructing $p_{s,t}$ by scanning
$A_s\cup A_t$ takes $O(\lambda^{-1})$ time, so we do not explicitly construct it.
Instead, in preprocessing, we partition the coordinates of every $\Psi_u$
according to a balanced binary tree of depth
$\ell=\lceil\log n\rceil$, and store a JL sketch of every block with accuracy
$\delta=1/(100\ell)$. These sketches take
$\widetilde O(n\lambda^{-1})$ preprocessing time and space.
For a query $(s,t)$ at a block $I$, write
\(
 M(I)=\sum_{v\in I}(\Psi_s(v)-\Psi_t(v))^2.
\)
Subtracting the two block sketches and computing the squared norm gives
an estimate $\widehat M(I)$ satisfying
\(
 (1-\delta)M(I)\le\widehat M(I)\le(1+\delta)M(I)
\)
with high probability. If $I$ has children $I_1,I_2$, enter $I_1$ with probability
\(
 \frac{\widehat M(I_1)}{\widehat M(I_1)+\widehat M(I_2)},
\)
and otherwise enter $I_2$, recursively until reaching a vertex $v$. Then we have that for the actual sample probability $\widehat{p}_{s,t}(v)$,
\[
 \left(\frac{1-\delta}{1+\delta}\right)^\ell p_{s,t}(v)
 \le\widehat p_{s,t}(v)\le
 \left(\frac{1+\delta}{1-\delta}\right)^\ell p_{s,t}(v).
\]
Since $((1+\delta)/(1-\delta))^\ell\le e^{3\delta\ell}<3/2$, this gives
\(
 \tfrac12p_{s,t}(v)\le\widehat p_{s,t}(v)\le\tfrac32p_{s,t}(v).
\)
Each query to $\widehat p_{s,t}$ takes $\widetilde O(1)$ time. The constant-factor approximation of $p_{s,t}$ is enough for subsequent analysis.
\end{remark}

Using independent randomness for the two queries, define the estimation of resistance
\[
 \widehat R^{(0)}_{s,t}
 =b_{s,t}^T(\widetilde z_s-\widetilde z_t)
       +\widehat E_{s,t}+\widehat S_{s,t}.
\]
For our analysis, define the error upper bound $B_{s,t}$ as
\[
 B_{s,t}=E_{s,t}+\frac{\lambda}{\Lambda}\|\Psi_s-\Psi_t\|_2^2
                               +\lambda R_{s,t},\qquad \Lambda=1+\log\!\left(1+\frac{n d_{\max}}{\lambda w_{\min}}\right).
\]
We prove that the estimation variance can be bounded by a funtion of $B_{s,t}^2$.

\begin{lemma}
\label{lem:original-correction-variance}
Take $\varepsilon_H=c\sqrt\lambda$ and
$1\le d_{s,t}\le d$. For all $s\ne t$ with probability $1-n^{-O(1)}$,
\[
 \lambda R_{s,t}\le B_{s,t}\le C R_{s,t},\qquad E_{s,t}\le B_{s,t}.
\]
Moreover, the following equation holds for fixed index $\mathcal T$ with probability $1-n^{-O(1)}$:
\[
 \frac{\mathbb E_{\mathrm{query}}
       (\widehat R^{(0)}_{s,t}-R_{s,t})^2}{R_{s,t}}
 \le \frac{C(\Lambda+\log n)^2B_{s,t}^2}{d_{s,t}R_{s,t}}
       +C\bigl(\lambda E_{s,t}+\lambda^2R_{s,t}\bigr).
\]
\end{lemma}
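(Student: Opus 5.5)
We treat the deterministic comparisons $\lambda R_{s,t}\le B_{s,t}\le CR_{s,t}$ and $E_{s,t}\le B_{s,t}$ first, then the query mean-square error via a bias/variance split.

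\textbf{Deterministic comparisons.} The bounds $E_{s,t}\le B_{s,t}$ and $\lambda R_{s,t}\le B_{s,t}$ are immediate, since every summand of $B_{s,t}$ is nonnegative. For $B_{s,t}\le CR_{s,t}$ we bound each summand separately.

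For $E_{s,t}$, write $E_{s,t}=R_{s,t}-2b^Tx+x^TLx$. We compare with the exact potentials $y=z_s^*-z_t^*$. The proof of \Cref{lem:l1-resistance-domination} gives $y^TLy\le b^Ty\le R_{s,t}$, hence $\|b-Ly\|_{L^\dagger}^2=R_{s,t}-2b^Ty+y^TLy\le R_{s,t}$. Passing from $y$ to $x$ costs $O(d_{\max}\eta\cdot n)$ in the energy. With $\eta=n^{-O(1)}$ and the polynomial weight ratio, this is at most $R_{s,t}$, because $R_{s,t}\ge 1/(n\,w_{\max})$. Thus $E_{s,t}\le 2R_{s,t}$.

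For the term $\frac{\lambda}{\Lambda}\|\Psi_s-\Psi_t\|_2^2$, use $(\sqrt{a+\tau}-\sqrt{b+\tau})^2\le |a-b|$. This gives $\|\Psi_s-\Psi_t\|_2^2\le\|x\|_1+O(\text{rounding})$. It remains to bound $\lambda\|x\|_1\lesssim \Lambda R_{s,t}$, and this is the main obstacle. The plan is a level-set (coarea) argument on $x$. The residual identity $Lx=b-(\widetilde q_s-\widetilde q_t)$ together with $0\le\widetilde q\le 2\lambda$ says that every superlevel set $\{x>\theta\}$ receives net flux of size comparable to $\lambda$ times its volume. The flux through a cut is bounded by the energy, which is bounded by $x^TLx\le R_{s,t}$. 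Integrating $\theta$ over the dyadic range $[\lambda w_{\min}/(nd_{\max}),\|x\|_\infty]$ produces at most $\Lambda$ scales. The part below the smallest threshold is absorbed by $\tau|A_s\cup A_t|\le\lambda^{-1}\tau$. I expect this to yield exactly the $1/\Lambda$ normalization in $B_{s,t}$.

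\textbf{Query error.} Write
\[
\widehat R^{(0)}_{s,t}-R_{s,t}=(\widehat E_{s,t}-\mathbb E\widehat E_{s,t})+(\widehat S_{s,t}-S_{s,t})+\beta_{s,t},
\]
where $\beta_{s,t}$ collects the bias of \Cref{lem:original-energy-unbiased} and the $O(\eta)$ discrepancy from \Cref{thm:l1-resistance-error-bound}. The two query randomnesses are independent and $\widehat S_{s,t}$ is unbiased, so the mean-square error is $\operatorname{Var}\widehat E+\operatorname{Var}\widehat S+\beta^2$.

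\emph{Variance of $\widehat S_{s,t}$.} By \Cref{lem:original-slack-unbiased} it is at most $\frac{C\lambda^2}{d_{s,t}}(\|\Psi_s-\Psi_t\|_2^2+R_{s,t})^2$. Since $\lambda\|\Psi_s-\Psi_t\|_2^2\le\Lambda B_{s,t}$ and $\lambda R_{s,t}\le B_{s,t}$, this is at most $C\Lambda^2B_{s,t}^2/d_{s,t}$. The constant-factor sampling of \Cref{remark:query-p} changes only constants.

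\emph{Variance of $\widehat E_{s,t}$.} We need $\|b-Hx\|_{H^\dagger}^2\le C B_{s,t}$. By the computation in \Cref{lem:original-energy-unbiased}, $\|b-Hx\|_{H^\dagger}^2$ equals $E_{s,t}$ plus the bias terms. The bias terms are controlled using $\eps_H^2x^TLx\le c\lambda R_{s,t}$ and AM--GM on $\eps_H\sqrt{E_{s,t}x^TLx}$.

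\emph{Failure probability.} The $\log n$ in the factor $(\Lambda+\log n)^2$ comes from conditioning on the fixed index $\mathcal T$. A Gaussian concentration and union bound over all pairs shows that, with probability $1-n^{-O(1)}$, the empirical second moments of the stored $\mathcal T_{j,s}-\mathcal T_{j,t}$ are within $O(\log n)$ of their expectation.

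\textbf{Bias.} With $\eps_H=c\sqrt\lambda$,
\[
\beta^2\lesssim \eps_H^2E_{s,t}^2+\eps_H^2E_{s,t}R_{s,t}+\eps_H^4R_{s,t}^2\lesssim \lambda E_{s,t}R_{s,t}+\lambda^2R_{s,t}^2,
\]
using $E_{s,t}\le 2R_{s,t}$. Dividing by $R_{s,t}$ gives the second term $C(\lambda E_{s,t}+\lambda^2R_{s,t})$ of the claimed bound.
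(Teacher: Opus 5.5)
Apart from one step, your argument follows the paper's proof. The pieces that match are: $E_{s,t}\le B_{s,t}$ and $\lambda R_{s,t}\le B_{s,t}$ from nonnegativity; $E_{s,t}\le 2R_{s,t}$ via the exact potentials; the bias from \Cref{lem:original-energy-unbiased} with $\eps_H=c\sqrt\lambda$; $\operatorname{Var}(\widehat S_{s,t})\le C\Lambda^2B_{s,t}^2/d_{s,t}$ from \Cref{lem:original-slack-unbiased}; and concentration of $\widehat E_{s,t}$ over the index.

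The gap is the step you yourself flag, $\lambda\|x\|_1\lesssim\Lambda R_{s,t}$, which you need for $B_{s,t}\le CR_{s,t}$. The level-set sketch rests on a false premise. The bound $0\le\widetilde q\le2\lambda\one$ only gives $|\widetilde q_s-\widetilde q_t|\le2\lambda$, which is an upper bound and never a lower bound on residual flux. On $A_s\cap A_t$ both residuals equal $\lambda$ up to $O(d_{\max}\eta)$, so superlevel sets inside the common support receive essentially no residual flux. Worse, the superlevel sets contain the source $s$, so the unit source enters the flux balance. If $A_s$ and $A_t$ are far apart, the cut flux out of $U_\theta=\{x>\theta\}\subseteq A_s$ is $1-\sum_{U_\theta}\widetilde q_s\approx1-\lambda|U_\theta|$. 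On a unit-weight path this tends to $O(\lambda)$ as $\theta\downarrow0$, while $\lambda|U_\theta|\to1-O(\lambda)$. So ``cut flux $\gtrsim\lambda|U_\theta|$'' fails and the coarea integration does not go through. The $1/\Lambda$ in $B_{s,t}$ is there for the edge sum in \Cref{lem:original-budget-mass}, not for this pointwise bound.

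The missing idea is an $\ell_\infty$ bound from the maximum principle. Put $x^*=z_s^*-z_t^*$. If $v\ne s$ and $x^*(v)>0$, then $v\in A_s^*$, so $(Lz_s^*)(v)=-\lambda$, while $(Lz_t^*)(v)=\mathbf 1_{v=t}-q_t^*(v)\ge-\lambda$. Hence $(Lx^*)(v)\le0$, and \Cref{lem:maximum-value-principle} applied to $x^*-\max\{x^*(s),0\}$ gives $x^*\le\max\{x^*(s),0\}$.

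Moreover $x^*(s)\ge0$. Otherwise $N=\{x^*<0\}\ni s$ lies in $A_t^*$, where $q_t^*=\lambda\ge q_s^*$, so $\sum_{v\in N}(Lx^*)(v)\ge0$. This contradicts the fact that the cut flux out of the proper subset $N$ is negative. Symmetrically $x^*(t)\le0$ and $x^*\ge x^*(t)$. Therefore $\|x^*\|_\infty\le x^*(s)-x^*(t)=R^\lambda_{s,t}\le R_{s,t}$ by \Cref{lem:l1-resistance-domination}.

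With $|A_s\cup A_t|\le2\lambda^{-1}$ and $\tau\le\lambda/(64d_{\max})\le\lambda R_{s,t}/64$, this yields
\[
\lambda\|\Psi_s-\Psi_t\|_2^2\le2\lambda\|x\|_1+O(\tau)\le4R_{s,t}+O(\eta+\tau)=O(R_{s,t}).
\]
This is exactly the bound $\lambda\|\Psi_s-\Psi_t\|_2^2\le CR_{s,t}$ that the paper lists, without argument, among the properties holding ``by our estimation''. No $\Lambda$ is needed.

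A smaller point: for a fixed index, $\widehat E_{s,t}$ is deterministic. So the correct split is $\operatorname{Var}(\widehat S_{s,t})+(\widehat E_{s,t}-E_{s,t})^2$. To retain the factor $1/d_{s,t}$ you need the chi-square rate $|\widehat E_{s,t}-\mathbb E_{\mathcal T}\widehat E_{s,t}|\lesssim B_{s,t}\bigl(\sqrt{\log n/d_{s,t}}+\log n/d_{s,t}\bigr)$, not mere agreement ``within $O(\log n)$''.
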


\begin{proof}
The following property holds by our estimation:
\[
 b^Tx \le R_{s,t}+2\eta,\qquad
 x^TLx\le3R_{s,t},\qquad E_{s,t}\le C R_{s,t},\qquad
 \lambda\|\Psi_s-\Psi_t\|_2^2\le C R_{s,t}.
\]
The definition of $B_{s,t}$ gives $\lambda R_{s,t}\le B_{s,t}\le C R_{s,t}$ and $E_{s,t}\le B_{s,t}$. Moreover,
\[
 \lambda\|\Psi_s-\Psi_t\|_2^2+\lambda R_{s,t}\le\Lambda B_{s,t}.
\]
By \Cref{lem:original-energy-unbiased}, using
$\varepsilon_H=c\sqrt\lambda$ and the preceding bounds,
\[
 |\mathbb E_{\mathcal T}\widehat E_{s,t}-E_{s,t}|\le C\left(\sqrt{\lambda E_{s,t}R_{s,t}}
                              +\lambda R_{s,t}\right).
\]
In particular, $\mathbb E_{\mathcal T}\widehat E_{s,t}\le C(E_{s,t}+\lambda R_{s,t})\le CB_{s,t}$.

Use the Gaussian feature differences from the proof of
\Cref{lem:original-energy-unbiased}.
For every fixed $d_{s,t}\ge\log n$, by the variance bound of \Cref{lem:original-energy-unbiased}, we have the following with probability $1-n^{-O(1)}$:
\[
 |\widehat E_{s,t}-\mathbb E_{\mathcal T}\widehat E_{s,t}|
 \le C\mathbb E_{\mathcal T}\widehat E_{s,t}\left(\sqrt{\frac{\log n}{d_{s,t}}}
                           \right)+c\lambda R_{s,t}.
\]
So we can bound the error
\[
 \frac{(\widehat E_{s,t}-\mathbb E_{\mathcal T}\widehat E_{s,t})^2}{R_{s,t}}
 \le C \left( \frac{\log n\,B_{s,t}^2}{d_{s,t}R_{s,t}}
                                      +\lambda^2R_{s,t} \right).
\]
By \Cref{lem:original-slack-unbiased}, we have that
\[
 \operatorname{Var}(\widehat S_{s,t})
 \le\frac{C\lambda^2}{d_{s,t}}
            (\|\Psi_s-\Psi_t\|_2^2+R_{s,t})^2
 \le\frac{C\Lambda^2B_{s,t}^2}{d_{s,t}}.
\]
Therefore, we have that
\[
 \mathbb E_{\mathrm{query}}(\widehat R^{(0)}_{s,t}-R_{s,t})^2
 =\operatorname{Var}(\widehat S_{s,t})
                          +(\widehat E_{s,t}-E_{s,t})^2.
\]
Putting things together, we prove the lemma.
\end{proof}

Finally, we give a upper bound on sum of the error term $B_{s,t}$. Our proof will utilize the maximum value principle.

\begin{lemma}[Maximum value principle; cf.\ {\cite[Exercises~2.23--2.24]{LyonsPeres2016}}]
  \label{lem:maximum-value-principle}
  On a connected graph $G$, if a vector $p$ has at least one non-positive coordinate, and for all $a$ such that $p(a)>0$ satisfies $(Lp)(a)\le 0$, then $p\le 0$.
\end{lemma}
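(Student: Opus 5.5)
We prove the contrapositive through a strict maximum. Suppose $p$ has a non-positive coordinate, and that $(Lp)(a)\le 0$ whenever $p(a)>0$, but $p\not\le 0$. Let $M=\max_v p(v)$, so $M>0$. Let $S=\{v: p(v)=M\}$, which is nonempty. Since some coordinate is $\le 0<M$, we have $S\ne V$.

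First, connectivity gives an edge $(a,b)\in E$ with $a\in S$ and $b\notin S$, so $p(b)<M=p(a)$. Next, expand the Laplacian at $a$:
\[
 (Lp)(a)=\sum_{u}w_{au}\bigl(p(a)-p(u)\bigr).
\]
Every term is nonnegative, because $p(a)=M\ge p(u)$. The term for $u=b$ is strictly positive, because $w_{ab}>0$ and $p(a)>p(b)$. Hence $(Lp)(a)>0$.

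But $p(a)=M>0$, so the hypothesis forces $(Lp)(a)\le 0$. This contradiction shows $p\le 0$.

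The only point needing care is the case $S=V$, where no boundary edge exists. That case is excluded precisely by the assumption that some coordinate of $p$ is non-positive while $M>0$. The positivity of the edge weights, stated in the preliminaries, supplies the strict inequality. No earlier result of the paper is needed beyond the formula $L=\sum_e w_eb_eb_e^T$, which yields the local expansion above. The main obstacle is therefore only ensuring the maximizing set has a boundary edge, and this is immediate from connectivity.
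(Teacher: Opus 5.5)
Your proof is correct. You take the set where the positive maximum $M$ is attained, use the non-positive coordinate to make it a proper subset, and read off $(Lp)(a)=\sum_u w_{au}(p(a)-p(u))>0$ at a boundary vertex; this is exactly the standard strong maximum principle. The paper gives no proof of its own and only cites Lyons--Peres, where the same argument (the maximizing set is closed under taking neighbours) appears, so nothing is missing.
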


\begin{lemma}
\label{lem:original-budget-mass}
It holds that $\sum_{\{s,t\}\in E}w_{s,t}B_{s,t}=O(\lambda n)$.
\end{lemma}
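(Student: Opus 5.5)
The sum of $B_{s,t}$ splits into three parts, and I will bound each one by $O(\lambda n)$:
\[
 \sum_{e}w_eB_e=\sum_e w_eE_e+\frac{\lambda}{\Lambda}\sum_e w_e\|\Psi_s-\Psi_t\|_2^2+\lambda\sum_e w_eR_e .
\]

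\textbf{Third term.} By \Cref{lem:foster}, it equals $\lambda(n-1)$.

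\textbf{First term.} Write $x=\widetilde z_s-\widetilde z_t$. From the identity $R=b^Tx+E+S$ we get
\[
E_{s,t}=R_{s,t}-b^Tx-S_{s,t}, \qquad S_{s,t}=b^Tx-x^TLx,
\]
and hence $E_{s,t}=R_{s,t}-2b^Tx+x^TLx$.
\begin{itemize}
\item The terms $R_{s,t}-b^Tx$ sum, by \Cref{lem:l1-resistance-aggregate-error}, to $\lambda n-1+O(m\eta\, w_{\max})$. This is $O(\lambda n)$ once $\eta$ is polynomially small.
\item It remains to show $\sum_e w_e(x^TLx-b^Tx)\le O(\lambda n)$. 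Define $\widetilde Z=[\widetilde z_s]_s$. Then
\[
\sum_e w_e b_e^T\widetilde Z b_e=\operatorname{Tr}(L\widetilde Z)=\sum_s(1-\widetilde q_s(s)),
\]
which is within $O(\lambda n)$ of $n$, because $0\le\widetilde q_s\le 2\lambda$ by \Cref{thm:single-source-potential}.
\item The other quantity is $\sum_e w_e(\widetilde z_s-\widetilde z_t)^TL(\widetilde z_s-\widetilde z_t)$. I expect to bound it using the pointwise $y^TLy\le b^Ty$ argument of \Cref{lem:l1-resistance-domination}, applied to the exact potentials with an $O(\eta)$ perturbation. That gives $x^TLx-b^Tx\le O(\eta\,\mathrm{poly}(n))$.
\end{itemize}
Together these give $\sum w_eE_e\le 2(\lambda n)+O(\lambda n)$.

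\textbf{Second term (the main obstacle).} Expanding the square,
\[
\sum_{(s,t)\in E}w_{st}\|\Psi_s-\Psi_t\|_2^2=\sum_v \Psi(v)^T L\,\Psi(v),
\]
where $\Psi(v)$ is the vector $s\mapsto\Psi_s(v)$. So the target is $\sum_v \Psi(v)^TL\Psi(v)\le O(\Lambda n)$. The plan is the following.
\begin{itemize}
\item Use symmetry of the regularized potentials. By the same KKT manipulation, $e_t^Tz_s^*$ relates to $e_s^Tz_t^*$; alternatively, one can view $s\mapsto z^*_s(v)$ as a function that is subharmonic wherever it is positive.
\item Apply \Cref{lem:maximum-value-principle} to compare the column function $f_v(s)=\widetilde z_s(v)$ with a multiple of the exact column. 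This should show that $f_v$ is supported near $v$, with $Lf_v\ge -O(\lambda)$ on its support.
\item Use the concavity of $h(a)=\sqrt{a+\tau}-\sqrt\tau$. A chain rule gives $(h(a)-h(b))^2\le (a-b)(h'(b)\cdot\ldots)$. With the weighted inequality $(h(a)-h(b))^2\le (a-b)\bigl(\log(a+\tau)-\log(b+\tau)\bigr)/4$, the sum reduces to
\[
\sum_v \bigl(Lf_v\bigr)^T\log(f_v+\tau).
\]
\item Bound the ratio $\log((\max f+\tau)/\tau)$ by $\Lambda$, since $\max f\le n/w_{\min}$ and $\tau=\lambda/(64d_{\max})$. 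Bound the total mass of $Lf_v$ by $O(1)$ per $v$, using $\one^Tq=1$ and the nonnegativity provided by the maximum principle.
\end{itemize}
This yields $O(\Lambda n)$ and hence $O(\lambda n)$ after multiplying by $\lambda/\Lambda$. The dyadic rounding adds $O(\tau)$ per coordinate on supports of size $\lambda^{-1}$. That contributes $O(\lambda n)$ in total, which is absorbed.
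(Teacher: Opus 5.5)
Your treatment of the first and third terms matches the paper. The trace computation in your first-term bullet is not needed: the pointwise bound $y^TLy\le b^Ty$ for exact potentials already gives $E_{s,t}\le R_{s,t}-R^\lambda_{s,t}$ up to $\eta$-errors, and \Cref{lem:l1-resistance-aggregate-error} finishes. Your reduction of the second term to $\sum_v(Lf_v)^T\log(f_v+\tau\one)$, with $f_v(s)=z^*_s(v)$, via the square-root/log inequality and summation by parts, is also the paper's route.

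\textbf{The gap is in bounding this last quantity.} Because $\one^TLf_v=0$, you may subtract $\log\tau$ from the weights. The new weights $\log(1+f_v/\tau)$ are nonnegative, are $O(\Lambda)$, and vanish off $\supp f_v$. So what you need is an \emph{upper} bound on $Lf_v$ on $\supp f_v$ whose positive mass is $O(1)$. The inequality you aim for, $Lf_v\ge-O(\lambda)$ on $\supp f_v$, points the wrong way and does not help.

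The structural facts you propose are also false. Symmetry $z_s^*(v)=z_v^*(s)$ would give $Lf_v=e_v-q_v^*$ and would justify your appeal to $\one^Tq=1$, but it fails for the obstacle problem. Take the unit-weight path on $a,b,c$ (edges $ab$, $bc$) with $\lambda=0.4$. The KKT conditions give $z_a^*=(0.8,0.2,0)$ and $z_b^*=(0,0.3,0)$, so $z_a^*(b)=0.2\neq0=z_b^*(a)$. In the same example $f_b=(0.2,0.3,0.2)$ is not dominated by any multiple of $z_b^*$, so $f_v$ is not ``supported near $v$'' in the sense of a comparison with the exact column $z_v^*$.

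\textbf{What is missing} is the coordinatewise bound $Lf_v\le e_v$. Equivalently, $\check q_v:=e_v-Lf_v$ is a probability vector; its sum is $1$ trivially, so the content is nonnegativity. With it,
$(Lf_v)^T\log(f_v+\tau\one)=\log(f_v(v)+\tau)-\check q_v^T\log(f_v+\tau\one)\le\log(1+f_v(v)/\tau)=O(\Lambda)$.

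The paper proves this bound by applying \Cref{lem:maximum-value-principle} to a \emph{column} for each fixed $s$, not to the row $f_v$:
\begin{itemize}
\item Set $u=e_s+\sum_tw_{st}z_t^*\ge0$. The KKT inequalities $Lz_t^*\ge e_t-\lambda\one$ give $Lu\ge d_s(e_s-\lambda\one)$.
\item On $\supp z_s^*$ one has $Lz_s^*=e_s-\lambda\one$. Hence $p=d_sz_s^*-u$ satisfies $(Lp)(a)\le0$ wherever $p(a)>0$.
\item Also $p\le0$ on the nonempty set $\{z_s^*=0\}$, so the maximum principle gives $p\le0$.
\end{itemize}
Read at coordinate $v$, this is $d_sf_v(s)-\sum_tw_{st}f_v(t)\le\one_{\{s=v\}}$, i.e., $(Lf_v)(s)\le\one_{\{s=v\}}$.

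This argument uses the exact KKT conditions, so it must be run on $z^*$. The passage to $\widetilde z$ and to the rounded $\Psi$ is then handled separately as a perturbation. Applying the maximum principle directly to $f_v(s)=\widetilde z_s(v)$, as you suggest, does not go through.
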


\begin{proof}
To prove $\sum_{\{s,t\}\in E}w_{s,t}B_{s,t}=O(\lambda n)$, we bound the following two terms:
\[
\sum_{e}{w_eE_e}\le O(\lambda n)\qquad \text{ and }\qquad \sum_{e=(s,t)}{w_e\|\Psi_s-\Psi_t\|_2^2}\le O(n\Lambda).
\]

By definition, $E_{s,t}=\|b-Lx\|_{L^\dagger}^2$ with $x=\widetilde z_s-\widetilde z_t$. So $E_{s,t}= R_{s,t}-2R_{s,t}^\lambda+x^TLx$. When assuming $x=z_s^*-z_t^*$ is the exact solution, the optimal condition gives $x^TLx\le b^Tx=R_{s,t}^\lambda$. Therefore, $E_{s,t}\le R_{s,t}-R_{s,t}^\lambda$. By \Cref{lem:l1-resistance-aggregate-error}, we have that $\sum_{e}{w_eE_e}\le \lambda n$. For approxiamte solution $x=\widetilde z_s-\widetilde z_t$, we provided that $\sum_{e}{w_eE_e}\le O(\lambda n)$ by setting the error $\eta$ sufficiently small.

For the square-root term, we define row potentials $f_v$ such that $f_v(s)=z_s^*(v)$ for our analysis. They satisfy $Lf_v=e_v-\check q_v$,
where $\check q_v$ is a probability vector, and
$0\le f_v\le n/w_{\min}$. To see why $\check q_v$ is a probability vector, we prove two items: (i) $\check q_v(u)\ge 0$ for every $u\in V$; (ii) $\sum_{u\in V}\check q_v(u)=1$. The second item is obvious, since $\one^T \check q_v=\one^T (e_v-Lf_v)=1$. Our goal is to prove the first item. We prove $Lf_v-e_v\le 0$. We define a vector $u=\sum_{t}{w_{st}z_t^*}+e_s\ge 0$. By $Lz_t^*\ge e_t-\lambda \one$, we compute:
\begin{align*}
  Lu&\ge \sum_t w_{st}(e_t-\lambda\one)+Le_s\\
  &=\sum_t w_{st}e_t-d_s\lambda \one +d_s e_s-\sum_t w_{st}e_t=d_s(e_s-\lambda \one).
\end{align*}

We next compare $u$ with $d_sz_s^*$. We define $p=d_sz_s^*-u$. For all coordinate with $p(a)>0$, $z_s^*(a)>0$ since $z_s^*(a)\ge 0$ and $d_s>0$. Therefore, KKT condition gives $(Lz_s^*)(a)=e_s(a)-\lambda$, so 
\[(Lp)(a)=d_s(e_s(a)-\lambda)-(Lu)(a)\le 0.\] 

On the other hand, on the coordinate $z_s^*(a)=0$, the vector $p(a)\le 0$. By \Cref{lem:maximum-value-principle}, we have that $p\le 0$. Therefore, $d_sz_s^*\le u$. Especially, for coordinate $v$, 
\[
(Lf_v)(s)=d_sz_s^*(v)-\sum_t w_{st}z_t^*(v)\le \one_{\{s=v\}}.
\]

So we prove the residual  $\check q_v=e_v-Lf_v$ is a probability vector. Now we assume $\Psi_s(v)=\sqrt{f_v(s)+\tau}-\sqrt\tau$ without estimation error and rounding error. For positive $a,b$,
\[
 (\sqrt a-\sqrt b)^2
 \le\tfrac14(a-b)(\log a-\log b).
\]

Summation by parts therefore gives
\begin{align*}
 &\sum_{\{s,t\}\in E}w_{s,t}
    \bigl(\sqrt{f_v(s)+\tau}-\sqrt{f_v(t)+\tau}\bigr)^2\\
 &\qquad\le\tfrac14(Lf_v)^T\log(f_v+\tau\one)
 \le\tfrac14\log\!\left(1+\frac{n}{\tau w_{\min}}\right)
 =O(\Lambda).
\end{align*}

Sum over $v$ gives $\sum_{e=(s,t)}{w_e\|\Psi_s-\Psi_t\|_2^2}\le O(n\Lambda)$. Next we consider $\Psi_s$ with estimation error and rounding error. By computation, the estimation error $\eta$ and rounding error $\tau$ will affect the total sum at most:
\[
 O\!\left(\frac{n d_{\max}}{\lambda}
                 \left(\frac{\eta^2}{\tau}+\tau\right)\right)=O(n),
\]
which is dominated by $O(n\Lambda)$. Finally, Foster's identity gives
$\sum_{\{s,t\}\in E}w_{s,t}R_{s,t}=n-1$. Putting things together,
\[
 \sum_{\{s,t\}\in E}w_{s,t}B_{s,t}
 =O(\lambda n)+\frac{\lambda}{\Lambda}O(n\Lambda)+\lambda(n-1)
 =O(\lambda n),
\]
which proves the lemma.
\end{proof}

\begin{proof}[Proof of \Cref{thm:l1-resistance-sampler}]
Use $\varepsilon_H=O(\sqrt{\lambda})$ from \Cref{lem:original-correction-variance}.
We first construct a constant-factor estimation of the error upper bound
$B_{s,t}$, which will be used to determine the sample count $d_{s,t}$.
By \Cref{lem:res-sketch}, we compute a resistance sketch $R^G_{s,t}$ such that $R_{s,t}\le R^G_{s,t}\le4R_{s,t}$, each $R^G_{s,t}$ can be queried in $\widetilde O(1)$ time. By JL-sketch of $\{\Psi_s\}$ (see \Cref{remark:query-p}), we query a constant-factor estimate $Q_{s,t}$ of
$\|\Psi_s-\Psi_t\|_2^2$ in $\widetilde O(1)$ time.

For the estimation of $E_{s,t}$, we choose $d_{s,t}=O(\log n)$ and return $U_{s,t}=\widehat E_{s,t}$ in $\widetilde O(1)$ time. The expectation and variance analysis by \Cref{lem:original-energy-unbiased} guarantee that this is a constant factor estimation of $(E_{s,t}+\lambda R_{s,t})$. As a result, we can estimate a constant factor estimation of $B_{s,t}$ in $\widetilde O(1)$ time:
\[
 \widetilde B_{s,t}
 =U_{s,t}+\frac{\lambda}{\Lambda}Q_{s,t}
                              +\lambda R^G_{s,t}.
\]

Now we choose the sample count $d_{s,t}$ for \Cref{alg:original-slack-query,alg:original-energy-sketch} based on $\widetilde B_{s,t}$. We set
\(
 d=\left\lceil\frac{C(\Lambda+\log n)^2}{\lambda}\right\rceil
\)
and
\(
 d_{s,t}=\min\left\{d,\left\lceil
 \frac{C(\Lambda+\log n)^2\widetilde B_{s,t}}
      {\lambda R^G_{s,t}}\right\rceil\right\}.
\)
Here $C$ is a sufficiently large constant. We then compute the estimation:
\begin{align*}
 \widehat R^{(0)}_{s,t}
       &=b_{s,t}^T(\widetilde z_s-\widetilde z_t)
                         +\widehat E_{s,t}+\widehat S_{s,t},\\
 \widehat R_{s,t}&=\min\{2R^G_{s,t},\max\{R^G_{s,t}/16,\widehat R^{(0)}_{s,t}\}\}.
\end{align*}
By our setting of $d_{s,t}$, we have that
\(
 \frac{(\Lambda+\log n)^2B_{s,t}^2}{d_{s,t}R_{s,t}}
 \le C\lambda B_{s,t}.
\)
Therefore, \Cref{lem:original-correction-variance} and \Cref{lem:original-budget-mass} give
\begin{align*}
 \sum_{\{s,t\}\in E}w_{s,t}
       \frac{\mathbb E_{\mathrm{query}}(\widehat R^{(0)}_{s,t}-R_{s,t})^2}{R_{s,t}}
 &\le C\lambda\sum_{\{s,t\}\in E}w_{s,t}B_{s,t}
       +C\lambda\sum_{\{s,t\}\in E}w_{s,t}E_{s,t}\\
 &\quad{}+C\lambda^2\sum_{\{s,t\}\in E}w_{s,t}R_{s,t}
 =O(\lambda^2n).
\end{align*}
By our setting of $\widehat R_{s,t}$, we have that
$R_{s,t}/16\le\widehat R_{s,t}\le8R_{s,t}$.
Replacing the denominator by $\widehat R_{s,t}$ costs at most a factor
of $16$, proving the aggregate-error assertion.

A query takes $\widetilde O(d_{s,t})$ time. Its worst-case time is $\widetilde O(d)=\widetilde O(\lambda^{-1})$.
For an edge sampled with probability $w_eR_e/(n-1)$,
\begin{align*}
 \mathbb E_e d_e
 &=\frac1{n-1}\sum_{e\in E}w_eR_e d_e\\
 &\le 1+\frac{C(\Lambda+\log n)^2}{\lambda(n-1)}
                \sum_{e\in E}w_eB_e
 =O\bigl(1+(\Lambda+\log n)^2\bigr).
\end{align*}
Thus the weighted mean query time is $\widetilde O(1)$.
Constructing the index in \Cref{alg:original-energy-sketch} costs $\widetilde O(m+n\eps_H^{-2}d)=\widetilde O(m+n\lambda^{-2})$. Preprocessing process of \Cref{alg:original-slack-query} costs $\widetilde O(m+n\lambda^{-2})$. Computing the $\ell_1$-regularized resistance sketch takes $\widetilde O(m+n\lambda^{-2})$ time. So the total preprocessing time is $\widetilde O(m+n\lambda^{-2})$.
\end{proof}

\section{Constructing the determinant sparsifiers}\label{sec:det-sparsifier}

In this section, we use the approximate resistances of
\Cref{thm:l1-resistance-sampler} to construct determinant sparsifiers, and
thereby prove \Cref{thm:determinant-sparsifier}.  We then extend the same
construction to implicit Schur complements in the next section, and combine it with the
determinant-preserving recursion of \cite{DPPR} to obtain an algorithm for
approximate spanning-tree counting, in the same asymptotic time as building the
sparsifier.

\subsection{Determinant sparsification from resistance estimation}

In this subsection we prove \Cref{thm:determinant-sparsifier}.  Let $S=n-1$ be the number of edges in a
spanning tree; Foster's identity gives $\sum_{e\in E}\ell_e=S$ for the true
leverage scores $\ell_e=w_eR_e$.  For the determinant sparsifier, we would like
approximate leverage scores $a_e=w_e\widehat R_e$ satisfying the aggregate
Pearson bound
\[
E_0:=\sum_{e\in E}\frac{(\ell_e-a_e)^2}{a_e}=O(\lambda^2 n).
\]

However, the guarantee of \Cref{thm:l1-resistance-sampler} does not provide such aggregate error guarantee: the aggregate error bound holds only in expectation over the
fresh query randomness.  To resolve this, we treat the random query outputs as
virtual parallel edges.  Specifically, for each edge $e$, let the query output be
marked by $\omega$ and occur with probability $\nu_e(\omega)$.  Conceptually
split $e$ into parallel copies $(e,\omega)$ with weights
$w_{e,\omega}=w_e\nu_e(\omega)$, and set
\[
 \ell_{e,\omega}=w_e\nu_e(\omega)R_e,\qquad
 a_{e,\omega}=w_e\nu_e(\omega)\widehat R_e(\omega).
\]
The resulting virtual multigraph has the same Laplacian $L$ as $G$, and
\[
 \sum_{e,\omega}\frac{(a_{e,\omega}-\ell_{e,\omega})^2}{a_{e,\omega}}
 =\sum_{e\in E}w_e\sum_{\omega}\nu_e(\omega)
  \frac{(\widehat R_e(\omega)-R_e)^2}{\widehat R_e(\omega)}
 =O(\lambda^2 n).
\]
Thus we may informally first assume that the approximate resistances themselves satisfy
the aggregate Pearson error $E_0$. The rigorous argument will be provided later.

The determinant sampling framework of \cite{DPPR} samples edges from a distribution with total mass $S$, matching the normalization of the true leverage scores. In general $\sum_ea_e$ need not equal $S$, so we set $A=\sum_ea_e$ and renormalize to the sampling masses $r_e:=Sa_e/A$, for which $\sum_er_e=S$. The deviation of this distribution from the true leverage scores is measured by the Pearson error
\[
\Xi(r\Vert\ell):=\sum_{e\in E}\frac{(r_e-\ell_e)^2}{r_e},
\]
which controls the second-moment bound below. The next lemma bounds this Pearson error after renormalization.

\begin{lemma}
\label{lem:renormalization-pearson}
Let $A=\sum_ea_e, S=n-1$ and $r_e=Sa_e/A$.  If for every edge $e$,
\[
\sum_e(\ell_e-a_e)^2/a_e\le E_0\qquad \text{ and }\qquad
\ell_e/16\le a_e\le8\ell_e,
\]
then $A=\Theta(S)$, $r_e\ge\ell_e/C$ for an absolute constant $C$, and $\Xi(r\Vert\ell)=O(E_0)$.
\end{lemma}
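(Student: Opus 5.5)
The plan is to control the total mass $A$ first, then deduce the pointwise lower bound, and finally split the Pearson error into a per-edge part and a global normalization part.

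\textbf{Step 1: $A=\Theta(S)$.} Summing the pointwise sandwich $\ell_e/16\le a_e\le 8\ell_e$ and using Foster's identity (\Cref{lem:foster}) gives $S/16\le A\le 8S$. This is crude but already enough for the constant-factor claims.

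\textbf{Step 2: $r_e\ge\ell_e/C$.} Since $r_e=Sa_e/A\ge a_e/8\ge\ell_e/128$, we may take $C=128$. Likewise $r_e\le 16a_e$, so $r_e$ and $a_e$ are comparable up to absolute constants.

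\textbf{Step 3: the Pearson error.} Write $r_e-\ell_e=(a_e-\ell_e)+(S/A-1)a_e$. Applying $(x+y)^2\le 2x^2+2y^2$ together with $r_e\ge a_e/8$ gives
\[
\Xi(r\Vert\ell)\le 16\sum_e\frac{(a_e-\ell_e)^2}{a_e}+16\Bigl(\frac SA-1\Bigr)^2\sum_e a_e
=16E_0+16\frac{(S-A)^2}{A}.
\]
It remains to show $(S-A)^2/A=O(E_0)$, which is the one place where crude constant bounds do not suffice. Since $S-A=\sum_e(\ell_e-a_e)$, Cauchy--Schwarz gives
\[
(S-A)^2=\Bigl(\sum_e\frac{\ell_e-a_e}{\sqrt{a_e}}\sqrt{a_e}\Bigr)^2\le\Bigl(\sum_e\frac{(\ell_e-a_e)^2}{a_e}\Bigr)\Bigl(\sum_e a_e\Bigr)\le E_0A.
\]
Hence $(S-A)^2/A\le E_0$ and $\Xi(r\Vert\ell)\le 32E_0$.

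The only subtle point is this Cauchy--Schwarz step. It shows the renormalization error is governed by the same Pearson quantity, rather than by the constant-factor bounds, which would only give $O(S)$. The sandwich hypothesis is used only to keep $r_e$ comparable to $a_e$ and $\ell_e$.
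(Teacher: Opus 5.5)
Your proof is correct and follows the same route as the paper. Both arguments sandwich $A$ between $S/16$ and $8S$, deduce $r_e\ge \ell_e/C$, split $r_e-\ell_e$ into a per-edge term and a normalization term, and control $|A-S|$ by Cauchy--Schwarz. The differences are minor: you weight Cauchy--Schwarz by $a_e$, which gives $(S-A)^2\le E_0A$ directly without the paper's appeal to $a_e\le 8\ell_e$, and you attach the factor $(S/A-1)$ to $a_e$ rather than $\ell_e$, which yields explicit constants $C=128$ and $\Xi(r\Vert\ell)\le 32E_0$.
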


\begin{proof}
Since $\ell_e/16\le a_e\le8\ell_e$ and $\sum_e\ell_e=S$, we have
$S/16\le A\le8S$.
Foster's identity and Cauchy--Schwarz give
\[
 |A-S|
 =\left|\sum_e(a_e-\ell_e)\right|
 \le\left(\sum_e\ell_e\right)^{1/2}
       \left(\sum_e\frac{(a_e-\ell_e)^2}{\ell_e}\right)^{1/2}.
\]
Using $a_e\le8\ell_e$ in the second factor yields
$|A-S|\le\sqrt{8SE_0}$.
The bounds on $A$ give $A=\Theta(S)$ and hence
$r_e=Sa_e/A\ge\ell_e/C$ for some absolute constant $C$.

Put $c=S/A$.  Since $r_e-\ell_e=c(a_e-\ell_e)+(c-1)\ell_e$,
\[
 \Xi(r\Vert\ell)
 \le 2c\sum_e\frac{(a_e-\ell_e)^2}{a_e}
     +2(c-1)^2c^{-1}\sum_e\frac{\ell_e^2}{a_e}.
\]
The first term is $O(E_0)$.  For the second, $\sum_e\ell_e^2/a_e\le16S$, $c^{-1}=\Theta(1)$ and
$(c-1)^2=(A-S)^2/A^2= O(E_0/S)$, so the second term is also $O(E_0)$.
\end{proof}

We next provide the variance analysis of determinant sparsification. Our proof strengthens the result in \cite{DPPR} from pointwise leverage-score approximation to an aggregate Pearson-error guarantee. Since the proof is complicated and will distract us from the main argument, we defer it to \Cref{sec:missing-proofs}.

\begin{lemma}[Aggregate-error determinant sampling]
\label{lem:aggregate-determinant-sampling}
Let $r_e>0$ satisfy $\sum_er_e=S$ and $r_e\ge\ell_e/C$.  Independently draw
$t\ge4S$ edges with probabilities $p_e=r_e/S$.  Give every sampled copy of
$e$ weight
$w'_e=(w_eS/r_e)(1/(t)_S)^{1/S}$, where
$(t)_S=t(t-1)\cdots(t-S+1)$.
If $H$ is the resulting weighted multigraph, then the expectation and second moment satisfy
\[
 \mathbb{E}[\Tree(H)]=\Tree(G),\qquad
 \frac{\mathbb{E}[\Tree(H)^2]}{\Tree(G)^2}
 \le\exp\!\left(
 O_C\!\left(\frac{S\Xi(r\Vert\ell)}t+
                   \frac{n^3}{t^2}\right)\right).
\]
\end{lemma}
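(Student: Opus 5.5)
We follow the route of \cite{DPPR}. First expand $\Tree(H)$ via Cauchy--Binet: $\Tree(H)=\sum_{T}\prod_{f\in T}w'_f$ over spanning trees $T$ of the multigraph $H$. A spanning tree of $H$ is an ordered choice of $S$ distinct sample indices whose edges form a tree, counted up to ordering. Fix a tree $T$ of $G$. The probability that a fixed injective assignment of $S$ of the $t$ draws hits exactly the edges of $T$ is $S!\prod_{e\in T}p_e$, and there are $(t)_S/S!$ unordered index sets. Each occurrence contributes $\prod_{e\in T}w_eS/(r_e(t)_S^{1/S})$, and since $p_e=r_e/S$ the factors cancel to $\prod_{e\in T}w_e$. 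This gives $\mathbb E\,\Tree(H)=\Tree(G)$.

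For the second moment, expand $\Tree(H)^2$ as a double sum over pairs of index sets $(I_1,I_2)$ carrying trees $T_1,T_2$. Group the terms by the overlap $k=|I_1\cap I_2|$; shared indices must carry edges in $T_1\cap T_2$. A pair with overlap $k$ involves $2S-k$ distinct draws. Its contribution, relative to $\Tree(G)^2$, is a ratio of falling factorials $(t)_{2S-k}/(t)_S^2$ times $\binom{S}{k}$-type counting, times the factor $\prod_{e\in T_1\cap T_2\text{ shared}}S w_e/r_e$. Normalizing by the spanning tree distribution $\mu$, this gives
\[
\frac{\mathbb E\,\Tree(H)^2}{\Tree(G)^2}=\sum_{k}\frac{(t)_{2S-k}}{(t)_S^2}\,\frac{(\text{combinatorial}) }{S^{k}}\,\mathbb E_{T_1,T_2\sim\mu}\,e_k\!\left(\left\{\tfrac{S w_e}{r_e}\cdot\tfrac{\ell_e}{w_e}\cdot\cdots\right\}_{e\in T_1\cap T_2}\right),
\]
which is essentially $\mathbb E_{T\sim\mu}\,e_k(\{\ell_e/r_e\}_{e\in T})$ after using $\Pr[e\in T]=\ell_e$ and negative correlation. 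The key quantity is therefore $\mathbb E_{T\sim\mu}\prod_{e\in T}(1+x\,\ell_e/r_e)$ for $x=\Theta(S/t)$-scaled parameters. As in \cite{DPPR}, the falling-factorial ratio $(t)_{2S-k}/(t)_S^2$ contributes $\exp(-S^2/t+O(S^3/t^2))$, and this is where the $n^3/t^2$ term comes from.

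The main step, and the anticipated obstacle, is bounding the elementary-symmetric generating function under a Pearson error rather than a pointwise one. We use strong Rayleigh negative dependence: $\mathbb E_\mu\prod_{e\in T}(1+y_e)\le\prod_e(1+y_e\ell_e)$ for $y\ge0$, since all $k$-marginals are at most products. With $y_e=(S/t)\ell_e/r_e$ this gives
\[
\log\prod_e(1+y_e\ell_e)\le \frac St\sum_e\frac{\ell_e^2}{r_e}=\frac St\Bigl(S+\Xi(r\Vert\ell)\Bigr),
\]
using $\sum_e\ell_e^2/r_e=\sum_e\ell_e+\Xi+\sum_e(r_e-\ell_e)\cdot 0$. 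More carefully, $\sum_e\ell_e^2/r_e=\sum_e(r_e-\ell_e)^2/r_e+2\sum\ell_e-\sum r_e=S+\Xi$. The leading term $S^2/t$ cancels against $e^{-S^2/t}$ from the factorial ratio, leaving $\exp(O(S\Xi/t+S^3/t^2))$.

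The delicate points are the following. First, the counting of shared indices must produce exactly the binomial weights so that the cancellation of $S^2/t$ is exact up to $O(S^3/t^2)$; this requires $t\ge4S$ to control the falling-factorial ratio. Second, the second-order Taylor terms in $\log(1+y_e\ell_e)$ must be controlled using $r_e\ge\ell_e/C$, which makes $y_e\ell_e\le CS/t\le C/4$. Both are absorbed into the $O_C$ constants.
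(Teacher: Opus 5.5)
This is essentially the paper's proof: the exact collision identity $\frac{\mathbb E[\Tree(H)^2]}{\Tree(G)^2}=\sum_{F}\frac{(t)_{2S-|F|}}{(t)_S^2}\prod_{e\in F}\frac{S}{r_e}\,\mu(F\subseteq T)^2$, negative correlation of spanning-tree edges, the identity $\sum_e\ell_e^2/r_e=S+\Xi(r\Vert\ell)$, and cancellation against $(t)_{2S}/(t)_S^2\le e^{-S^2/t}$; your bound $\mathbb E_\mu\prod_{e\in T}(1+y_e)\le\prod_e(1+y_e\ell_e)$ is an equivalent repackaging of the paper's $\mu(F\subseteq T)^2\le\prod_{e\in F}\ell_e^2$ followed by $e_k(x)\le(\sum_e x_e)^k/k!$. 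To make your sketch exact: each shared edge contributes only $1/p_e=S/r_e$ (no $w_e$, no binomial factor); the scale must be $y_e=\frac{S}{t-2S+1}\cdot\frac{\ell_e}{r_e}$, because $(t)_{2S-k}\le(t)_{2S}(t-2S+1)^{-k}$ holds whereas $(t)_{2S-k}\le(t)_{2S}t^{-k}$ is false and a mere $\Theta(S/t)$ scale would not cancel the $S^2/t$ terms; consequently the $n^3/t^2$ term is the mismatch $S^2/(t-2S+1)-S^2/t$, not a contribution of the ratio $(t)_{2S}/(t)_S^2$ itself; and $r_e\ge\ell_e/C$ is never used, since $\log(1+x)\le x$ already gives the upper bound.
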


By \Cref{lem:aggregate-determinant-sampling}, we can construct the determinant sparsifier by sampling from the approximate leverage score $r_e$. However, there is one additional wrinkle: computing $r_e$ requires computing $A$. Directly computing $A$ is not possible by our resistance sampler. So we provide the following lemma to compute $A$. We defer its proof to \Cref{sec:missing-proofs}.

\begin{lemma}
\label{lem:original-normalizer}
Put $S=n-1$ and $A=\sum_{e\in E}w_e\mathbb E_{\mathrm{query}}\widehat R_e$.
For every $0<\alpha<1/10$, in
$\widetilde O(\lambda^2n^2\alpha^{-1}+n\alpha^{-1/2})$ expected time,
one can compute $\overline A\ge S/32$, independently of the determinant
samples, such that
\[
 Z=(\overline A/A)^S,\qquad
 e^{-\alpha}\le\mathbb EZ\le e^\alpha,\qquad
 \frac{\mathbb EZ^2}{(\mathbb EZ)^2}\le e^\alpha.
\]
\end{lemma}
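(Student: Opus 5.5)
We estimate $A$ by Monte Carlo sampling of edges and use the aggregate Pearson bound of \Cref{thm:l1-resistance-sampler} to control the variance. Write $A=\sum_e w_e\mathbb E_{\rm query}\widehat R_e$. We have $S/16\le A\le 8S$ by the pointwise bounds. Moreover, $|A-S|\le\sum_e w_e|\mathbb E\widehat R_e-R_e|=O(\sqrt{S\cdot\lambda^2 n})=O(\lambda n)$ by Cauchy--Schwarz against Foster's identity.

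Since $Z=(\overline A/A)^S$, we need $\overline A$ to be an unbiased-ish estimator of $A$ with relative error roughly $\sqrt{\alpha}/S$ in standard deviation. Then $\log Z=S\log(\overline A/A)$ has mean $O(\alpha)$ and variance $O(\alpha)$, after a second-order correction.

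\textbf{Sampling scheme.} Sample edges $e$ with probability $p_e=u_e/U$, where $u_e$ are the constant-factor leverage upper bounds from \Cref{lem:res-sketch}, precomputed once with $U=\sum u_e=\Theta(S)$. For each sampled edge, query $\widehat R_e$ once with fresh randomness and form $X=w_e\widehat R_e/p_e$. Then $\mathbb EX=A$. The key variance bound decomposes as follows: write $w_e\widehat R_e=\ell_e+(w_e\widehat R_e-\ell_e)$, and note that the fixed part $\sum_e\ell_e=S$ is known exactly. So we instead estimate $D=A-S$ via $X'=(w_e\widehat R_e-\ell_e)/p_e$, but $\ell_e$ is unknown. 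The fix is a control variate using the resistance sketch: it is better to estimate $A$ directly, whose variance is
\[\mathbb E X^2-A^2\le \sum_e \frac{w_e^2\mathbb E\widehat R_e^2}{p_e}-A^2.\]
This is $O(S^2)$ in general, which is too large; so the plain estimator needs $S^2/\alpha$ samples.

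To improve, we use the exact identity from \Cref{lem:l1-resistance-aggregate-error}: $\sum_e w_eR^\lambda_e$ is computable, namely $\approx n(1-\lambda)$ up to $O(\eta m)$. We then write $A=\sum_e w_e\widehat R^{\lambda}_e+\sum_e w_e\mathbb E(\widehat E_e+\widehat S_e)$, and only Monte Carlo the correction term $\sum_e w_e(E_e+S_e)$. Each correction is $O(B_e)$ in magnitude (up to logs) with $\sum_e w_eB_e=O(\lambda n)$ by \Cref{lem:original-budget-mass}. Sampling edges proportional to $u_e$ gives a per-sample variance of order $S\sum_e w_e B_e^2/R_e\cdot$(query variance), i.e.\ $O(S\cdot\lambda^2 n)=O(\lambda^2n^2)$ using the Pearson bound. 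Hence $N=O(\lambda^2n^2S^2/(\alpha S^2))=O(\lambda^2n^2\alpha^{-1})$ samples suffice to make the relative standard deviation $\sqrt{\alpha}/S$. Each sample costs $\widetilde O(1)$ expected time by the mean query-time statement of \Cref{thm:l1-resistance-sampler}.

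\textbf{Bias and the $n\alpha^{-1/2}$ term.} $\mathbb E Z=\mathbb E(1+\delta)^S$ with $\delta=(\overline A-A)/A$ mean zero and variance $\sigma^2$. Expanding gives $\mathbb EZ\approx 1+\binom S2\sigma^2+\dots$, and higher moments need control, so I would average $N$ samples and apply the moment bound $\mathbb E(1+\delta)^S\le \exp(S^2\sigma^2)$ under sub-Gaussian/bounded increments. The bounded-increment requirement is where the second cost term enters. Truncation from $\widehat R_e\in[R^G_e/16,2R^G_e]$ makes each sample bounded by $O(S)$ relative to $A$. To keep $S\cdot\max|\text{increment}|/N\lesssim\sqrt\alpha$ we need $N\gtrsim S\alpha^{-1/2}$, giving $n\alpha^{-1/2}$. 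Finally, we clamp $\overline A\ge S/32$, which changes nothing with high probability since $A\ge S/16$. I expect the main obstacle to be this exponential-moment step: rigorously bounding $\mathbb E(\overline A/A)^{2S}$ rather than just the variance, which needs the Bernstein-type MGF bound with both the variance term and the range term.
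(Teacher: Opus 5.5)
Your outer layer essentially matches the paper: subtract an exactly computable baseline, Monte Carlo the remainder to absolute variance $O(\alpha)$ with per-trial range $\widetilde O(n)$, then use a Bernstein MGF bound and the clamp $\overline A=\max\{S/32,\widehat A\}$ to control $\mathbb EZ$ and $\mathbb EZ^2$. The gap is in the Monte Carlo step.

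\textbf{Leverage sampling of the correction is too noisy.} Sampling the correction proportionally to leverage does not give variance $O(\lambda^2n^2)$. Your per-trial quantity is, up to constants, $SD_e/R_e$, where
$D_e=\widehat R^{(0)}_e-\widehat R^\lambda_e=(\widehat R^{(0)}_e-R_e)+(R_e-\widehat R^\lambda_e)$.
The aggregate Pearson bound of \Cref{thm:l1-resistance-sampler} controls only the first summand. The second is a deterministic offset, controlled only in $\ell_1$ (via $\sum_ew_eB_e=O(\lambda n)$ and $B_e\le CR_e$). So your second moment $S\sum_ew_eB_e^2/R_e$ is only $O(\lambda n^2)$, and this is tight. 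Take a path of $\Theta(\lambda n)$ unit-weight cliques of size $\lceil1/\lambda\rceil+1$, consecutive cliques joined by one bridge. Every degree is at least $1/\lambda$, so KKT gives $z^*_s=\frac{1-\lambda}{d_s}e_s$. Each bridge then has $R_e=1$ but $R^\lambda_e=O(\lambda)$, hence $\sum_ew_e(R_e-R^\lambda_e)^2/R_e=\Omega(\lambda n)$. With $\lambda=\eps^{1/2}n^{-1/4}$ and $\alpha=\Theta(\eps^2)$ this costs $\eps^{-3/2}n^{7/4}$.

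The missing idea is to importance-sample by the \emph{error budget}. Draw $e$ with probability $w_e\widetilde B_e/U$, where $U=\sum_ew_e\widetilde B_e=O(\lambda n)$, and return $UD_e/\widetilde B_e$. Since $|D_e|=\widetilde O(B_e)$ pointwise w.h.p.\ (\Cref{lem:original-correction-variance,lem:original-slack-unbiased}), every trial is $\widetilde O(\lambda n)$ and the variance is $\widetilde O(\lambda^2n^2)$.

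Under this distribution a full query is no longer $\widetilde O(1)$ in expectation. Whenever $B_e=\Theta(R_e)$ (e.g.\ the bridges above) one has $d_e=\widetilde\Theta(\lambda^{-1})$, and such edges can carry most of $U$. The paper therefore takes $D_e=(\mathcal T_{J,s}-\mathcal T_{J,t})^2+X_1$, with one uniform $J\in\{1,\dots,d_e\}$ and one sample $X_1$ of \Cref{alg:original-slack-query}. This is unbiased for $\mathbb E(\widehat R^{(0)}_e-\widehat R^\lambda_e)$ given the index and costs $\widetilde O(1)$.

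\textbf{The clipping term is missing.} Your identity $A=\sum_ew_e\widehat R^\lambda_e+\sum_ew_e\mathbb E(\widehat E_e+\widehat S_e)$ is false. $A$ is defined through the clipped $\widehat R_e=\min\{2R^G_e,\max\{R^G_e/16,\widehat R^{(0)}_e\}\}$, so the term $C_{\rm clip}=\sum_ew_e\mathbb E(\widehat R_e-\widehat R^{(0)}_e)$ is omitted. Clipping requires $|\widehat R^{(0)}_e-R_e|=\Omega(R_e)$, so the aggregate bound only gives $|C_{\rm clip}|=O(\lambda^2n)$. That is far more than the $O(\alpha)$ bias that $e^{-\alpha}\le\mathbb EZ\le e^{\alpha}$ tolerates. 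You must either prove $C_{\rm clip}=O(\alpha)$ via per-edge tail bounds you have not given, or estimate it.

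The paper estimates it, and this is where the Pearson bound genuinely enters. Draw $e\propto w_ec_e$, run a full query (mean cost $\widetilde O(1)$), and return $(C_0/c_e)(\widehat R_e-\widehat R^{(0)}_e)$. The clipping interval $[c_e,32c_e]$ contains $R_e$, so $|\widehat R_e-\widehat R^{(0)}_e|\le|\widehat R^{(0)}_e-R_e|$. Hence the second moment is $O\bigl(S\sum_ew_e\mathbb E(\widehat R^{(0)}_e-R_e)^2/R_e\bigr)=O(\lambda^2n^2)$.

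Finally, the baseline is known exactly by the trace identity $\sum_ew_e\widehat R^\lambda_e=n-\sum_s\widetilde q_s(s)$. Use that instead of $n(1-\lambda)$ up to an $\eta$-dependent error.
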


We note that in \Cref{lem:original-normalizer}, guarantee $Z=(\overline A/A)^S$ is enough for the determinant sparsification. This is because every spanning tree has $n-1$ edges, and the multiplicative error produced by weighting the edges is at most $(\overline A/A)^S$ for every spanning tree. 

\begin{proof}[Proof of \Cref{thm:determinant-sparsifier}]
Given $e\in E$, for a query mark $\omega$ of probability $\nu_e(\omega)>0$, split
$e$ into a virtual edge of weight $w_e\nu_e(\omega)$ and define
\[
 \ell_{e,\omega}=w_e\nu_e(\omega)R_e,\qquad
 a_{e,\omega}=w_e\nu_e(\omega)\widehat R_e(\omega),\qquad
 r_{e,\omega}=Sa_{e,\omega}/A.
\]
This leaves the Laplacian and tree count unchanged. Moreover,
\[
 \sum_{e,\omega}\frac{(a_{e,\omega}-\ell_{e,\omega})^2}{a_{e,\omega}}
 =\sum_e w_e\mathbb E
       \frac{(\widehat R_e-R_e)^2}{\widehat R_e}
 =O(\lambda^2n).
\]
This is a deterministic bound on the virtual graph, so
\Cref{lem:renormalization-pearson} gives
$A=\Theta(S)$, $r_{e,\omega}\ge\ell_{e,\omega}/C$, and
$\Xi(r\Vert\ell)=O(\lambda^2n)$.

Next we generate the edge samples by rejection sampling. We use the resistance sketch $R_e^G$ with $R_e\le R_e^G\le 4 R_e$, we set $c_e=R_e^G/16$, $C_0=\sum_e w_ec_e=\Theta(S)$. Use the probability $p_e^0=w_ec_e/C_0$. Draw $e\sim p^0$, generate a fresh query mark $\omega$, query $\widehat R_e(\omega)$ by \Cref{thm:l1-resistance-sampler}, and accept with probability
$\widehat R_e(\omega)/(32c_e)$. Otherwise repeat the sampling process independently.
The probability of accepting $(e,\omega)$ in one attempt is
\[
 p_e^0\nu_e(\omega)\frac{\widehat R_e(\omega)}{32c_e}
 =\frac{a_{e,\omega}}{32C_0}.
\]
Thus condition on the success event, the sampling distribution is $a_{e,\omega}/A=r_{e,\omega}/S$.
The expected acceptance probability is $A/(32C_0)=\Omega(1)$. The expected query time for $\widehat R_e(\omega)$ is $\widetilde O(1)$ by \Cref{thm:l1-resistance-sampler}, since we sample edge from an constant factor estimation of the resistance. For each sampled edge $(e,\omega)$, we set the edge weight by \Cref{lem:aggregate-determinant-sampling} as $w_{e,\omega}'=(w_{e,\omega}S/r_{e,\omega})(1/(t)_S)^{1/S}=( A/\widehat R_e(\omega))(1/(t)_S)^{1/S}$. Since we can not know $A$ directly, we alternatively use the estimator $\overline A$ from \Cref{lem:original-normalizer}. 

Set $\alpha=c\eps^2$ and choose 
$t=\widetilde O(\lambda^2n^2/\alpha+n^{3/2}/\sqrt\alpha)$.
Generate $t$ independent accepted samples and independently obtain
$\overline A$ from \Cref{lem:original-normalizer}. \Cref{lem:aggregate-determinant-sampling} combining with \Cref{lem:original-normalizer} gives the sampling variance at most $\exp(O(\eps^2))$. As a result, we construct the $\eps$-determinant sparsifier and finishes the proof.
\end{proof}

\section{Sparsification of Schur complements and recursive tree counting}\label{sec:schur-det}

Finally, in this section we prove \Cref{thm:main}. To approximate $\Tree(G)$, we follow the
determinant sparcification framework of \cite{DPPR} (see also
\cite{CGPSSW,LiSachdeva}): at each node of size $N$, we eliminate a constant
fraction of the vertices and recurse on two smaller children, as illustrated
in \Cref{fig:dppr-recursion}.  Concretely, one finds a
diagonally dominant set $F\subseteq V$ with
$F=\Theta(N)$, puts $T=V\setminus F$, and splits the current Laplacian
into two subgraphs.  In the first child, the vertices of $F$ are completed
to a Laplacian by adjoining a single auxiliary vertex that absorbs all
edges from $F$ into $T$; this child encodes $\det(L_{FF})$.  In the second
child, one forms the Schur complement $K=\Sc(L,T)$ on the terminal set $T$.
Block elimination and Kirchhoff's theorem then give the identity
\[
 \Tree(G)=\det(L_{FF})\,\Tree(K),
\]
so it suffices to approximate the tree counts of the two children
separately.  The completion child is an explicit graph and can be
handled by recursion.  The Schur child $K$, however, is generally dense and
expensive to compute.  We therefore replace $K$ by a determinant sparsifier,
without forming $K$ explicitly.  Because our resistance estimates satisfy
only an aggregate Pearson bound rather than a pointwise guarantee, we need to strenghen several intermediate results of \cite{DPPR} to prove the sparsification on Schur complement. 

Consider at this moment we have an explicit graph $G$ with $N$ vertices and $M$ edges. We use the following Lemma to partition $G$ into two parts.

\begin{figure}
    \centering
    \makebox[\textwidth][c]{%
    \begin{tikzpicture}[
      graph/.style={
        draw=black!60,
        line width=0.65pt,
        rounded corners=7pt,
        fill=black!2,
        minimum height=9mm,
        inner xsep=9pt,
        inner ysep=3pt,
        font=\small,
        align=center
      },
      root/.style={graph, minimum width=6.2cm,
        draw=blue!55!black, fill=blue!5},
      schur/.style={graph, minimum width=2.75cm,
        draw=blue!55!black, fill=blue!7},
      completion/.style={graph, minimum width=2.55cm,
        draw=black!55, fill=black!4},
      branch/.style={draw=black!55, line width=0.7pt},
      arr/.style={branch, -{Stealth[length=2.1mm,width=1.4mm]}},
      op/.style={font=\scriptsize\sffamily, text=black!75,
        fill=white, inner sep=1.5pt, align=center}
    ]
    \node[root] (root) at (0,0)
      {$G$ on $N$ vertices\\[-1pt]{\footnotesize Laplacian $L$}};
    
    \node[schur] (L1) at (-3.45,-2.05) {$H^{(1)}$};
    \node[completion] (L2) at (3.45,-2.05) {$\widetilde G^{(1)}$};
    
    \node[schur] (L11) at (-5.25,-4.15)
      {$H^{(11)}$ on $T_1^{(11)}$};
    \node[completion] (L12) at (-1.75,-4.15) {$\widetilde G^{(11)}$};
    \node[schur] (L21) at (1.75,-4.15)
      {$H^{(21)}$ on $T_1^{(21)}$};
    \node[completion] (L22) at (5.25,-4.15) {$\widetilde G^{(22)}$};
    
    \coordinate (j0) at (0,-0.85);
    \coordinate (j1) at (-3.45,-2.90);
    \coordinate (j2) at (3.45,-2.90);
    \draw[branch] (root.south) -- (j0);
    \draw[arr] (j0) -| (L1.north);
    \draw[arr] (j0) -| (L2.north);
    \draw[branch] (L1.south) -- (j1);
    \draw[arr] (j1) -| (L11.north);
    \draw[arr] (j1) -| (L12.north);
    \draw[branch] (L2.south) -- (j2);
    \draw[arr] (j2) -| (L21.north);
    \draw[arr] (j2) -| (L22.north);
    
    \fill[blue!55!black] (j0) circle (1.15pt);
    \fill[blue!55!black] (j1) circle (1.05pt);
    \fill[black!55] (j2) circle (1.05pt);
    
    \node[op] at (-3.45,-1.18)
      {$\mathsf{SchurSparse}(L,T)$};
    \node[op] at (3.45,-1.18)
      {$\mathsf{AddRowColumn}(L_{FF})$};
    \node[op] at (-5.25,-3.25) {Schur sparsify};
    \node[op] at (-1.75,-3.25) {complete};
    \node[op] at (1.75,-3.25) {Schur sparsify};
    \node[op] at (5.25,-3.25) {complete};
    \end{tikzpicture}
    }
    \caption{Two layers of the determinant-preserving recursion
    of~\cite{DPPR}.  At each internal node, the Laplacian splits into a
    Schur-sparsification child on the terminal set~$T$ (replacing the Schur
    complement~$K$ by a sparse random graph~$H$) and a completion child on the
    eliminated set~$F$ (completing~$L_{FF}$ to a Laplacian on~$F\cup\{r\}$).}
    \label{fig:dppr-recursion}
    \end{figure}

\begin{lemma}\cite{KLP}
\label{lem:dd-set-schur}
Let $G=(V,E,w)$ be a connected weighted graph with $N$ vertices, $M$ edges, and
Laplacian $L$.  In $O(M)$ expected time one can find a $1.1$-diagonally
dominant set $F\subseteq V$ with $N/16\le |F|\le N/8$.  
\end{lemma}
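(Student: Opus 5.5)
The plan is to reproduce the randomized construction of \cite{KLP} (as used in \cite{DPPR}), which finds an $\alpha$-diagonally dominant set by sampling and filtering. Recall the definition: $F$ is $\alpha$-diagonally dominant (here $\alpha=1.1$) if for every $u\in F$,
\[
 d_u\ \ge\ \alpha\sum_{v\in F,\,v\ne u}w_{uv}.
\]
Equivalently, at most a $1/\alpha$ fraction of $u$'s weighted degree goes into $F$.

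The algorithm proceeds in rounds. First, sample a uniformly random subset $F'\subseteq V$ of size $\lfloor N/8\rfloor$. For each $u\in F'$, compute its internal weight $\sum_{v\in F'}w_{uv}$ by a single scan of the adjacency lists, which costs $O(M)$. Let $F$ consist of those $u\in F'$ with
\[
 \sum_{v\in F'}w_{uv}\le d_u/\alpha .
\]
Removing vertices only decreases internal weights, so the surviving set $F$ is still $\alpha$-diagonally dominant, and trivially $|F|\le N/8$.

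It remains to show $|F|\ge N/16$ with constant probability. Fix $u$ and condition on $u\in F'$. Each other vertex lies in $F'$ with probability at most $1/8$, so
\[
 \mathbb E\Bigl[\sum_{v\in F'}w_{uv}\Bigr]\le d_u/8 .
\]
By Markov's inequality, the probability that this sum exceeds $d_u/1.1$ is at most $1.1/8<1/7$. Thus the expected number of discarded vertices is at most $|F'|/7$. A second Markov bound on the discarded count shows that with probability at least a constant (e.g.\ $\ge 1-2/7$), at most $2|F'|/7$ vertices are discarded. In that case $|F|\ge \tfrac57\lfloor N/8\rfloor\ge N/16$ for $N$ large; small $N$ are handled directly.

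If a round fails, we repeat. This gives an expected $O(1)$ rounds of $O(M)$ time each, hence $O(M)$ expected time overall.

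The only delicate point is the constant bookkeeping: it must hold simultaneously that $1.1\cdot(1/8)$ is comfortably below $1$ and that the surviving fraction stays above $1/2$. The computation above shows there is slack in both. If finer control were needed, one could instead sample each vertex independently with probability $1/8$ and use Chernoff bounds for the size. Connectivity of $G$ is not needed beyond ensuring $d_u>0$.
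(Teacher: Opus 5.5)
Your proof is correct and is essentially the sampling-and-filtering argument of \cite{KLP} that the paper cites without reproducing: take a random $\lfloor N/8\rfloor$-subset, remove the vertices whose internal weight exceeds $d_u/1.1$ (diagonal dominance survives because removal only lowers internal weights), apply two Markov bounds, and repeat an expected $O(1)$ times. Only cosmetic fixes are needed: with threshold $2|F'|/7$ the second Markov bound gives success probability at least $1/2$ rather than $5/7$, and ``small $N$'' should mean $8\le N<24$, since no integer lies in $[N/16,N/8]$ when $N<8$ (a single vertex works for $8\le N\le 16$, and for $17\le N\le 23$ a nonadjacent pair, or the minimum-weight edge when $G$ is complete, works).
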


Given $F$ output by \Cref{lem:dd-set-schur}, put
$T=V\setminus F$, $K=\Sc(L,T)=L_{TT}-L_{TF}L_{FF}^{-1}L_{FT}$, and
$S_T=|T|-1$.  Then $\Tree(G)=\det(L_{FF})\,\Tree(K)$. Instead of computing the Schur complement $K$ directly, we can compute the determinant sparcification of $K$ without explicitly constructing $K$. Our goal is to prove the following Lemma, which states the determinant sparcification on Schur complement.

\begin{lemma}[Aggregate-error Schur sparsification]
    \label{lem:aggregate-schur-sparsification}\label{lem:marked-schur-sparsification}
    Let $F$ selected by \Cref{lem:dd-set-schur}, $T=V\setminus F, K=\Sc(L,T)$.
    For every $0<\alpha<1/10$, there is a procedure that outputs a random graph
    $H$ on $T$ with
    $\widetilde O(N^{3/2}\alpha^{-1/2}+\lambda^2 N^2\alpha^{-1})$ edges.  Its
    expected running time is
    $\widetilde O(M+N\lambda^{-2}+N^{3/2}\alpha^{-1/2}
    +\lambda^2 N^2\alpha^{-1})$.  Moreover, $H$ satisfies
    \[
     e^{-\alpha}\Tree(K)\le \mathbb E[\Tree(H)]\le e^{\alpha}\Tree(K),
     \qquad
     \frac{\mathbb E[\Tree(H)^2]}{(\mathbb E[\Tree(H)])^2}\le e^{\alpha}.
    \]
    \end{lemma}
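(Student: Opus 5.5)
The plan is to reuse the proof of \Cref{thm:determinant-sparsifier}, with $G$ replaced by the implicit Schur complement $K$. The key is the standard DPPR representation of $K$ as a mixture of short random-walk edges. Because $F$ is $1.1$-diagonally dominant, a walk from a terminal that enters $F$ leaves $F$ after $O(1)$ expected steps, with a geometric tail. Concretely, $K=\sum_{P}w_P b_{u,v}b_{u,v}^T$, summing over walks $P=(u,x_1,\dots,x_k,v)$ with $u,v\in T$ and all internal vertices in $F$. Here $w_P=w_{u x_1}\prod_i (w_{x_i x_{i+1}}/d_{x_i})$. Each such walk is treated as a virtual parallel edge of $K$, exactly as query marks were treated as virtual edges before. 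The resulting virtual multigraph has Laplacian $K$. The total mass can be sampled in $\widetilde O(1)$ expected time per sample: pick an edge of $G$ incident to $T$ (or in $F$) and extend it by random walks in both directions until both ends hit $T$.

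Resistances are handled through \Cref{prop:schur-original-resistance}. For $u,v\in T$ we have $R^K_{u,v}=R^G_{u,v}$, so the leverage score of a virtual edge $P$ with endpoints $u,v$ is $w_PR^G_{u,v}$. The first step is to build the sampler of \Cref{thm:l1-resistance-sampler} on the explicit graph $G$, with $N$ vertices and $M$ edges. This costs $\widetilde O(M+N\lambda^{-2})$ and answers queries $\widehat R_{u,v}$ for arbitrary pairs. The remaining difficulty is that its aggregate Pearson guarantee is stated only for edges of $G$, not for endpoint pairs of walks. To handle this, I will rerun the budget argument of \Cref{lem:original-correction-variance} and \Cref{lem:original-budget-mass} with weights $w_P$ in place of $w_e$. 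This requires a Schur analogue of each ingredient:
\begin{enumerate}
\item Foster on $K$, giving $\sum_P w_PR_{u,v}=|T|-1$.
\item Bounds on $\sum_P w_P E_{u,v}$ and $\sum_P w_P\|\Psi_u-\Psi_v\|_2^2$.
\end{enumerate}
For the second item, I would bound pair terms along a walk by triangle and Cauchy--Schwarz over its $G$-edges, e.g.\ $\|\Psi_u-\Psi_v\|^2\le (k+1)\sum_{\text{steps}}\|\Psi_a-\Psi_b\|^2$. The geometric length distribution coming from diagonal dominance makes the weighted multiplicity of each $G$-edge $O(1)$ in expectation. This would give $\sum_P w_PB_{u,v}=O(\lambda N)$, and hence a virtual Pearson error of $O(\lambda^2N)$.

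This step is where I expect the main obstacle. $E_{u,v}$ is not a metric, since it is defined by a $L^\dagger$-norm of residual differences. I would try to write $E_{u,v}=\|\widetilde q_u-\widetilde q_v\|^2_{L^\dagger}$ and use the triangle inequality in the $L^\dagger$ seminorm along the walk, absorbing the factor $k+1$ via $\mathbb E[k^2]=O(1)$. If this fails, the fallback is to control $E$ through $R-R^\lambda$ on $K$ directly.

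The rest follows \Cref{thm:determinant-sparsifier}. Sample a walk-edge by the procedure above together with a fresh query mark. Accept by rejection against the sketch $R^G_{u,v}$ of \Cref{lem:res-sketch}, which yields exactly the distribution $r_{P,\omega}/S_T$. The expected query cost is $\widetilde O(1)$ because edges are drawn proportionally to approximate leverage. Then apply \Cref{lem:renormalization-pearson} and \Cref{lem:aggregate-determinant-sampling} with $S=S_T$, and estimate the normalizer by the analogue of \Cref{lem:original-normalizer}, using walk-samples as edges. Choosing $t=\widetilde O(\lambda^2N^2\alpha^{-1}+N^{3/2}\alpha^{-1/2})$ makes the exponent $O(S\Xi/t+N^3/t^2)$ at most $\alpha$. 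Combined with $\mathbb EZ\in[e^{-\alpha},e^{\alpha}]$ and $\mathbb EZ^2/(\mathbb EZ)^2\le e^\alpha$, and after rescaling $\alpha$ by a constant, this gives both the expectation and the second-moment bounds, as well as the stated edge count and running time.
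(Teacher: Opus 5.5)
Your overall architecture matches the paper's. You treat walks through $F$ as virtual parallel edges of $K$, use $R^K=R^G$ on $T$, build the sampler of \Cref{thm:l1-resistance-sampler} once on $G$, apply Foster on $K$, and reduce the Schur Pearson error to $\sum_P w_PB_{u,v}=O(\lambda N)$, followed by \Cref{lem:renormalization-pearson}, \Cref{lem:aggregate-determinant-sampling} and a normalizer. You differ in how you prove that budget bound. The paper notes that $B_{s,t}$ is a squared Euclidean distance for the embedding $s\mapsto\bigl(L^{\dagger/2}\widetilde q_s,\sqrt{\lambda/\Lambda}\,\Psi_s,\sqrt\lambda\,L^{\dagger/2}e_s\bigr)$. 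It then applies the energy inequality $f_T^TKf_T\le f^TLf$ coordinatewise (the Schur complement energy is the minimum over extensions). This gives $\sum_Pw_PB_{s,t}\le\sum_e w_eB_e$ with constant $1$ and no use of diagonal dominance. Your Cauchy--Schwarz-along-the-walk route also works. For a fixed oriented step $a\to b$, the walks through it factor as (reversed walk from $a$ to $T$)$\cdot w_{ab}\cdot$(walk from $b$ to $T$), and the outer factors are exactly random-walk path probabilities. Hence $\sum_{(P,\text{occurrence})}w_P(k+1)\le w_{ab}(1+\mathbb E_a\tau_T+\mathbb E_b\tau_T)=O(w_{ab})$, where $\tau_T$ is the hitting time of $T$, which is $O(1)$ by $1.1$-diagonal dominance. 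Your worry about $E_{u,v}$ is not an obstacle: $\sqrt{E_{u,v}}=\|L^{\dagger/2}(\widetilde q_u-\widetilde q_v)\|_2$ is a pseudometric, so your proposed fix is exactly right. The fallback is unnecessary, and it would not apply directly anyway, since the $\widetilde z_s$ are potentials for $G$, not $K$. The paper's argument is shorter and holds for any terminal set. Yours loses a constant and uses the structure of $F$, but it is the same inequality $B_{st}\le|p|\sum_{e\in p}B_e$ that the paper needs anyway for the $C_{\rm raw}$ estimator in \Cref{lem:schur-path-normalizer}.

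Two steps you treat as routine do not work as described. First, the sampler. As written, your proposal draws walks with mass $\propto w_P$ (``the total mass''). Rejecting that against $R^G_{u,v}$ has acceptance rate roughly $S_T/(R_{\max}\sum_Pw_P)$, which can be polynomially small; a clique plus a path already gives about $1/n$. You need the proposal of \Cref{lem:path-proposal}. Seed an edge with probability $\propto w_eR^G_e$, extend both ends to $T$, and return $p^0_P\propto w_P\sum_iR^G_{u_iu_{i+1}}$, summed over all seed positions. The resistance triangle inequality then gives $w_PR_{u_0u_k}\le\rho S_Tp^0_P$. The cheap sketch gate must also come before the $\widehat R$ query, as in \Cref{alg:schur-two-stage-rejection}, so that queried pairs are distributed $\propto w_Pc_{uv}$. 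The mean query cost is then $O\bigl(1+(\Lambda+\log n)^2(\lambda S_T)^{-1}\sum_Pw_PB_{u,v}\bigr)=\widetilde O(1)$ by your Schur budget bound; \Cref{thm:l1-resistance-sampler} asserts this only for edges of $G$.

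Second, the normalizer is not a verbatim analogue of \Cref{lem:original-normalizer}. There the main term $\sum_ew_e\widehat R^\lambda_e=n-\sum_s\widetilde q_s(s)$ is known exactly. On $K$, however, $K\widetilde z_{s,T}=e_s-\widetilde q_{s,T}-P^T\widetilde q_{s,F}$ with $P=-L_{FF}^{-1}L_{FT}$, so the main term acquires the nonlocal correction $\sum_{x\in F,s\in T}P_{xs}\widetilde q_s(x)$. The paper estimates this correction by walking from a uniform $x\in F$ to its first terminal, using $0\le\widetilde q_s(x)\le2\lambda$ to get per-sample variance $O(\lambda^2N^2)$. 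Estimating the $\Theta(N)$ main term by plain Monte Carlo over walk samples would instead have per-sample variance $\Theta(N^2)$. Since $Z=(\overline A/A)^{S_T}$ needs $\operatorname{Var}\overline A=O(\alpha)$ in absolute terms, that would require about $N^2/\alpha$ samples and break the claimed running time.
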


We first use \Cref{lem:aggregate-schur-sparsification} to prove \Cref{thm:main}, and then we prove \Cref{lem:aggregate-schur-sparsification}. As discussed above, we follow the determinant sparsification strategy of \cite{DPPR}.

\begin{proof}[Proof of \Cref{thm:main}]
We use the recursion of \cite[Section~6]{DPPR}, with our
\Cref{lem:aggregate-schur-sparsification} in place of its Schur sparsifier. See \Cref{fig:dppr-recursion} as an illustration of this framework.
Consider at the current level of recursion, we have a graph $G_x=(V_x,E_x,w_x)$, let $L=L_x$ be the current Laplacian, let $N=N_x$ be its number of
vertices.  Stop at
$N\le O(\lambda^{-1})$ and evaluate the reduced
determinant directly, so every internal node satisfies $1/N<\lambda$.
Choose the $1.1$-diagonally dominant set $F$ from
\Cref{lem:dd-set-schur}, set $T=V_x\setminus F$, and write
$K=\Sc(L,T)$.  One child is the random Laplacian $H$ returned by
\Cref{lem:aggregate-schur-sparsification}; the other is obtained by completing
$L[F,F]$ to a Laplacian with one additional vertex.  
Kirchhoff's theorem and block elimination gives us that
\[
 \Tree(L)=\det L[F,F]\,\Tree(K),
 \qquad
 \Tree(H)\det L[F,F]
   =\Tree(L)Y,
 \quad
 Y:=\frac{\Tree(H)}{\Tree(K)}.
\]

It is easy to see that the recursion depth is $D=O(\log n)$.
We next verify the accumulation of the random error.  Put $r_0=n-1$,
let $\delta_0=c_0\eps^2/D$ for a sufficiently small absolute constant
$c_0>0$. We give the current recursion graph $G_x$ an error budget
$\alpha_x=\delta_0 (N_x-1)/(n-1)$. Since the recursion depth is $D$, summing over all $x$ gives us $\sum_x\alpha_x\le D\delta_0$.

Let $G_w$ be the parant graph of $G_x$ in the recursion. So conditioning on the parent graph $G_w$,
the expectation and variance analysis by \Cref{lem:aggregate-schur-sparsification} gives
$\mathbb E[Y_x\mid G_w]=\exp(\pm\alpha_x)$ and
$\mathbb E[Y_x^2\mid G_w]\le \exp (3\alpha_x)$.
The final output ratio $W=\prod_xY_x$ therefore satisfies
\[
 \mathbb E(W-1)^2
 \le \exp(3D\delta_0)-2\exp(-D\delta_0)+1=O(D\delta_0).
\]
By setting $\delta_0=c_0\eps^2/D$, this gives an $\eps$-approximation
to $\Tree(G)$ with probability $2/3$.

It remains to check the total runtime.  Consider all the graphs $\{G_x\}$ on same recursion level, we have that 
$\sum_{\text{same level }x}N_x=O(n)$.  Substituting
$\alpha_x=\delta_0(N_x-1)/(n-1)$ into
\Cref{lem:aggregate-schur-sparsification} and summing over the same level graphs gives us
\[
 \sum_{\text{same level }x}\left(N_x^{3/2}\alpha_x^{-1/2}
       +\lambda^2 N_x^2\alpha_x^{-1}
       +N_x\lambda^{-2}\right)
 =\widetilde O\!\left(n^{3/2}\delta_0^{-1/2}
       +\lambda^2 n^2\delta_0^{-1}+n\lambda^{-2}\right).
\]

The recursion has $D=O(\log n)$ total levels. So the total expected runtime is
$\widetilde O(m+n\lambda^{-2}+n^{3/2}\eps^{-1}
+\lambda^2 n^2\eps^{-2})$. This proves the theorem by setting
$\lambda=n^{-1/4}\eps^{1/2}$.
\end{proof}

Next we prove \Cref{lem:aggregate-schur-sparsification}. Our overall goal is the three objects: (i) sampling edge of the Schur complement $K$ proportional to the estimation of leverage score on $K$, without explicitly constructing $K$;  (ii) showing that the Pearson aggregate error between the real leverage score and the estimated leverage score on Schur complement is $\widetilde O(\lambda^2N)$; (iii) showing that the sample complexity of an edge is $\widetilde O(1)$ in expectation. If these three objects are satisfied, then we can then prove \Cref{lem:aggregate-schur-sparsification} by directly using the sample variance bound of \Cref{lem:aggregate-determinant-sampling}.

We begin with the random walk representation of the Schur complement by \cite{DPPR}. They show that the edges of Schur can be viewd as the random walks in the original graph.

\begin{lemma}\cite[Section~5]{DPPR}
\label{lem:schur-path-representation}
Write $L_{FF}=D_{FF}-A_{FF}$.  Let $\mathcal P_{T,F}$ contain one
labelled undirected path
$p=(u_0,u_1,\ldots,u_k)$ for every finite walk whose distinct endpoints
$u_0,u_k$ lie in $T$ and whose internal vertices lie in $F$. Give $p$ weight defined as
\[
 w_p=\frac{\prod_{i=0}^{k-1}w_{u_i u_{i+1}}}
 {\prod_{i=1}^{k-1}d_{u_i}}.
\]
Then the Schur complement is the Laplacian of this labelled
multigraph:
\[
 K=\sum_{p\in\mathcal P_{T,F}}w_p
 b_{u_0,u_k}b_{u_0,u_k}^{T}.
\]
Consequently, a labelled edge of $K$ may be viewed as a path that starts in
$T$, travels only through $F$, and returns to $T$.  Its leverage score is
$\ell_p=w_pR_K(u_0,u_k)$, and
$\sum_{p\in\mathcal P_{T,F}}\ell_p=|T|-1$.
\end{lemma}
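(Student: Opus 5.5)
The plan is to prove that the labelled walk multigraph has Laplacian $K$ by comparing both sides entrywise, using the Neumann series for $L_{FF}^{-1}$. I will then read off the leverage-score sum from Foster's theorem applied to $K$. Write $L_{FF}=D_{FF}-A_{FF}=D_{FF}^{1/2}(I-P)D_{FF}^{1/2}$ with $P=D_{FF}^{-1/2}A_{FF}D_{FF}^{-1/2}$.

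\textbf{Step 1: Neumann series.} Since $G$ is connected and $T\neq\varnothing$, the matrix $L_{FF}$ is invertible, and $\rho(P)<1$ by the standard irreducible substochastic argument: every component of $G[F]$ has a vertex adjacent to $T$, so the walk matrix $D_{FF}^{-1}A_{FF}$ is strictly substochastic on each component. The $1.1$-diagonal dominance of $F$ makes this explicit, but it is not needed. Hence
\[
 L_{FF}^{-1}=\sum_{j\ge0}(D_{FF}^{-1}A_{FF})^jD_{FF}^{-1}.
\]
Expanding entrywise, $(L_{FF}^{-1})_{ab}$ is the sum over walks $a=u_1,\dots,u_{k-1}=b$ inside $F$ of $\prod_{i=1}^{k-2}w_{u_iu_{i+1}}\big/\prod_{i=1}^{k-1}d_{u_i}$, with all terms nonnegative.

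\textbf{Step 2: Off-diagonal entries.} For distinct $x,y\in T$,
\[
 K_{xy}=(L_{TT})_{xy}-\sum_{a,b\in F}(L_{TF})_{xa}(L_{FF}^{-1})_{ab}(L_{FT})_{by}.
\]
Substituting $(L_{TF})_{xa}=-w_{xa}$ and $(L_{FT})_{by}=-w_{by}$ gives $-w_{xy}$ minus the sum of $w_p$ over all walks from $x$ to $y$ with at least one internal vertex, all internal vertices in $F$. The term $-w_{xy}$ is the length-one walk. So $K_{xy}=-\sum_{p:x\to y}w_p$, which matches the off-diagonal entry of $\sum_p w_p b_{u_0u_k}b_{u_0u_k}^T$.

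Here I must fix the labelling convention carefully, and this is the only subtle point. Each undirected labelled path should correspond to one walk up to reversal, so that $x\to y$ and $y\to x$ are not counted twice. The weight $w_p$ is reversal-symmetric, since the edge product and the internal-degree product are both symmetric. So the entrywise match holds once I identify a walk with its reverse.

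\textbf{Step 3: Diagonal entries and conclusion.} Both $K$ and the multigraph Laplacian have zero row sums: $K$ because it is a Schur complement of a Laplacian (as noted in the preliminaries), and the multigraph Laplacian by construction. Therefore the diagonal entries agree as well. I expect this bookkeeping of walks that return to their start, and of labels, to be the main obstacle.
\begin{itemize}
\item Walks from $x$ back to $x$ contribute to $(L_{TT})_{xx}-(L_{TF}L_{FF}^{-1}L_{FT})_{xx}$.
\item They cancel exactly against the part of $d_x$ corresponding to returning excursions.
\item The zero-row-sum argument avoids tracking this cancellation explicitly.
\end{itemize}
Finally, \Cref{lem:foster} applied to the connected Laplacian $K$, viewed as a multigraph with edge weights $w_p$, gives $\sum_p w_pR_K(u_0,u_k)=|T|-1$. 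By \Cref{prop:schur-original-resistance}, $R_K(u_0,u_k)=R_G(u_0,u_k)$, so the leverage scores may equivalently be computed in $G$.
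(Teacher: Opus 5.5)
Your argument is correct and is the standard Neumann-series proof behind the cited DPPR result. The paper itself gives no proof beyond the citation, and its phrasing ``Write $L_{FF}=D_{FF}-A_{FF}$'' points to exactly this expansion. Matching off-diagonal entries, recovering the diagonal from zero row sums, and identifying each walk with its reversal are all the right moves; without that identification you would get $2K$. For the Foster step, aggregate the countably many path-edges between each pair $\{x,y\}$ into one edge of weight $-K_{xy}$, and note that $K$ is connected because $G$ is, so the finite-graph identity $\sum_e w_eR_e=|T|-1$ applies directly.
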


By \Cref{lem:schur-path-representation}, we have the following overall strategy for sampling edge on the Schur complement $K$ proportional to the estimation of leverage score: we first sample a labelled path $p\in \mathcal P_{T,F}$ via random walk, assuming $p=(u_0,u_1,\ldots,u_k)$ with $u_0,u_k\in T$. Then we compute the estimation of resistance $\widehat R^G(u_0,u_k)$ on original graph $G$. By \Cref{prop:schur-original-resistance}, $R^G_{u_0,u_k}=R^{\Sc(L,T)}_{u_0,u_k}$, so this can be also viewed as the estimation of resistance on the Schur complement. Then we use rejection sampling to guarantee that the sampled edge is proportional to the estimation of leverage score on the Schur complement.

We first focus on showing that the aggregate Pearson error of the estimated resistance on the Schur complement is $\widetilde O(\lambda^2N)$. Recall that in \Cref{sec:det-sparsifier}, we conceptually split graph $G$ with virtual parallel edges. For each edge $e$, let the query output of resistant estimator be marked as $\widehat R_e(\omega)$, with occur probability $\nu_e(\omega)$. We split $e$ into parallel copies $(e,\omega)$ with weights $w_{e,\omega}=w_e\nu_e(\omega)$.  The aggregate Pearson error bound holds for this virtual graph by \Cref{thm:l1-resistance-sampler}, and next we need to show this also holds for Schur complement $K$. 

For Schur complement, we use the similar construction. Let $1/N<\lambda\le1/4$ and $S_T=|T|-1=\Theta(N)$.
Consider a path $p\in\mathcal P_{T,F}$ with endpoints $s,t\in T$. Assume the resistance query output $\widehat R_{s,t}(\omega)$ with probability $\nu_{s,t}(\omega)$. For the virtual copy $(p,\omega)$, set
\[
 a_{p,\omega}=w_p\nu_{st}(\omega)a_{st}(\omega),\qquad
 \ell_{p,\omega}=w_p\nu_{st}(\omega)R_K(s,t).
\]
Also set
\[
A=\sum_{p,\omega}a_{p,\omega}, \quad \text{and} \quad
r_{p,\omega}=S_Ta_{p,\omega}/A. 
\]

Clearly, it holds that $\sum_{p,\omega}r_{p,\omega}=S_T$ and $r_{p,\omega}\ge\ell_{p,\omega}/C$ by \Cref{thm:l1-resistance-sampler}. We next show the aggregate Pearson error of the resistance estimation preserves for Schur complement $K$.

\begin{lemma}
\label{lem:schur-path-scores}
It holds that
\[
 \sum_{p,\omega}\frac{(r_{p,\omega}-\ell_{p,\omega})^2}{r_{p,\omega}}
 =O(\lambda^2N).
\]

\end{lemma}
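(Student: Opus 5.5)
The plan is to reduce the claim to the aggregate Pearson bound of \Cref{thm:l1-resistance-sampler}, taken on a graph whose edge set is exactly the set of terminal pairs of $K$, and then renormalize with \Cref{lem:renormalization-pearson}.

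The key observation is that the virtual weight attached to a pair $\{s,t\}\subseteq T$ is
\[
 \sum_{p:\,s\to t}w_p=K_{st}^{-},
\]
the (negated) off-diagonal entry of the Schur complement. Summing over paths with fixed endpoints, and writing $a_{st}(\omega)=\widehat R_{s,t}(\omega)$ (up to the weight factor), we get
\[
 \sum_{p,\omega}\frac{(a_{p,\omega}-\ell_{p,\omega})^2}{a_{p,\omega}}
 =\sum_{\{s,t\}\subseteq T}K^{-}_{st}\,\mathbb E_{\rm query}\frac{(\widehat R_{s,t}-R_K(s,t))^2}{\widehat R_{s,t}}.
\]
By \Cref{prop:schur-original-resistance}, $R_K(s,t)=R_G(s,t)$, so this is the Pearson functional of the resistance sampler weighted by the edges of $K$ rather than of $G$.

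The main obstacle is that \Cref{thm:l1-resistance-sampler} was proved for the edges of the graph on which the sampler is built. Here the queries are on non-edges of $G$, weighted by $K$. The cleanest fix is to rerun the proof of \Cref{thm:l1-resistance-sampler} with $K$ as the ambient graph. All the ingredients are statements about the Laplacian that only use its edges in the final summation:
\begin{itemize}
\item \Cref{lem:original-correction-variance} holds pointwise for every pair $s\ne t$.
\item Only \Cref{lem:original-budget-mass} and Foster's identity depend on the edge weights.
\end{itemize}
So I need to show
\[
 \sum_{\{s,t\}}K^{-}_{st}B_{s,t}=O(\lambda N),
\]
where the potentials $\widetilde z_s$ are those of $G$ and $B_{s,t}$ is therefore defined via $G$. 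I would bound the three pieces of $B_{s,t}$ as follows.
\begin{enumerate}
\item The term $\lambda\sum K^-_{st}R_{s,t}$ is $\lambda(|T|-1)$ by Foster's identity on $K$ together with \Cref{prop:schur-original-resistance}.
\item For the $\Psi$ term, use that for any vector $f$ on $V$,
\[
 \sum_{\{s,t\}\subseteq T}K^-_{st}(f_s-f_t)^2=f_T^TKf_T\le f^TLf,
\]
since the Schur complement is the energy minimizer over extensions. Applying this with $f=\Psi_\cdot(v)$ coordinatewise gives $\sum K^-_{st}\|\Psi_s-\Psi_t\|_2^2\le\sum_{e\in E}w_e\|\Psi_s-\Psi_t\|^2=O(n\Lambda)$. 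This is the bound from \Cref{lem:original-budget-mass}, but at the scale of the whole graph rather than $N$. That is acceptable only if the sampler is built on the current recursion graph of size $N$, which I will assume.
\item For the $E_{s,t}$ term, $E_{s,t}\le R_{s,t}-R^\lambda_{s,t}$ up to $O(\eta)$. I need
\[
 \sum K^-_{st}(R_{s,t}-R^\lambda_{s,t})\le\lambda N.
\]
With $Z=[z_s^*]$, the quantity $\sum K^-_{st}b_{st}^TZb_{st}$ equals $\Tr(KZ_{TT})$. I would lower bound it by $\Tr(K Z_{TT})\ge \sum_{s\in T}(1-O(\lambda)\cdot)$ using the maximum principle argument of \Cref{lem:original-budget-mass}, i.e. $(Kz^*_{s,T})_s\ge 1-\lambda-$(mass of $q_s^*$ escaping into $F$). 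This is the step I expect to be hardest. I would try the harmonic-extension identity $K z_T=(L\bar z)_T$, where $\bar z$ is the harmonic extension, and compare $\bar z$ with $z^*$ through the maximum principle.
\end{enumerate}

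Once $\sum K^-_{st}B_{s,t}=O(\lambda N)$ is in hand, the choice of $d_{s,t}$ in \Cref{thm:l1-resistance-sampler} gives
\[
 \sum_{p,\omega}\frac{(a-\ell)^2}{a}=O(\lambda^2N),
\]
exactly as in that proof. The pointwise bounds $\ell/16\le a\le 8\ell$ follow from clamping. \Cref{lem:renormalization-pearson}, applied with $S_T$ in place of $S$, then transfers the bound to $r_{p,\omega}$ and finishes the proof.
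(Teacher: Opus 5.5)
Your architecture matches the paper's. You use the pointwise bound $\mathbb E_{\rm query}(\widehat R_{s,t}-R_{s,t})^2/\widehat R_{s,t}\le C\lambda B_{s,t}$ for every pair, with $B_{s,t}$ defined on the current graph, and reduce to $\sum_{\{s,t\}\subseteq T}K^-_{st}B_{s,t}=O(\lambda N)$. You then apply the energy inequality $f_T^TKf_T\le f^TLf$ to the $\Psi$ term. Steps (1) and (2) are fine.

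\textbf{The gap is step (3).} You leave it open, and the tool you propose points the wrong way. Let $\bar z$ be the harmonic extension of $z^*_{s,T}$ and $h=\bar z-z^*_s$. Then $h_T=0$ and $(Lh)_F=q^*_{s,F}\ge0$, so the maximum principle gives $\bar z\ge z^*_s$ on $F$. Hence $(Kz^*_{s,T})_s=(L\bar z)_s\le(Lz^*_s)_s=1-\lambda$. This is an \emph{upper} bound on $\Tr(KZ_{TT})$, but you need a lower bound. Moreover, if the ``mass of $q^*_s$ escaping into $F$'' is controlled one source at a time, you only get $\sum_{x\in F}q^*_s(x)\le1$ for each $s$. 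That gives a total deficit of order $|T|$, not $\lambda N$.

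\textbf{The missing observation}, which is what the paper uses, is that $E_{s,t}=\|L^{\dagger/2}\widetilde q_s-L^{\dagger/2}\widetilde q_t\|_2^2$ is itself a squared Euclidean distance between vertex embeddings. It has the same form as $\|\Psi_s-\Psi_t\|_2^2$ and $R_{s,t}=\|L^{\dagger/2}(e_s-e_t)\|_2^2$. Apply your coordinatewise energy inequality from (2) to the embedding $s\mapsto L^{\dagger/2}\widetilde q_s$. This gives $\sum K^-_{st}E_{s,t}\le\sum_{e\in E(G)}w_eE_e=O(\lambda N)$ directly from \Cref{lem:original-budget-mass}. All three parts of $B_{s,t}$ are then handled uniformly, with no detour through $R-R^\lambda$.

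If you prefer to keep your trace route, close it with the exact block-elimination identity
\[
Kz^*_{s,T}=e_s-q^*_{s,T}-P^Tq^*_{s,F},\qquad P=-L_{FF}^{-1}L_{FT},
\]
and sum over $s$ \emph{before} $x$. Since $\sum_{s\in T}P_{xs}=1$ and $q^*_s(x)\le\lambda$, this yields $|T|-\Tr(KZ_{TT})\le\lambda|T|+\lambda|F|=\lambda N$.

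Your closing appeal to \Cref{lem:renormalization-pearson}, with $S_T$ in place of $S$, is correct. It makes explicit a step the paper's proof leaves implicit.
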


\begin{proof}
Fix a successful preprocessing state of \Cref{thm:l1-resistance-sampler}
and write $a_{st}=\widehat R_{s,t}$ for its query output.
Throughout this proof, $B_{s,t}$ and $E_{s,t}$ are defined on the parent graph.
By the variance bound of \Cref{lem:original-correction-variance} and the sample counts chosen in
the proof of \Cref{thm:l1-resistance-sampler}, we have
\[
 \mathbb E_{\mathrm{query}}
       \frac{(a_{st}-R_G(s,t))^2}{a_{st}}
 \le C\lambda B_{s,t}.
\]
For any vector $f\in\mathbb R^V$, by the definition of Laplacian matrix, we have
\[
 \sum_pw_p(f(s)-f(t))^2=f_T^TKf_T
 \le f^TLf=\sum_{\{s,t\}\in E(G)}w_{st}(f(s)-f(t))^2.
\]
More generally, we consider vector $u_s\in \mathbb R^d$ for every $s\in V$. Fix coordinate $i$, we define $f_i(s)=(u_s)(i)$. By the argument before, we have
\[
\sum_p w_p(u_s(i)-u_t(i))^2\le \sum_{\{s,t\}\in E(G)}w_{st}(u_s(i)-u_t(i))^2.
\]
Summing over all coordinates $i$, we have
\[
\sum_p w_p \|u_s-u_t\|_2^2\le \sum_{\{s,t\}\in E(G)}w_{st}\|u_s-u_t\|_2^2.
\]
Now we notice that
\begin{align*}
  B_{s,t}&=
E_{s,t}+\frac{\lambda}{\Lambda} \|\Psi_s-\Psi_t\|_2^2+\lambda R_G(s,t)\\
&=\|L^{\dagger/2}(\widetilde q_s-\widetilde q_t)\|_2^2
+\frac{\lambda}{\Lambda} \|\Psi_s-\Psi_t\|_2^2+
\lambda \|L^{\dagger/2}(e_s-e_t)\|_2^2
\end{align*}
is the sum of squared distance, therefore
\[
 \sum_pw_pB_{s,t}
 \le\sum_{\{s,t\}\in E(G)}w_{st}B_{s,t}=O(\lambda N).
\]
Since $R_K(s,t)=R_G(s,t)$ by \Cref{prop:schur-original-resistance}, we have
\[
 \sum_{p,\omega}
       \frac{(a_{p,\omega}-\ell_{p,\omega})^2}{a_{p,\omega}}
 =\sum_pw_p\mathbb E_{\mathrm{query}}
       \frac{(a_{st}-R_K(s,t))^2}{a_{st}}
 \le C\lambda\sum_pw_pB_{s,t}=O(\lambda^2N).
\]
This finishes the proof.
\end{proof}

The next goal is to show that we can sample Schur path proportional to estimated leverage score in $\widetilde O(1)$ time in expectation. The following lemma records the Schur-path proposal of
\cite{DPPR}, which generates a coarse sample of a labelled path of the implicit Schur
complement. Our algorithm for sampling Schur path is shown in \Cref{alg:schur-two-stage-rejection}. We use \Cref{lem:path-proposal} as a subroutine.

\begin{lemma}[\cite{DPPR}, Lemma 5.5]
\label{lem:path-proposal}
Let $F$ be a $1.1$-diagonally dominant set, $T=V\setminus F$,
$K=\Sc(L,T)$, and $|T|-1=\Theta(N)$. After $\widetilde O(M)$ preprocessing,
there is a sampler that returns a labelled
Schur path $p=(u_0,u_1,\ldots,u_k)\in\mathcal P_{T,F}$ together with its
unconditioned proposal probability $p_p^0$, such that
\[
 \ell_p:=w_pR_K(u_0,u_k)\le \rho (|T|-1)p_p^0
\]
for some absolute constant $\rho=O(1)$. The expected path length and sampling time
are $\widetilde O(1)$.
\end{lemma}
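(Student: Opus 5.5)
The plan is to sample a labelled Schur path by first picking one of its edges in $G$ and then extending it in both directions by random walks until both ends reach $T$. The key inequality will come from the triangle inequality for effective resistance along the path.

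\textbf{Preprocessing.} Using \Cref{lem:res-sketch}, I compute in $\widetilde O(M)$ time values $\widetilde R_e$ with $\tfrac12R_G(e)\le\widetilde R_e\le2R_G(e)$ for every $e\in E$. I then build an alias table for the distribution $q_e=w_e\widetilde R_e/U$, where $U=\sum_ew_e\widetilde R_e$. By Foster's theorem (\Cref{lem:foster}), $U\in[\tfrac12(N-1),2(N-1)]$, so $U=\Theta(N)=\Theta(|T|-1)$.

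\textbf{Sampling step.} I draw $e=(x,y)\sim q$ and orient it uniformly at random. From each endpoint lying in $F$, I run the weighted random walk, stepping from $u$ to $v$ with probability $w_{uv}/d_u$, until it first hits $T$. An endpoint already in $T$ stays put. Concatenating the two walks with $e$ gives a walk $p=(u_0,\ldots,u_k)$ whose endpoints are in $T$ and whose interior is in $F$. If $u_0=u_k$ the walk is not in $\mathcal P_{T,F}$; this only wastes a trial, so I simply resample.

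\textbf{Proposal probability.} Fix a labelled path $p$ and a position $i$, and write $e_i=(u_i,u_{i+1})$. The probability that the procedure produces $p$ with $e_i$ as the seed edge is
\[
 q_{e_i}\cdot\frac12\cdot\prod_{j\ne i}\frac{w_{u_ju_{j+1}}}{d_{(\text{interior endpoint of }e_j)}}
 =\frac{w_{e_i}\widetilde R_{e_i}}{U}\cdot\frac{w_p}{w_{e_i}}\cdot\frac12 .
\]
The reason is that every interior vertex of $p$ is left exactly once by one of the two walks, contributing one factor $1/d_{u_j}$, and every edge other than $e_i$ contributes its weight. The orientation factor works out consistently up to the constant $2$.

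Summing over the $k$ positions gives
\[
 p_p^0=\Theta\Big(\frac{w_p}{U}\sum_{i}\widetilde R_{e_i}\Big),
\]
and $p_p^0$ is computable in $O(k)$ time once $p$ is in hand. By \Cref{prop:schur-original-resistance} and the triangle inequality for the effective resistance metric,
\[
 R_K(u_0,u_k)=R_G(u_0,u_k)\le\sum_iR_G(e_i)\le2\sum_i\widetilde R_{e_i}.
\]
Therefore $\ell_p=w_pR_K(u_0,u_k)\le O(U)\,p_p^0\le\rho(|T|-1)\,p_p^0$, which is the claimed bound.

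\textbf{Running time.} I expect this to be the main obstacle to make rigorous. Each walk step from $u\in F$ enters $T$ with probability at least a constant. This follows from $1.1$-diagonal dominance of $L_{FF}$: the weight from $u$ into $T$ is at least roughly a $1/11$ fraction of $d_u$. Hence each walk has geometrically distributed length with $O(1)$ mean, and the total expected path length is $O(1)$, each step costing $\widetilde O(1)$ with per-vertex alias tables. I would also check that resampling after a loop ($u_0=u_k$) happens with probability bounded away from $1$, or instead account for it by noting that $p^0$ is the unconditioned probability, so rejections only cost a constant factor in expectation.
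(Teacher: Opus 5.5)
Your proposal is correct and follows essentially the same route as the cited \cite{DPPR} lemma. The paper does not reprove that lemma, but it reuses the same seed-edge-plus-two-walks construction, with $\widetilde B_e$ in place of $\widetilde R_e$, for the $C_{\rm raw}$ estimator in the proof of \Cref{lem:marked-sampling-normalizer}. There, as in your argument, summing over seed occurrences gives exactly $p_p^0=\frac{w_p}{U}\sum_i\widetilde R_{e_i}$ for the undirected path. The resistance triangle inequality together with \Cref{prop:schur-original-resistance} and $1.1$-diagonal dominance then finishes the proof.

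The loop issue you flag resolves itself. Combining $\sum_{p\in\mathcal P_{T,F}}\ell_p=|T|-1$ with your bound gives $\sum_{p\in\mathcal P_{T,F}}p_p^0\ge1/\rho$, so a single trial avoids $u_0=u_k$ with constant probability. Moreover, internal resampling would only rescale every $p_p^0$ by the same constant, which leaves the downstream rejection sampling exact.
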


\begin{algorithm}
\caption{sampling the Schur path}
\label{alg:schur-two-stage-rejection}
\KwIn{A preprocessed graph $G=(V,E,w)$ with the fixed sets $F,T$}
\KwOut{A labelled Schur path $p$}
$S\leftarrow |T|-1$; let $\rho$ be the constant in \Cref{lem:path-proposal}\;
\While{true}{
  Draw a path $p$ and its probability $p_p^0$ by \Cref{lem:path-proposal}\;
    Let $s,t$ be the endpoints of $p$ and compute its weight $w_p$\;
    Query $c_{st}\leftarrow R^G_{s,t}/16$ by resistance sketch (\Cref{lem:res-sketch})\;
    Draw independent $U_1,U_2\sim\operatorname{Unif}[0,1]$\;
    \tcp{Stage 1: cheap rejection.}
    \If{$U_1\le w_pc_{st}/(\rho S p_p^0)$}{
      \tcp{Stage 2: query the resistance and reject again.}
      Generate a query mark $\omega$ and obtain $a_{st}(\omega)\leftarrow\widehat R_{s,t}(\omega)$ by \Cref{thm:l1-resistance-sampler}\;
      \If{$U_2\le a_{st}(\omega)/(32c_{st})$}{
        \Return{$p$}\;
      }
    }
}
\end{algorithm}

\begin{lemma}
\label{lem:schur-two-stage-rejection}
After preprocessing, \Cref{alg:schur-two-stage-rejection} samples a labelled path $p$ with its mark $\omega$ with probability proportional to $a_{p,\omega}$ in $\widetilde O(1)$ expected time.
\end{lemma}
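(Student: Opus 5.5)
The plan is to compute the probability that one iteration of the while-loop returns a given marked path $(p,\omega)$, check that it is proportional to $a_{p,\omega}$, and then bound the expected number of iterations and the expected cost per iteration. Fix a successful preprocessing state, i.e.\ the high-probability events of \Cref{lem:res-sketch}, \Cref{thm:l1-resistance-sampler} and \Cref{lem:path-proposal}. On these events $R_{s,t}\le R^G_{s,t}\le 4R_{s,t}$, so $c_{st}=R^G_{s,t}/16\le R_{s,t}/4$.

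\textbf{Acceptance probability.} In one iteration, $p$ is proposed with probability $p_p^0$, passes Stage 1 with probability $w_pc_{st}/(\rho Sp_p^0)$, receives mark $\omega$ with probability $\nu_{st}(\omega)$, and passes Stage 2 with probability $a_{st}(\omega)/(32c_{st})$. Both ratios must be legitimate probabilities.
\begin{itemize}
\item For Stage 1, \Cref{lem:path-proposal} and \Cref{prop:schur-original-resistance} give $w_pc_{st}\le w_pR_K(s,t)/4=\ell_p/4\le \rho S p_p^0$.
\item For Stage 2, \Cref{thm:l1-resistance-sampler} gives $a_{st}(\omega)\le 8R_{s,t}\le 8R^G_{s,t}=128c_{st}$.
\end{itemize}
This bound exceeds $32c_{st}$, so it is the one delicate point. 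I would either replace the threshold by $a_{st}(\omega)/(128c_{st})$, or use the clipping $\widehat R\le 2R^G$ from the proof of \Cref{thm:l1-resistance-sampler}, which gives $a_{st}\le 32c_{st}$. The latter matches the algorithm as written. With this, the probability of returning $(p,\omega)$ in one iteration is
\[
p_p^0\cdot\frac{w_pc_{st}}{\rho Sp_p^0}\cdot\nu_{st}(\omega)\cdot\frac{a_{st}(\omega)}{32c_{st}}
=\frac{w_p\nu_{st}(\omega)a_{st}(\omega)}{32\rho S}
=\frac{a_{p,\omega}}{32\rho S}.
\]
This is proportional to $a_{p,\omega}$. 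Since iterations are independent, the returned sample, conditioned on termination, has law $a_{p,\omega}/A$.

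\textbf{Number of iterations.} The per-iteration success probability is $A/(32\rho S)$. Using $a_{st}\ge R_{s,t}/16$, we get $A\ge \sum_p\ell_p/16=S/16$ by \Cref{lem:schur-path-representation}. So success has probability $\Omega(1)$, and the expected number of iterations is $O(1)$.

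\textbf{Cost per iteration.} Proposal plus computing $w_p$ costs $\widetilde O(1)$ in expectation, since the path length is $\widetilde O(1)$ by \Cref{lem:path-proposal}. The sketch query costs $\widetilde O(1)$. The main obstacle is the Stage 2 resistance query, whose worst-case cost is $\widetilde O(\lambda^{-1})$, with mean $\widetilde O(1)$ only when the pair is drawn roughly proportional to leverage.

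The key observation is that, conditioned on reaching Stage 2, the pair $(s,t)$ is distributed proportional to $\sum_{p:\,s,t}w_pc_{st}$, a constant-factor proxy for $\ell_p$ on $K$. The expected query cost per iteration is therefore
\[
\sum_p p_p^0\,\frac{w_pc_{st}}{\rho Sp_p^0}\,\widetilde O(d_{st})\le \frac{1}{\rho S}\sum_p w_pR_{s,t}\,\widetilde O(d_{st}).
\]
This does not reduce to \Cref{thm:l1-resistance-sampler}'s guarantee for edges of $G$, because the pairs here are Schur edges. Instead I would redo its computation using $d_{st}\le 1+C(\Lambda+\log n)^2\widetilde B_{s,t}/(\lambda R^G_{s,t})$. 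The sum becomes
\[
O(1)+\frac{C(\Lambda+\log n)^2}{\lambda S}\sum_pw_pB_{s,t}.
\]
By the Dirichlet-comparison bound from the proof of \Cref{lem:schur-path-scores}, $\sum_pw_pB_{s,t}\le\sum_{e\in E(G)}w_eB_e=O(\lambda N)$. Since $S=\Theta(N)$, the expected query cost per iteration is $\widetilde O(1)$.

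Multiplying by the $O(1)$ expected iterations, using Wald's identity because the per-iteration costs are i.i.d., gives $\widetilde O(1)$ expected total time.
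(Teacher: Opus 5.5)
Your proof is correct and follows the paper's argument: the same one-iteration acceptance probability $a_{p,\omega}/(32\rho S)$, the same bound $A\ge S/16$ giving $\Omega(1)$ success probability, and the same observation that paths surviving Stage~1 are drawn proportionally to $w_pc_{st}$. Where you go beyond the paper is welcome. The paper just invokes the mean-query-time clause of \Cref{thm:l1-resistance-sampler}, which is stated only for edges of $G$; you re-derive it for Schur pairs via $d_{st}\le 1+C(\Lambda+\log n)^2\widetilde B_{s,t}/(\lambda R^G_{s,t})$ and the Dirichlet comparison $\sum_p w_pB_{s,t}\le\sum_{e}w_eB_e=O(\lambda N)$ from \Cref{lem:schur-path-scores}, and you correctly trace $a_{st}\le 32c_{st}$ to the clipping $\widehat R_{s,t}\le 2R^G_{s,t}$ rather than to the stated $8R_{s,t}$ bound.
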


\begin{proof}
We write
$R_{st}=R_G(s,t)=R_K(s,t)$ and $S=|T|-1$.
The resistance sketch and \Cref{thm:l1-resistance-sampler} guarantees
\[
 R_{st}/16\le c_{st}\le R_{st}/4,
 \qquad c_{st}\le a_{st}(\omega)\le32c_{st}.
\]
By \Cref{lem:path-proposal}, $w_pR_{st}\le\rho S p_p^0$.
Thus both acceptance probabilities are at most one.
The probability of
proposing and accepting a path $(p,\omega)$ is
\[
 p_p^0w_p\frac{c_{st}}{\rho S p_p^0}\nu_{st}(\omega)
       \frac{a_{st}(\omega)}{32c_{st}}
 =\frac{a_{p,\omega}}{32\rho S}.
\]

Conditioning on the success event, the proposed Schur path $p$ has probability proportional to $a_{p,\omega}$. Since $\sum_pw_pR_{st}=S$ and
$R_{st}/16\le a_{st}(\omega)\le8R_{st}$, we have
$S/16\le A\le8S$.
So the total acceptance probability is therefore
$A/(32\rho S)=\Theta(1)$.

For the running time, we notice that conditioning on the success event of the first stage rejection sampling, the proposed Schur path $p$ is porportional to the constant factor approximation of leverage scores. Therefore, the expected time for querying $\widehat R_{s,t}$ is $\widetilde O(1)$ by \Cref{thm:l1-resistance-sampler}.
\end{proof}

Therefore we prove the three ingredients for the proof of \Cref{lem:aggregate-schur-sparsification}. There is one additional wrinkle:  we need to assign the weight to each sampled path $p$ by \Cref{lem:aggregate-determinant-sampling}. The sampled edge weights require the normalizer $A$. However, we can not know $A$ directly. Similar to \Cref{lem:original-normalizer}, we use the following lemma to efficiently estimate $A$. We defer its proof in \Cref{sec:missing-proofs}.

\begin{lemma}
\label{lem:schur-path-normalizer}\label{lem:marked-sampling-normalizer}
For every $0<\alpha<1/10$, in
$\widetilde O(\lambda^2 N^2\alpha^{-1}+N\alpha^{-1/2})$ expected
time, we can construct an estimate $\overline A\ge (|T|-1)/32$,
independently of the determinant samples, such that for
$Z=(\overline A/A)^{|T|-1}$,
\[
 e^{-\alpha}\le\mathbb E [Z]\le e^{\alpha},
 \qquad \frac{\mathbb E [Z^2]}{(\mathbb E [Z])^2}\le e^{\alpha}.
\]
\end{lemma}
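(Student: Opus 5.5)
The plan is to estimate $A=\sum_{p,\omega}a_{p,\omega}$ by Monte Carlo averaging over rejection samples from \Cref{lem:path-proposal}, and then to convert additive control of $\overline A$ into control of the $(|T|-1)$-th power.

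\textbf{The estimator.} Write $S=|T|-1$. A single path $p$ is drawn from \Cref{lem:path-proposal} with probability $p_p^0$, together with a fresh mark $\omega$. Set
\[
X=\frac{w_p\,a_{st}(\omega)}{S\,p_p^0}.
\]
Then $\mathbb E X=\sum_{p,\omega}w_p\nu_{st}(\omega)a_{st}(\omega)/S=A/S$. By \Cref{lem:path-proposal} and $a_{st}\le 8R_{st}$, we get $0\le X\le 8\rho$, so $X$ is bounded.

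\textbf{The moments of $Z$.} Take $\overline A=S\cdot\max\{\frac1{k}\sum_{i=1}^kX_i,\,1/32\}$ for $k$ i.i.d.\ copies. Write $\overline A/A=1+\Delta$. The goal is to show
\[
\mathbb E(1+\Delta)^S=e^{\pm\alpha},\qquad \mathbb E(1+\Delta)^{2S}\le e^{\alpha}\bigl(\mathbb E(1+\Delta)^S\bigr)^2 .
\]
Using $(1+\Delta)^S=\exp(S\Delta-S\Delta^2/2\pm\cdots)$, a crude variance bound $\operatorname{Var}X=O(1)$ forces $k\gtrsim S^2/\alpha=N^2/\alpha$. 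That is too slow.

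\textbf{Splitting $X$.} To reach the stated cost, decompose $a_{st}=R_{st}+(a_{st}-R_{st})$. The mean part $\sum_p w_pR_{st}=S$ is known exactly by \Cref{lem:schur-path-representation}, so only the correction $Y=w_p(a_{st}-R_{st})/(Sp_p^0)$ needs estimating. Its second moment is
\[
\mathbb E Y^2\le \frac{1}{S^2}\sum_{p,\omega}\frac{w_p\nu(a_{st}-R_{st})^2}{p_p^0}\le \frac{\rho}{S}\sum_{p,\omega}\frac{w_p\nu\,(a-R)^2}{R}=O(\lambda^2),
\]
using $w_pR\le\rho Sp^0_p$ and \Cref{lem:schur-path-scores} (together with $a\asymp R$). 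So $\overline A=S+S\bar Y_k$ has $\operatorname{Var}(S\Delta)=O(S^2\lambda^2/k)$. Requiring this to be at most $\alpha$ gives $k=O(\lambda^2N^2/\alpha)$.

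\textbf{The remaining two issues.} First, subtracting $R_{st}$ requires knowing $R_{st}$ exactly, and we do not. I would try two remedies. One is to replace $R_{st}$ by the cheap sketch $c_{st}$, whose total is computed separately through a second estimator. The cleaner one is to write $\overline A$ as a product of independent ratio estimators, fixing the exact-mean term via Foster on the virtual graph. Second, the bias term $S\,\mathbb E\Delta^2$ must be small; this is handled by the same variance bound. Cubic and higher terms are controlled because $|S\Delta|$ has sub-Gaussian tails: $Y$ is bounded by $O(1)$, so Bernstein's inequality applies with $k\ge S/\sqrt\alpha$. This is the source of the additional $N\alpha^{-1/2}$ term. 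For the second moment, independence of $\overline A$ from the determinant samples lets us compute $\mathbb E Z^2/(\mathbb E Z)^2\approx \exp(S^2\operatorname{Var}\Delta)\le e^\alpha$.

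\textbf{Running time.} Each sample costs $\widetilde O(1)$ expected time: the path proposal is cheap, and the resistance query is cheap because the edge is drawn proportionally to approximate leverage, as in \Cref{lem:schur-two-stage-rejection}. The total is therefore $\widetilde O(\lambda^2N^2\alpha^{-1}+N\alpha^{-1/2})$. The clamp at $1/32$ changes $Z$ only on an event of probability $n^{-O(1)}$, which is negligible.

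\textbf{Main obstacle.} The hard step is the variance reduction. I must centre the estimator at a quantity whose sum is exactly known while preserving unbiasedness. Without that, the $N^2/\alpha$ cost cannot be avoided.
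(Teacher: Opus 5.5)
\textbf{There is a genuine gap.} You correctly see that plain Monte Carlo on $X$ needs $N^2/\alpha$ samples, and that the estimator must be centred at a quantity whose aggregate is known. But you never produce such a quantity, and that is the whole content of the lemma. Centring at $R_{st}$ cannot be implemented, and neither remedy works:
\begin{enumerate}
\item The sketch $c_{st}$ is only a constant-factor approximation of $R_{st}$. So $a_{st}-c_{st}$ is of order $R_{st}$, and $w_p(a_{st}-c_{st})/(Sp_p^0)$ still has second moment $\Theta(1)$. There is no variance reduction, whatever you do about $\sum_pw_pc_{st}$, and you are back to $N^2/\alpha$ samples.
\item Foster's identity on the virtual graph gives only the total $\sum_{p,\omega}\ell_{p,\omega}=S$, which you already used. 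It gives no per-sample access to $R_{st}$, so recasting $\overline A$ as a product of ratio estimators does not yield an unbiased low-variance per-sample estimate of $a_{st}-R_{st}$. Indeed, for a fixed index the query output $W_{st}$ is not even unbiased for $R_{st}$.
\end{enumerate}
Your conversion from additive control of $\overline A$ to the moments of $Z$ (Bernstein, the clamp) is in line with the paper; the problem is entirely upstream.

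\textbf{The missing idea} is to use the $\ell_1$-regularized resistance $\widehat R^\lambda_{st}=b_{st}^T(\widetilde z_s-\widetilde z_t)$ as the control variate. It costs $O(1)$ per path. Block elimination gives $K\widetilde z_{s,T}=e_s-\bar q_{s,T}$ with $\bar q_{s,T}=\widetilde q_{s,T}+P^T\widetilde q_{s,F}$. The trace identity then gives $\sum_pw_p\widehat R^\lambda_{st}=|T|-\Theta$, where $\Theta=\sum_{s\in T}\widetilde q_s(s)+\sum_{x\in F,s\in T}P_{xs}\widetilde q_s(x)$. The first sum is known. The second has entries $\widetilde q_s(x)=O(\lambda)$, so drawing $x\in F$ uniformly and walking to $T$ estimates it with variance $O(\lambda^2N^2)$.

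The remainder $A-(|T|-\Theta)$ splits into two parts:
\begin{enumerate}
\item $C_{\rm clip}=\sum_pw_p\mathbb E(a_{st}-W_{st})$. This is estimated essentially as you propose for $a-R$: the gated leverage-proportional proposal together with the Pearson bound.
\item $C_{\rm raw}=\sum_pw_p\mathbb E(W_{st}-\widehat R^\lambda_{st})$. This must be importance-sampled proportionally to $B$. A seed edge is drawn with probability proportional to $w_e\widetilde B_e$ and extended by random walks into a Schur path. The inequality $B_{st}\le|p|\sum_{e\in p}B_e$ then bounds each trial by $\widetilde O(\lambda N)$.
\end{enumerate}
Sampling $C_{\rm raw}$ under the leverage-proportional proposal would not suffice. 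The only available bound on $\sum_pw_p\mathbb E(W_{st}-\widehat R^\lambda_{st})^2/R_{st}$ is $\widetilde O(\lambda N)$, since $B_{st}$ may be comparable to $R_{st}$ on a few paths. That route costs $\widetilde O(\lambda N^2/\alpha)$ samples rather than $\widetilde O(\lambda^2N^2/\alpha)$.
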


We put things together to prove \Cref{lem:aggregate-schur-sparsification}.

\begin{proof}[Proof of \Cref{lem:aggregate-schur-sparsification}]
Apply \Cref{lem:schur-path-scores} to obtain normalized path-copy scores with
Pearson error $O(\lambda^2 N)$.  Invoke
\Cref{alg:schur-two-stage-rejection} to sample from $r_{p,\omega}/S_T$, and independently
invoke \Cref{lem:schur-path-normalizer} to estimate the normalizer.  Take
$t=\widetilde O(N^{3/2}\alpha^{-1/2}+\lambda^2 N^2\alpha^{-1})$ samples, \Cref{lem:aggregate-determinant-sampling,lem:marked-sampling-normalizer} guarantees that the variance is at most $\exp (O(\alpha))$. Each sample costs $\widetilde O(1)$ expected time by \Cref{lem:schur-two-stage-rejection}, together with the preprocessing runtime $\widetilde O(M+N\lambda^{-2})$, the total expected time is $\widetilde O(M+N\lambda^{-2}+N^{3/2}\alpha^{-1/2}+\lambda^2 N^2\alpha^{-1})$.
\end{proof}

\section{Missing Proofs.}\label{sec:missing-proofs}

In this section, we provide missing proofs for several technical lemmas.

\begin{proof}[Proof of \Cref{lem:aggregate-determinant-sampling}]
  We first prove $\mathbb{E}[\Tree(H)]=\Tree(G)$. Let $X_e$ be the number of copies of $e$ among the $t$ samples.  Each copy
  of $e$ has weight $w'_e$, so the total weight of $e$ in $H$ is $X_e w'_e$.
  Therefore
  \[
   \Tree(H)=\sum_{T\text{ spanning}}\ \prod_{e\in T}(X_e w'_e),
  \]
  where the sum ranges over all spanning trees of $G$.
  By linearity of expectation,
  \[
   \mathbb E[\Tree(H)]
   =\sum_{T\text{ spanning}}
     \Bigl(\prod_{e\in T}w'_e\Bigr)
     \mathbb E\Bigl[\prod_{e\in T}X_e\Bigr].
  \]
  Fix a spanning tree $T$.  Expanding $\prod_{e\in T}X_e$ amounts to choosing,
  for each edge of $T$, one of the $t$ sample positions. Only choices with $S$
  pairwise distinct positions contribute, so there are $(t)_S$ such ordered
  choices.  Each choice has probability contribution $\prod_{e\in T}p_e$, so
  \[
   \mathbb E\Bigl[\prod_{e\in T}X_e\Bigr]
   =(t)_S\prod_{e\in T}p_e.
  \]
  By our definition of $w'_e$, we have that
  \[
   \mathbb E\Bigl[\prod_{e\in T}(X_e w'_e)\Bigr]
   =\Bigl(\prod_{e\in T}w_e\Bigr)
     \frac{S^S}{(t)_S\prod_{e\in T}r_e}
     \cdot (t)_S \prod_{e\in T}\frac{r_e}{S}
   =\prod_{e\in T}w_e.
  \]
  Summing over all spanning trees $T$ yields $\mathbb E[\Tree(H)]=\Tree(G)$.
  
  For the second moment, let $\mu$ be the weighted spanning-tree probability
  measure of $G$, so $\mu(T)\propto\prod_{e\in T}w_e$ and
  $\mu(F\subseteq T):=\sum_{T:\,F\subseteq T}\mu(T)$.
  Squaring the display above and dividing by $\Tree(G)^2$ gives
  \[
   \frac{\mathbb E[\Tree(H)^2]}{\Tree(G)^2}
   =\sum_{T_1,T_2}
     \mathbb E\Bigl[\prod_{e\in T_1}(X_e w'_e)
                    \prod_{e\in T_2}(X_e w'_e)\Bigr]
     \Big/\Tree(G)^2.
  \]
  Fix an ordered pair of spanning trees $(T_1,T_2)$.  On every edge
  $e\in T_1\cap T_2$ we have a factor $X_e^2$, which we expand as
  $X_e^2=(X_e)_2+(X_e)_1$, where $(X_e)_1=X_e$ is linear term and
  $(X_e)_2=X_e(X_e-1)$ is quadratic term, counting the number of ordered
  pairs of distinct sample positions among the $t$ draws that both land on
  $e$.
  Choosing the linear term on a subset $F\subseteq T_1\cap T_2$ and the
  quadratic term on the remaining common edges yields a sum over such $F$.
  Edges in exactly one of $T_1,T_2$ contribute a single factor $(X_e)_1$.
  Inserting this expansion into the previous equation gives
  \begin{align*}
   \frac{\mathbb E[\Tree(H)^2]}{\Tree(G)^2}
   &=\sum_{T_1,T_2}\ \sum_{F\subseteq T_1\cap T_2}
     \frac{1}{\Tree(G)^2}\,
     \mathbb E\Bigl[
     \prod_{e\in F}(X_e)_1(w'_e)^2
     \prod_{e\in(T_1\cap T_2)\setminus F}(X_e)_2(w'_e)^2 \\
   &\qquad\qquad\times
     \prod_{e\in T_1\setminus T_2}(X_e)_1 w'_e
     \prod_{e\in T_2\setminus T_1}(X_e)_1 w'_e
     \Bigr].
  \end{align*}
  Thus each term in the expansion is a product of falling factorials
  $(X_e)_{j_e}$ with $j_e\in\{1,2\}$.  The multinomial identity
  \[
   \mathbb E\Bigl[\prod_e(X_e)_{j_e}\Bigr]
   =(t)_{\sum_e j_e}\prod_e p_e^{j_e}
  \]
  applies to every such term.  For a fixed $F$, the total exponent is
  $\sum_e j_e=2S-|F|$, because the two trees contribute $2S$ linear slots
  and each edge of $F$ replaces a quadratic slot by a linear one.  The
  sampled-copy weights contribute a factor $(t)_S^{-2}$ from the two trees,
  so the sampling expectation carries the prefactor
  $(t)_{2S-|F|}/(t)_S^2$.  Moreover, every $e\in F$ leaves one extra factor
  $p_e^{-1}$ after the weight normalization, whereas the tree weights over
  $(T_1,T_2)$ sum to $\mu(F\subseteq T)^2$.  Collecting these contributions
  over all pairs $(T_1,T_2)$ and all subsets $F\subseteq T_1\cap T_2$ gives
  \[
   \frac{\mathbb E[\Tree(H)^2]}{\Tree(G)^2}
   =\sum_{\substack{F\subseteq E\\ |F|\le S}}
     \frac{(t)_{2S-|F|}}{(t)_S^2}
     \left(\prod_{e\in F}\frac1{p_e}\right)
     \mu(F\subseteq T)^2.
  \]
  Weighted spanning trees are negatively associated~\cite{BurtonPemantle,DPPR}, so
  $\mu(F\subseteq T)\le\prod_{e\in F}\ell_e$. Substituting into the collision expansion and grouping by $k=|F|$ gives
  \[
   \frac{\mathbb E[\Tree(H)^2]}{\Tree(G)^2}
   \le\sum_{k=0}^S
     \frac{(t)_{2S-k}}{(t)_S^2}
     \sum_{\substack{F\subseteq E\\ |F|=k}}
     \prod_{e\in F}\frac{\ell_e^2}{p_e}.
  \]
  The inner sum is the elementary symmetric polynomial of degree $k$ in the
  numbers $\ell_e^2/p_e$, By AM-GM inequality, it holds that
  \[
   \sum_{\substack{F\subseteq E\\ |F|=k}}
   \prod_{e\in F}\frac{\ell_e^2}{p_e}
   \le\frac1{k!}\left(\sum_e\frac{\ell_e^2}{p_e}\right)^k
  \]
  As a result, we have that
  \[
   \frac{\mathbb E[\Tree(H)^2]}{\Tree(G)^2}
   \le\sum_{k=0}^S
   \frac{(t)_{2S-k}}{(t)_S^2k!}
   \left(\sum_e\frac{\ell_e^2}{p_e}\right)^k.
  \]
  The Pearson identity is
  $\sum_e\ell_e^2/p_e=S\sum_e\ell_e^2/r_e
  =S(S+\Xi(r\Vert\ell))$.
  Put $d=t-2S+1$.  Since $t\ge4S$,
  $(t)_{2S-k}/(t)_{2S}\le d^{-k}$. So we have
  \[
      \frac{\mathbb E[\Tree(H)^2]}{\Tree(G)^2}\le
      \frac{(t)_{2S}}{(t)_S^2}\sum_{k=0}^S{\frac{1}{k!}\left(\frac{S(S+\Xi(r\Vert\ell)))}{d}\right)^k}\le
      \frac{(t)_{2S}}{(t)_S^2}\exp\left(\frac{S(S+\Xi(r\Vert\ell))}{d}\right)
  \]

  Next, by the fact that
  \[
  \log\left(\frac{(t)_{2S}}{(t)_S^2}\right)=\sum_{j=0}^{S-1}\log\left(1-\frac{S}{t-j}\right)\le
  -\sum_{j=0}^{S-1}\frac{S}{t-j}\le-S^2/t,
  \]
  we have
  \[
   \log\frac{\mathbb E[\Tree(H)^2]}{\Tree(G)^2}
   \le \frac{S(S+\Xi(r\Vert\ell))}{d}-\frac{S^2}{t}=O\left(\frac{S\Xi(r\Vert\ell)}{t}+\frac{S^3}{t^2}\right).
  \]
  This proves the second-moment bound.
  \end{proof}

  \begin{proof}[Proof of \Cref{lem:original-normalizer}]
    Write $\widehat R_e^\lambda=b_e^T(\widetilde z_s-\widetilde z_t)$ and
    $W_e=\widehat R^{(0)}_{s,t}$, where $e=\{s,t\}$.
    Set $c_e=R^G_e/16$ for the resistance sketch $R^G_e$, so that
    $R_e/16\le c_e\le R_e/4$ and
    $\widehat R_e=\min\{32c_e,\max\{c_e,W_e\}\}$.
    Using the estimates $\widetilde B_e=\Theta(B_e)$ constructed in the
    proof of \Cref{thm:l1-resistance-sampler}, scan the original edges and
    build alias tables for
    \[
     p_e^0=\frac{w_ec_e}{C_0},\quad C_0=\sum_e w_ec_e=\Theta(S),
     \qquad
     p_E(e)=\frac{w_e\widetilde B_e}{U},\quad
     U=\sum_e w_e\widetilde B_e=O(\lambda n).
    \]
    This costs $\widetilde O(m)$ time, by
    \Cref{lem:original-budget-mass}. The same proof of
    \Cref{thm:l1-resistance-sampler} gives
    \[
     \sum_e w_e\frac{\mathbb E(W_e-R_e)^2}{R_e}=O(\lambda^2n).
    \]
    The trace identity
    $\sum_e w_e\widehat R_e^\lambda=\operatorname{Tr}(L[\widetilde z_s]_s)
    =n-\sum_s\widetilde q_s(s)$ gives the decomposition
    \[
     A=n-\sum_s\widetilde q_s(s)+C_{\rm raw}+C_{\rm clip},\qquad
     C_{\rm raw}=\sum_e w_e\mathbb E(W_e-\widehat R_e^\lambda),\quad
     C_{\rm clip}=\sum_e w_e\mathbb E(\widehat R_e-W_e).
    \]
    We estimate $C_{\rm raw}$ and $C_{\rm clip}$ separately to obtain the estimation of $A$. 
    
    \textbf{Estimating $C_{\rm raw}$:} For $C_{\rm raw}$, draw $e\sim p_E$, choose
    $J$ uniformly from $\{1,\ldots,d_e\}$, and independently generate one sample $X_1$ from \Cref{alg:original-slack-query}. Put the estimator $Y_{\rm raw}$ as
    \[
     D_e=(\mathcal T_{J,s}-\mathcal T_{J,t})^2+X_1,
     \qquad Y_{\rm raw}=\frac{U D_e}{\widetilde B_e}.
    \]
    Conditional on the index, $\mathbb ED_e=\mathbb E(W_e-\widehat R_e^\lambda)$. Thus $\mathbb EY_{\rm raw}=C_{\rm raw}$,
    and one trial takes $\widetilde O(1)$ time, without performing $d_e$
    queries. \Cref{lem:original-correction-variance} give
    $(\mathcal T_{j,s}-\mathcal T_{j,t})^2\le\widetilde O(B_e)$ with probability $1-n^{-O(1)}$. The pointwise bound in the proof of
    \Cref{lem:original-slack-unbiased} gives
    $|X_1|\le C\lambda(\|\Psi_s-\Psi_t\|_2^2+R_e)
    \le \widetilde O(B_e)$. Consequently
    \[
     |Y_{\rm raw}|=\widetilde O(\lambda n),\qquad
     \operatorname{Var}Y_{\rm raw}=\widetilde O(\lambda^2n^2).
    \]
    
    \textbf{Estimating $C_{\rm clip}$:} For $C_{\rm clip}$, independently draw $e\sim p^0$, execute the
    complete query to obtain $W_e,\widehat R_e$, and return the estimator
    \[
     Y_{\rm clip}=\frac{C_0}{c_e}(\widehat R_e-W_e).
    \]
    This is unbiased. The expected time cost is
    $\widetilde O(1)$ by \Cref{thm:l1-resistance-sampler}, since we query from constant approximation of leverage score.
    The preceding pointwise bounds also give $|W_e|=\widetilde O(R_e)$.
    Therefore $|Y_{\rm clip}|=\widetilde O(n)$ and

\begin{align*}
    \mathbb EY_{\rm clip}^2&=\sum_e \frac{w_ec_e}{C_0}\frac{C_0^2}{c_e^2}(\mathbb E(\widehat R_e-W_e)^2)\\
    &=C_0\sum_e \frac{w_e}{c_e}(\mathbb E(\widehat R_e-W_e)^2)\\
    &\le 16C_0\sum_e w_e\frac{\mathbb E(\widehat R_e-W_e)^2}{R_e}=O(\lambda^2n^2).
\end{align*}

    Average $h=\widetilde O(\lambda^2n^2/\beta+n/\sqrt\beta)$ independent
    trials of each type, using independent randomness for the two types,
    and add the known term $n-\sum_s\widetilde q_s(s)$ to obtain
    $\widehat A$. For $0<\beta<1/10$, bounded-variable Bernstein \cite{BoucheronLugosiMassart} gives
    \[
     \mathbb E\widehat A=A,\qquad
     \operatorname{Var}(\widehat A-A)=O(\beta),\qquad
     \log\mathbb E e^{\theta(\widehat A-A)}=O(\theta^2\beta).
    \]
     Set $\overline A=\max\{S/32,\widehat A\}$.

    It remains to pass from $\widehat A$ to $Z=(\overline A/A)^S$.
    Since $S/16\le A\le8S$, Chebyshev's inequality gives
    \[
     \Pr[\widehat A<S/32]
     \le\frac{\operatorname{Var}(\widehat A)}{(A-S/32)^2}
     =O(\beta/S^2).
    \]
    For $j\in\{1,2\}$, on the event $\widehat A\ge S/32$,
    the inequality $\log x\le x-1$ gives
    \[
     Z^j=\left(\frac{\widehat A}{A}\right)^{jS}
     \le\exp\!\left(\frac{jS}{A}(\widehat A-A)\right).
    \]
    The coefficient $jS/A\le32$, so the preceding moment bound applies.
    On the complementary event, $Z^j=(S/(32A))^{jS}\le2^{-jS}$.
    Consequently,
    \[
     \mathbb EZ^j
     \le e^{O(\beta)}+2^{-jS}O(\beta/S^2)
     \le e^{O(\beta)},\qquad j=1,2.
    \]
    On the other hand, $\overline A\ge\widehat A$ implies
    $\mathbb E\overline A\ge A$, and Jensen's inequality yields
    \(
     \mathbb EZ\ge\left(\frac{\mathbb E\overline A}{A}\right)^S\ge1.
    \)
    Taking $\beta$ a sufficiently small constant multiple of $\alpha$,
    we obtain
    \[
     e^{-\alpha}\le\mathbb EZ\le e^\alpha,\qquad
     \frac{\mathbb EZ^2}{(\mathbb EZ)^2}\le e^\alpha.
    \]
    This proves the lemma.
    \end{proof}

\begin{proof}[Proof of \Cref{lem:marked-sampling-normalizer}]
  Put $S=|T|-1$, $W_{st}=\widehat R^{(0)}_{s,t}$,
  $a_{st}=\widehat R_{s,t}$, and
  $\widehat R^\lambda_{st}=b_{st}^T(\widetilde z_s-\widetilde z_t)$. They are the queries on parent graphs.
  
  Let $\widetilde q_s=e_s-L\widetilde z_s$ and
  $P=-L_{FF}^{-1}L_{FT}$.
  Block elimination gives
  \[
   K\widetilde z_{s,T}=e_s-\bar q_{s,T},
   \qquad
   \bar q_{s,T}=\widetilde q_{s,T}+P^T\widetilde q_{s,F}.
  \]
  Writing $\Theta=\sum_{s\in T}\bar q_{s,T}(s)$, the trace calculation
  of \Cref{lem:original-normalizer} therefore becomes
  \[
   \begin{split}
   A&=|T|-\Theta+C_{\rm raw}+C_{\rm clip},\\
   C_{\rm raw}&=\sum_pw_p\mathbb E(W_{st}-\widehat R^\lambda_{st}),
   \qquad
   C_{\rm clip}=\sum_pw_p\mathbb E(a_{st}-W_{st}).
   \end{split}
  \]
  
  \textbf{Estimating $\Theta$.}
  In the original graph, the trace $\sum_s q_s(s)$ can be directly computed. However, here we have
  \[
   \Theta=\sum_{s\in T}\widetilde q_s(s)
         +\sum_{x\in F,s\in T}P_{xs}\widetilde q_s(x).
  \]
  The first sum is known, but directly compute the second term is expensive. So we need to estimate the second term. We choose $X$ uniformly from $F$
  and run a random walk until its first terminal $S_X\in T$.
  The distribution of $S_X$ is $P_{X,\cdot}$ by the definition of $P$.
  Next we use the residual point estimate $Y_{S_X}(X)$ from
  \Cref{alg:original-slack-query}. Specifically, for a fixed sample, let $x=X$, $s=S_X$. If $x\in A_s$, we directly return $Y_s(x)=\widetilde q_s(x)$; otherwise, we use the heavy/light estimate as in \Cref{alg:original-slack-query}:
  \[
  Y_s(x)=\sum_{a\in A_s,w_{xa}\widetilde z_s(a)>\lambda^2} w_{xa} \widetilde z_s(a)+|A_s|w_{xa}\widetilde z_s(a)\one_{w_{xa}\widetilde z_s(a)\le\lambda^2}(x),
  \]
  where $a\in A_s$ is chosen uniformly at random. We use $\sum_{s\in T}\widetilde q_s(s)+|F|Y_{S_x}(X)$ as the estimation of $\Theta$. By \Cref{lem:original-energy-unbiased}, We have $\mathbb EY_s(x)=\widetilde q_s(x)$ and
  $0\le Y_s(x)\le3\lambda$ for $x\ne s$.
  Thus $|F|Y_{S_X}(X)$ is unbiased for the second sum, and has variance $O(\lambda^2N^2)$. The expected time for simulating random walk is $\widetilde O(1)$ by our choice of $F$ in \Cref{lem:dd-set-schur}.
  
  \textbf{Estimating $C_{\rm raw}$.}
  Choose a seed edge with
  \[
   U=\sum_{e\in E(G)}w_e\widetilde B_e=O(\lambda N),
   \qquad \Pr[e]=w_e\widetilde B_e/U.
  \]
  We extend both endpoints of the seed edge by
  independent random walks stopped on $T$.
  If their terminal endpoints coincide, return zero.
  For a labelled path $p$, summing over its possible seed
  occurrences gives its proposed probability $q_E(p)$ with
  \[
   q_E(p)=\frac{w_pD_p}{U},
   \qquad D_p=\sum_{e\text{ occurrence in }p}\widetilde B_e,
  \]
  where repeated occurrences are counted separately.
  Generate the single sample $D_{st}$ in the proof of
  \Cref{lem:original-normalizer} (i.e., the estimator of $C_{\rm raw}$ on the original graph), and return the estimation of $C_{\rm raw}$:
  \[
   Y_{\rm raw}=\frac{UD_{st}}{D_p}.
  \]
  By \Cref{lem:original-normalizer}, we have $\mathbb ED_{st}=\mathbb E(W_{st}-\widehat R^\lambda_{st})$,
  so $\mathbb EY_{\rm raw}=C_{\rm raw}$. Moreover, $|D_{st}|\le \widetilde O(B_{st})$. By
  \[
   B_{st}\le |p|\sum_{e\text{ occurrence in }p}B_e
            \le O(|p|D_p),
  \]
  We have $|Y_{\rm raw}|\le \widetilde O(U|p|)=\widetilde O(\lambda N)$ with high probability, since the path $p$ has length $\widetilde O(1)$ with high probability. So the variance of $Y_{\rm raw}$ is $\widetilde O(\lambda^2N^2)$.

  \textbf{Estimating $C_{\rm clip}$.}
  Draw $p$ from \Cref{lem:path-proposal}.
  Apply only the first-stage rejection gate of
  \Cref{alg:schur-two-stage-rejection}.
  Generate $W_{st}$ and $a_{st}$ from the resistance query by \Cref{thm:l1-resistance-sampler}, and return
  \[
   Y_{\rm clip}=\frac{\rho S}{c_{st}}(a_{st}-W_{st}).
  \]
  The calculation of \Cref{lem:original-normalizer} gives
  $\mathbb EY_{\rm clip}=C_{\rm clip}$ and
  \[
   \mathbb EY_{\rm clip}^2
   =\rho S\sum_p\frac{w_p}{c_{st}}\mathbb E(a_{st}-W_{st})^2
   \le16\rho S\sum_pw_p
         \frac{\mathbb E(W_{st}-R_K(s,t))^2}{R_K(s,t)}
   =O(\lambda^2N^2).
  \]
  The expected sampling cost is $\widetilde O(1)$ by the argument of \Cref{lem:schur-two-stage-rejection}.
  
  After bounding the second moment of the three estimators, the remaining analysis is identical to the proof of \Cref{lem:original-normalizer}.
  \end{proof}
  
\appendix

\bibliographystyle{alpha}
\bibliography{ref}

\end{document}